\newcommand{\ket}[1]{| #1 \rangle}
\newcommand{\bra}[1]{\langle #1 |}
\newcommand{\braket}[2]{\langle #1|#2\rangle}
\newcommand{\ketbra}[2]{|#1\rangle\!\langle#2|}
\newcommand{\tr}{{\mathrm tr}}
\newcommand{\pseudo}{\tilde{\mathop{\mathbb{E}}}_{\mathrm{S}}}
\newcommand{\expect}{\mathop{\mathbb{E}}}
\newcommand{\p}{\mathsf{p}}
\newcommand{\qmodel}{\mathrm{Q}}
\newcommand{\cmodel}{\mathrm{C}}
\newcommand{\dec}{\mathsf{Dec}}
\newcommand{\enc}{\mathsf{Enc}}
\newcommand{\gen}{\mathsf{Gen}}
\newcommand{\eval}{\mathsf{Eval}}
\newcommand{\sk}{\mathsf{sk}}
\newcommand{\negl}{\mathrm{negl}}
\numberwithin{equation}{section}
\theoremstyle{definition}
\newtheorem{theorem}{Theorem}[section]
\newtheorem{lemma}[theorem]{Lemma}
\newtheorem{definition}[theorem]{Definition}
\newtheorem{remark}[theorem]{Remark}
\newtheorem{claim}{Claim}
\newcommand{\R}{\mathbb{R}}
\newcommand{\C}{\mathbb{C}}
\newcommand{\N}{\mathbb{N}}
\newcommand{\Id}{\mathbbm{1}}
\newcommand{\mcA}{\mathcal{A}}
\newcommand{\mcB}{\mathcal{B}}
\newcommand{\mcC}{\mathcal{C}}
\newcommand{\mcD}{\mathcal{D}}
\newcommand{\mcF}{\mathcal{F}}
\newcommand{\mcH}{\mathcal{H}}
\newcommand{\mcI}{\mathcal{I}}
\newcommand{\mcL}{\mathcal{L}}
\newcommand{\mcN}{\mathcal{N}}
\newcommand{\mcP}{\mathcal{P}}
\newcommand{\mcS}{\mathcal{S}}
\newcommand{\mcX}{\mathcal{X}}
\newcommand{\mcY}{\mathcal{Y}}
\newcommand{\mfA}{\mathfrak{A}}
\newcommand{\mpC}{\mathsf{C}}
\title{Self-testing in the compiled setting via tilted-CHSH inequalities}
 \author[1]{Arthur Mehta\thanks{amehta2@uottawa.ca}}
 \author[1]{Connor Paddock\thanks{cpaulpad@uottawa.ca}}
 \author[2,3,4]{Lewis Wooltorton\thanks{lewis.wooltorton@ens-lyon.fr}}
\affil[1]{Department of Mathematics and Statistics, University of Ottawa, Canada}
\affil[2]{Department of Mathematics, University of York, UK}
\affil[3]{Quantum Engineering Centre for Doctoral Training, H. H. Wills Physics Laboratory and Department of Electrical \& Electronic Engineering, University of Bristol, UK}
\affil[4]{Inria, ENS de Lyon, LIP, 46 Allee d’Italie, 69364 Lyon Cedex 07, France}
\date{}
\begin{document}
\maketitle

\begin{abstract}
This work investigates the family of extended tilted-CHSH inequalities in the single-prover cryptographic compiled setting. In particular, we show that a quantum polynomial-time prover can violate these Bell inequalities by at most negligibly more than the violation achieved by two non-communicating quantum provers. To obtain this result, we extend a sum-of-squares technique to monomials with arbitrarily high degree in the Bob operators and degree at most one in the Alice operators. We also introduce a notion of partial self-testing for the compiled setting, which resembles a weaker form of self-testing in the bipartite setting. As opposed to certifying the full model, partial self-testing attempts to certify the reduced states and measurements on separate subsystems. In the compiled setting, this is akin to the states after the first round of interaction and measurements made on that state. Lastly, we show that the extended tilted-CHSH inequalities satisfy this notion of a compiled self-test.
\end{abstract}

\section{Introduction}

In a bipartite Bell scenario, two non-communicating provers receive inputs $x$ and $y$ and reply with outputs $a$ and $b$ to a verifier. The collection of probabilities of observing outcomes $(a,b)$ given $(x,y)$ determines a correlation $\mathsf{p}=\{p(a,b|x,y)\}$. Bell's celebrated theorem implies that if the provers are permitted to share an entangled quantum state and make local quantum measurements, called a bipartite (quantum) model, then certain correlations have no realization by a classical (or local hidden-variable) model \cite{Bell64}. The distinction between quantum and classical correlations is often explored through Bell inequalities. A Bell inequality is a linear inequality on the set of correlations which is satisfied by all classical correlations. Hence, these inequalities can be violated by certain models using quantum entanglement, realizing correlations that are not classical. The quantum value of a Bell inequality refers to the largest violation achievable by a bipartite (quantum) model. A prominent example is the Clauser-Horne-Shimony-Holt (CHSH) inequality, where the classical bound is $2$, but the quantum value is $2\sqrt{2}$ \cite{CHSH}.

Due to their ability to witness these non-classical effects, Bell inequality violations play a major role in areas like device-independent cryptography~\cite{ABGMPS,ColbeckThesis,PAMBMMOHLMM,PABGMS,Sca12}, protocols for verifiable delegated quantum computation~\cite{RUV13,Gri19,coladangelo2024verifier}, and in the study of multiprover interactive proofs (MIPs) and the variant MIP$^*$ with entangled provers \cite{CHTW04}, also called nonlocal games. Many of the key applications of Bell inequalities rely on a remarkable property known as \emph{self-testing} \cite{mayers2004self,MYS12,YangSelfTest,SupicSelfTest}. Informally, a Bell inequality is a self-test for an ideal bipartite (quantum) model $\qmodel$ if there exist local isometries which transform any employed bipartite model $\qmodel'$ achieving maximum Bell violation into the ideal model $\qmodel$. It is well-known that the CHSH inequality is a self-test for the bipartite model employing a maximally entangled state on two qubits, along with the Pauli $\sigma_x$ and $\sigma_z$ measurements, among others \cite{mayers2004self}. Another prominent example is the family of tilted-CHSH inequalities~\cite{AcinRandomnessNonlocality,YangSelfTest,BampsPironio}, which self-test partially entangled two-qubit states, and were integral in the work of Coladangelo, Goh, and Scarani who employed them as part of a protocol to self-test any pure bipartite entangled state \cite{Coladangelo_2017}. 

Despite the enormous success of self-testing, a practical drawback is the requirement of multiple non-communicating quantum provers. Recently, a number of cryptographic approaches have been proposed that replace the non-communication assumption with computational assumptions \cite{KLVY23, MV21, Kahanamoku_Meyer_2022}. This makes the setting more practical by having a single quantum prover, rather than multiple. One new and prominent approach is the Kalai-Lombardi-Vaikuntanathan-Yang (KLVY) compilation procedure introduced in \cite{KLVY23}, which transforms a $2$-prover $1$-round Bell scenario into a $1$-prover $2$-round scenario with a single computationally bounded prover. The core ingredient in the KLVY compilation procedure is quantum homomorphic encryption (QHE), which emulates, to a certain extent, the non-communication between the rounds of interaction. In the compiled game, the inputs to the prover happens sequentially. In the first round, the prover obtains an encryption $\chi$ of the input $x$ from the verifier. Without breaking the security, the prover cannot distinguish between encryptions of different inputs. The prover performs a polynomial time quantum circuit on $\chi$, and then returns an output $\alpha$ to the verifier. In the second round, the information about $x$ has already been ``hidden'' from the prover, so the verifier can send input $y$ in the plain (i.e.~unencrypted) to the prover, upon which the prover can perform a measurement and return outcome $b$ to the verifier. The verifier checks for a Bell inequality violation (across many such interactions) using the values of $x$, the decryption of $\alpha$, along with $(y,b)$. QHE has two key features that makes this resemble the bipartite setting. Firstly, it allows the first round quantum prover to perform measurements as they would have in the bipartite setting, without knowing the input. Secondly, the encryption ensures that no classical polynomial-time prover can violate a Bell inequality by more than an negligible amount (see \cref{sec:comp} details). Both of these facts are non-trivial and were the subject of \cite{KLVY23}.

In a follow-up work, Natarajan and Zhang showed that the maximal quantum violation of the CHSH inequality in the compiled setting is bounded by the maximal violation in the bipartite setting, up to negligible factors in the security parameter \cite{NZ23}. Subsequent works have analyzed the quantum soundness of the KLVY compilation procedure for other multiprover scenarious, including all 2-player XOR nonlocal games \cite{Cui24}, Bell inequalities tailored to maximally entangled bipartite states~\cite{Baroni24}, delegated quantum computation with a single-device \cite{NZ23, MNZ24}, and even in the study of contextuality \cite{Aro24}. Despite these advancements, many results have yet to be reproduced in the compiled setting. Our work takes another step in growing the list of protocols that will function as desired in the compiled setting.

\paragraph*{Upper bounding compiled Bell violations}

As mentioned, the compiled value of a Bell inequality is always at least the quantum value. This is because any bipartite (quantum) model can be implemented with homomorphic encryption via a \emph{correctness} property of the QHE scheme used in the procedure. On the other hand, establishing upper bounds on the largest violation possible in the compiled setting is challenging, as general techniques for bounding these violations depend on the spatial separation between the two provers. Nonetheless, upper bounds on the violations of a certain Bell inequalities in the compiled setting can be verified using the sum-of-squares (SOS) technique~\cite{NZ23,Cui24,Baroni24}. The SOS approach is a powerful method and has been used extensively to upper bound Bell inequality violations and the values of nonlocal games in the bipartite setting. Informally, this technique relates the maximum compiled value $\eta$, of a Bell functional $I$, to a decomposition of the Bell operator or Bell polynomial $S$ as a sum of Hermitian squares, $\eta \Id - S = \sum_i P^\dagger_iP_i$. Before our work, progress was made on realizing this approach in the compiled setting, however, there were some limitations. In particular, it was required that the polynomials $P_i$ involved in the decomposition were at most degree two in both Alice's and Bob's observables, restricting the technique to Bell inequalities with an SOS decomposition of this form; this excludes, for example, the family of tilted-CHSH inequalities. 

Our first result extends the SOS technique to a larger family of Bell polynomials. More specifically, we extend the pseudo-expectation techniques in \cite{NZ23,Cui24} to allow for evaluations on polynomial terms $P_i$ that consist of arbitrary monomials in the algebra generated by Bob's observables. In \cref{thm:ExtendedSOS} we prove that an extended pseudo-expectation will be positive on the corresponding Hermitian square $P_{i}^{\dagger}P_{i}$ for any such term $P_i$. Consequently, we show that for any Bell inequality with an SOS decomposition in which $P_i$ are of the form $P_i = \sum_{j}\gamma_{j}(A_{x})^{k_{j}}w_{j}(B)$ for some $\gamma_{j} \in \mathbb{C}$, $k_{j} \in \{0,1\}$ and $w_{j}(B)$ being arbitrary monomials in Bob's observables, $\eta$ is an upper-bound on the maximum compiled quantum value. Our extension captures a wide class of Bell inequalities including tilted-CHSH, enabling us to bound the compiled value of the tilted-CHSH inequalities, by the quantum value and a negligible function of security parameter, see \cref{thm:tilt_bound} for details.

\paragraph*{A compiled self-testing result}
Our second contribution is a concept of self-testing in the compiled setting. One of the main obstacles to deriving self-testing results in the compiled setting is the lack of techniques for extracting any algebraic relations on the measurement operators acting under the encryption. Nevertheless, it remains possible to derive relations on the observables in the second round. With this in mind, we consider a partial notion of self-testing that applies to the measurements made by the prover in the second round. In particular, our definition only requires the existence of an isometry robustly certifying the ideal post-measurement state after the first round, and the action of the measurements made in the second; see \cref{def:compst} for details. 

As our final result, we provide an example by showing violations of the compiled tilted-CHSH inequalities satisfy this notion of partial self-testing. This family of inequalities was introduced by Ac\'in, Massar, and Pironio \cite{AcinRandomnessNonlocality}, and
the Bell functionals take the form $\alpha_{\theta} \langle A_{0} \rangle  + \langle A_{0}B_{0} \rangle  + \langle A_{0}B_{1} \rangle + \langle A_{1}B_{0} \rangle - \langle A_{1}B_{1} \rangle$, where $\langle A_{x}B_{y} \rangle, \ \langle A_{x} \rangle$ denote the expectation of measurements corresponding to settings $X=x, \ Y=y$, and $\alpha_{\theta} \in \mathbb{R}$. Notably, they are tailored to robustly self-test the two qubit states $\cos(\theta)\ket{00} + \sin(\theta)\ket{11}$ \cite{YangSelfTest,BampsPironio}, and were used as part of a more complex protocol to obtain self-testing for all pure bipartite entangled states~\cite{Coladangelo_2017}. The work of Barizien, Sekatski, and Bancal \cite{Barizien2024} extended this family to include extra degrees of freedom in Bob's measurements, which we will refer to as ``extended'' tilted-CHSH inequalities.

We apply \cref{thm:ExtendedSOS} to the SOS decomposition for the extended tilted-CHSH inequalities presented in~\cite{Barizien2024}. Specifically, in \cref{thm:tilt_bound} we prove that the maximum quantum value achieved for any of the extended tilted-CHSH functionals is preserved by the KLVY compilation procedure. Then in \cref{thm:tilt_st} we use this same decomposition to prove that this family of games is a compiled self-test according to \cref{def:compst}.

\paragraph*{Related work}

A recent work of \cite{kulpe24} implies that the compiled value of any 2-prover Bell scenario is bounded by the largest violation possible among so-called \emph{commuting operator} models. However, unlike some previous results, such as \cite{Baroni24,Cui24}, the upper bound in \cite{kulpe24} lacks a dependence on the security parameter $\lambda$, making it unclear how the compiled value is related to the quantum value at fixed security parameters. Hence, results such as ours, which obtain a bound on the compiled value that depends negligibly on the security parameter, remain of great importance. Furthermore, \cite{kulpe24} also considers a notion of self-testing in the compiled setting, however, due to their methods the results are in terms of commuting operator self-tests (as defined in \cite[Proposition 7.8]{paddock2024operator}) and only hold in the limit of the security parameter $\lambda \rightarrow \infty$. 

Another related work is~\cite{MV21}, which presents a protocol for certifying that an unknown computationally bounded device has prepared a maximally entangled pair of qubits, and whether a measurement was performed on each qubit in either the computational or Hadamard basis. The techniques used to prove our compiled self-test have similarities to those of~\cite{MV21}, particularly in the choice of isometry (see Definition \ref{def:compst_inf}) and proof structure, which in turn resembles self-testing techniques in the bipartite setting~\cite{BampsPironio}. There are however some key differences. Firstly,~\cite{MV21} certifies the preparation of a maximally entangled state by the device before any measurements are made. While our results are tailored to the more general class of partially entangled states, we only make statements about the post-measurement states after each round. It is an interesting open question if our results can be extended in this way (see \cref{sec:comp_self_test} for more details), and statements weaker than certifying the prepared state could also be possible. For example, can a compiled self-test be used to show the prepared state must have been entangled? Another significant difference to~\cite{MV21} is that the self-testing protocol in this work strongly resembles the bipartite case, owing to the compilation procedure mapping bipartite nonlocal scenarios to single prover scenarios. Our main result can therefore be interpreted as translating a self-testing statement in the Bell scenario to one in the compiled Bell scenario. On the other hand, the authors of~\cite{MV21} describe their approach as more ``custom'', guided by the available cryptographic primitives, and pose the open question of finding a general procedure for translating self-testing results from the nonlocal setting. We showed this is possible for the special case of titled-CHSH inequalities.

\paragraph*{Future outlook}
Moving forward, we consider several natural directions for following up on this work:

\begin{enumerate}
    \item Tilted-CHSH inequalities were an integral component of the self-testing for all pure bipartite entangled states \cite{Coladangelo_2017}. Building off of our work on compiled tilted-CHSH inequalities, a natural question is whether similar results can be obtained in the compiled setting.
    \item It would be desirable to understand the fundamental limitations of our notion of self-testing and other similar notions such as the computational self-testing given in \cite{MV21}. Furthermore, is a finer notion of self-testing in the compiled setting that characterizes both Alice's and Bob's operators and the initial state possible without specifying the underlying QHE scheme? Moreover, is every self-test in the standard Bell scenario also a compiled self-test, and vice-versa?
    \item Many current techniques for bounding the value of compiled nonlocal games/Bell inequalities can be obtained using some variant of the sum-of-squares decomposition approach. Given our improvements to this approach outlined in \cref{thm:ExtendedSOS}, it is possible to search for valid decompositions which include arbitrary words in Bob's operators. Is it possible to use this approach to give a limited variant of the NPA hierarchy \cite{NPA2} in the compiled setting? 
\end{enumerate}

\section*{Acknowledgments}
The authors thank Simon Schmidt, Ivan {\v S}upi\'c, Anand Natarajan, and Tina Zhang for helpful discussions. We also thank the anonymous TQC reviewers for valuable feedback. AM is supported by the NSERC Alliance Consortia Quantum grants program, reference number: ALLRP 578455 - 22 and the NSERC Discovery Grants program 2024-06049. CP is supported by the Digital Horizon Europe project FoQaCiA, Foundations of quantum computational advantage, grant no.~101070558, and the Natural Sciences and Engineering Research Council of Canada (NSERC) reference number: ALLRP 578455 - 22. LW is supported by the Engineering and Physical Sciences Research Council (EPSRC Grant No. EP/SO23607/1) and the European Union’s Horizon Europe research and innovation programme under the project “Quantum Security Networks Partnership” (QSNP, grant agreement No. 101114043).

\section{Background}\label{sec:comp_ineq}

\subsection{Mathematical notation}
Throughout the article, Hilbert spaces are denoted by $\mcH$, and are assumed to be finite-dimensional unless explicitly stated otherwise. Elements of $\mcH$ are denoted by $\ket{v}\in \mcH$, where the inner product $\langle u|v\rangle$ for $\ket{v},\ket{u}\in \mcH$ is linear in the second argument and defines the vector norm $\| \ket{v}\| = \sqrt{\braket{v}{v}}$. Quantum pure states are the norm 1 elements of $\mcH$. In this work, $\mathbb{B}(\mcH)$ denotes the unital $\dagger$-algebra of bounded linear operators on $\mcH$ with norm $\|M\|^2_{\mathrm{op}}=\sup_{|v\rangle\in \mcH,\ket{v} \neq 0}\langle v|M^\dagger M|v\rangle/\langle v|v\rangle$. We also write $\|A\|_2=\sqrt{\tr(A^{\dagger}A)}$ to denote the Schatten 2-norm for $A\in \mathbb{B}(\C^d)\cong M_d(\C)$. The unit in $\mathbb{B}(\mcH)$ is denoted by $\Id$, and we write $|M| = \sqrt{M^{\dagger}M}$ for the positive part of $M\in \mathbb{B}(\mcH)$. Given a finite set $\mcA$, a collection of positive operators $\{M_a\geq 0:a\in \mcA\}$ with the property that $\sum_{a\in \mcA} M_a=\Id$, is called a POVM over $\mcA$. When the operators in a POVM are orthogonal projections, we call it a PVM. Given a random variable $X$, which takes values $X=x \in \mcX$ according to a distribution $\mu:\mcX \to \R_{\geq 0}$ such that $\sum_{x\in X}\mu(x)=1$, we denote the expectation of $X$ by $\expect[X]=\sum_{x\in \mcX}\mu(x)\cdot x$. For $a,b \in \mathbb{R}$ and $\delta > 0$, $a \approx_{\delta} b$ is short for $|a - b| \leq \delta$. A function $\text{negl}:\mathbb{N} \rightarrow \mathbb{R}$ is called negligible if for all $k\in \mathbb{N}$ there exists $N \in \mathbb{N}$ such that for every $n \geq N$ it holds that $\text{negl}(n)\leq \frac{1}{n^k}$. 

\subsection{Bell scenarios, inequalities, and violations}
Before we discuss compiled Bell inequalities, let us recall the bipartite case. Here we let $\mcA,\mcB,\mcX,$ and $\mcY$ be finite sets, with $|\mcA|=m_A$, $|\mcB|=m_B$, $|\mcX|=n_A$, and $|\mcY|=n_B$. A bipartite Bell scenario is described by the tuple $\mcS=(\mcA,\mcB,\mcX,\mcY,\pi)$, where $\pi:\mcX \times \mcY\to \R_{\geq 0}$ is a distribution over the measurement settings. In a scenario, each party receives an input $x\in \mcX$ (resp. $y\in \mcY$) sampled according to $\pi$, and returns outputs $a\in \mcA$ (resp. $b\in \mcB$). The parties are non-communicating, and therefore cannot coordinate their outputs. The behaviour of the provers is characterized by a correlation, a set of conditional probabilities $\p = \{p(a,b|x,y):a\in \mcA,b\in \mcB,x\in \mcX,y\in \mcY\}$, which is realized by an underlying physical theory or model. In the quantum setting, we allow the provers to share a bipartite quantum state, and say the correlation $\p$ is realized by a \textbf{bipartite (quantum) model} 
\begin{equation}\label{eqn:q_model}
\qmodel = \big(\mcH_A,\mcH_B, \{\{M_{a|x}\}_{a \in \mathcal{A}}\}_{x \in \mathcal{X}}, \{\{N_{b|y}\}_{b \in \mathcal{B}}\}_{y \in \mathcal{Y}},\ket{\Psi}_{AB}\big),
\end{equation}
where $\mcH_{A}$ and $\mcH_{B}$ are Hilbert spaces, $\{M_{a|x}\}_{a \in \mathcal{A}}$ and $\{N_{b|y}\}_{b \in \mathcal{B}}$ are POVMs on $\mcH_A$ and $\mcH_B$ respectively, and $\ket{\Psi}_{AB}$ is a vector state in $\mcH_A \otimes \mcH_B$. More generally, a correlation $\mathsf{p}$ is quantum (or an element of $C_{\mathrm{q}}(n_A,n_B,m_A,m_B)$) if there exists a bipartite model $\qmodel$ for which $\mathsf{p}$ can be realized via the Born rule as $p(a,b|x,y) = \bra{\Psi}M_{a|x} \otimes N_{b|y} \ket{\Psi}$.
We denote the class of bipartite (quantum) models by $\mathcal{Q}(n_A,n_B,m_A,m_B)$. From now on we will refer to such models simply as \textbf{bipartite models}.

In contrast to the set of quantum correlations, we have the collection of local correlations $C_{\mathrm{loc}}(n_A,n_B,m_A,m_B)$. These are the correlations $\{p(a,b|x,y)\}$ for which there exists a \textbf{classical model}, that is a probability distribution $\mu_k$ and a local distributions $p_k^{A}(a|x)$ and $p_k^{B}(b|y)$ such that $p(a,b|x,y) = \sum_{k} \mu_k \, p_k^{A}(a|x) \, p_k^{B}(b | y)$.
We let $\cmodel=(\mu_k,\{p_k^A\},\{p_k^B\})$ denote a \textbf{classical model} and let $\mathcal{L}(n_A,n_B,m_A,m_B)$ denote the class of all classical models. In what follows we consider Bell scenarios where $n_A=n_B=n$, and $m_A=m_B=m$. With this notation Bell's theorem~\cite{Bell64} states that $C_{\text{loc}}(2,2)$ is a strict subset of $C_{\text{q}}(2,2)$.

Given a Bell scenario $\mcS$, one can consider a linear (or Bell) functional on the set of correlations
\begin{equation}\label{eqn:bell_ineq}
    I=\sum_{a\in \mathcal{A},b\in \mathcal{B},x\in \mathcal{X},y\in \mathcal{Y}} w_{abxy} \, p(a,b|x,y),   
\end{equation}
for coefficients $w_{abxy} \in \mathbb{R}$. A \textbf{Bell inequality} is a functional $I$ and a bound $\eta>0$ such that $I\leq \eta$ for all $\p \in C_{\mathrm{loc}}(n,m)$. Given a functional $I$, the classical value is the maximal value achieved by the classical correlations $\p \in C_{\mathrm{loc}}(n,m)$. We denote this value by $\eta^{\mathrm{L}}:= \sup_{\p\in C_{\mathrm{loc}}(m,n)}I$. The quantum value for $I$ is the maximal value achieved by the set of quantum correlations $\p \in C_{\mathrm{q}}(m,n)$, and we denote the quantum value on $I$ by $\eta^{\mathrm{Q}}:= \sup_{\p\in C_{\mathrm{q}}(m,n)}I$. Hence, a Bell violation occurs whenever there is a $\p\in C_{\mathrm{q}}(m,n)$ for which $I>\eta^{\mathrm{L}}$. A violation of a Bell inequality by non-communicating provers employing a quantum model is an indication of entanglement between provers.

Typically when $\eta^{\mathrm{L}}$ is known for a given $I$, the main challenge is finding an upper bound on $\eta^{\mathrm{Q}}$. In this case, one often considers the \textbf{Bell operator}\footnote{For a more mathematically rigorous treatment of Bell operators and the SOS approach consult \cite{Cui24}.} $S = \sum_{abxy}w_{abxy}\, M_{a|x} \otimes N_{b|y}$, and $\langle S \rangle = \bra{\Psi} S \ket{\Psi}$ its quantum expectation with respect to $|\Psi\rangle \in \mcH_A\otimes \mcH_B$. Since bipartite models with separable quantum states generate the classical correlations $C_{\mathrm{q}}(m,n)$, $\langle \Psi| S|\Psi\rangle\leq \eta^{\mathrm{L}}$ whenever $|\Psi\rangle$ is separable (unentangled). However, it's possible that there could be entangled states for which $\langle \Psi'|S|\Psi'\rangle>\eta^{\mathrm{L}}$. Hence, given a Bell operator $S$, we can recover the maximum classical and quantum values $\eta^{\mathrm{L}} = \sup_{\cmodel \in \mcL(n,m)} \langle S \rangle$ and $\eta^{\mathrm{Q}} = \sup_{\qmodel \in \mathcal{Q}(n,m)} \langle S \rangle$ respectively. Technically, we have not fixed the dimensions of the Bell operator as we want to consider any finite-dimensional model. Hence, the supremum is implicitly over all finite-dimensional Hilbert spaces $\mcH_A\otimes \mcH_B$.

An approach to establishing upper bounds on $\langle S \rangle$ is using sum-of-squares techniques. Let $S$ be a Bell operator and $\eta'>0$. The shifted Bell operator $\eta'\Id - S$ admits a \textbf{sum-of-squares} (SOS) decomposition if there exists a set of polynomials $\{P_{i}\}_{i\in \mcI}$ in the elements $\{M_{a|x}, N_{b|y}:a\in \mcA, b\in \mcB, x\in \mcX, y\in \mcY\}$ satisfying $\eta'\Id - S = \sum_{i\in \mcI}P_{i}^{\dagger}P_{i}$.
The existence of an SOS decomposition for the operator $\eta' \Id - S$ implies that $\eta' \Id - S$ is positive, and therefore $\eta'$ is an upper bound on the maximum quantum value of $\langle S \rangle$. Additionally, if $\eta'$ is achievable by a bipartite model, then we write $\eta' = \eta^{\mathrm{Q}}$. In this case, the \emph{shifted} Bell operator is $\bar{S} = \eta^{\mathrm{Q}} \Id - S$, and observing $\bra{\Psi} \bar{S} \ket{\Psi} = 0 $ implies the constraints $P_{i} \ket{\Psi} = 0$ for all $i \in \mcI$; these constraints can often be used to infer the algebraic structure (rigidity) of the measurements $\{M_{a|x}\}_{a \in \mcA,x \in \mcX},\{N_{b|y}\}_{b \in \mcB,y \in \mcY}$ which achieve $\langle S \rangle = \eta^{\mathrm{Q}}$.

\section{Compiled Bell scenarios}
\label{sec:comp}
The compilation procedure of a Bell scenario is essentially the same as the procedure for compiling nonlocal games outlined in \cite{KLVY23}. Let $\mcS=(\mcX,\mcY,\mcA,\mcB,\pi)$ be a 2-prover Bell scenario and fix a quantum homomorphic encryption scheme with \emph{security against quantum distinguishers} and \emph{correctness with respect to auxiliary input}. Readers unfamiliar with QHE schemes and these properties can refer to \cref{def_qhe} found in the appendix.

A \textbf{compiled Bell scenario} is the following 2-round single-prover scenario. To setup, the verifier samples a secret key $\sk\leftarrow \gen(1^\lambda)$. Then, the verifier samples a pair of inputs $(x,y)\in \mcX\times \mcY$ according to the distribution $\pi:\mcX\times \mcY\to \R_{\geq 0}$, and encrypts the first input as the ciphertext $\chi\leftarrow \enc(\sk,x)$.
\begin{enumerate}
    \item The verifier sends the ciphertext $\chi$ to the prover. The prover replies with a ciphertext $\alpha$ encoding their output. The verifier decrypts obtaining outcome $a\leftarrow \dec(\sk,\alpha)$ from $\mcA$.
    \item The verifier sends the sampled (plaintext) input $y\in \mcY$ to the prover, who replies with another outcome $b\in \mcB$.
\end{enumerate}

In the compiled scenario, for a chosen security parameter $\lambda$, the prover prepares an initial quantum polynomial time (QPT) preparable state $\ket{\Psi^{(\lambda)}} \in \widetilde{\mcH}^{(\lambda)}$ where $\widetilde{\mcH}^{(\lambda)}$ is a single Hilbert space (see \cref{def:QPT} for details on efficient quantum procedures). Then, the first round of the protocol is characterized by a family of POVMs $\{\{\widetilde{M}_{\alpha | \chi}^{(\lambda)}\}_{\alpha \in \bar{\mathcal{A}}}\}_{\chi \in \bar{\mathcal{X}}}$ and unitaries $\{U_{\alpha,\chi}^{(\lambda)}\}_{\alpha \in \bar{\mathcal{A}},\chi \in \bar{\mathcal{X}}}$, where $\bar{\mathcal{X}}$ and $\bar{\mathcal{A}}$ are the set of all valid ciphertexts of the first round input and output, respectively. Unlike in the bipartite setting, we must account for unitary operations applied to the post-measurement state in the first round. With this in mind, we denote the sub-normalized post-measurement state given the measurement over ciphertext $\chi$ and encrypted outcome $\alpha$ by
\begin{equation}
 U_{\alpha,\chi}^{(\lambda)}\widetilde{M}_{\alpha|\chi}^{(\lambda)}\ket{\Psi^{(\lambda)}} =: \ket{\Psi_{\alpha | \chi}^{(\lambda)}}.
\end{equation}
Note that these vectors are sub-normalized. In particular, the probability of obtaining $\alpha \in  \bar{\mcA}$ given $\chi \in \bar{\mcX}$ is given by $\braket{\Psi_{\alpha | \chi}^{(\lambda)}}{\Psi_{\alpha | \chi}^{(\lambda)}}$.
In the second round, the device makes a POVM measurement $\{\{N_{b|y}^{(\lambda)}\}_{b \in \mathcal{B}}\}_{y \in \mathcal{Y}}$, where the resulting conditional probability is given by
\begin{equation}
\bra{\Psi^{(\lambda)}}{\widetilde{M}_{\alpha|\chi}^{(\lambda)\dagger}} {U_{\alpha,\chi}^{(\lambda)\dagger}} N_{b|y}^{(\lambda)}U_{\alpha,\chi}^{(\lambda)}\widetilde{M}_{\alpha|\chi}^{(\lambda)}\ket{\Psi^{(\lambda)}}=\bra{\Psi_{\alpha|\chi}^{(\lambda)}}N_{b|y}^{(\lambda)}\ket{\Psi_{\alpha|\chi}^{(\lambda)}},
\end{equation}
for a fixed, $\lambda\in \N$, $\sk\leftarrow \gen(1^\lambda)$, ciphertexts $\chi\in \bar{\mcX}$, $\alpha \in \bar{\mcA}$, and plaintexts $y\in \mcY$, $b\in \mcB$. 

To summarize, for a fixed QHE scheme, $\lambda\in \N$, a \textbf{compiled (quantum) model} is given by a tuple 
\begin{equation}
    \widetilde{\qmodel}^{(\lambda)} =( \widetilde{\mcH}^{(\lambda)}, \{\ket{\Psi_{\alpha|\chi}^{(\lambda)}}\}_{\alpha \in \bar{\mcA},\chi \in \bar{\mcX}}, \{\{N_{b|y}^{(\lambda)}\}_{b \in \mcB}\}_{y \in \mcY}),
\end{equation}
where all the relevant measurements and states are obtained by some QPT procedure. We remark that one can consider a description of the model which includes the initial state $\ket{\Psi^{(\lambda)}}$ and the operators $\{U_{\alpha,\chi}^{(\lambda)}\widetilde{M}_{\alpha|\chi}^{(\lambda)}\}_{\alpha \in \bar{\mcA},\chi \in \bar{\mcX}}$, rather than the post-measurement states $\ket{\Psi_{\alpha|\chi}^{(\lambda)}}$. Hence, $\widetilde{\qmodel}^{(\lambda)}$ is really a coarse description of a quantum model in the compiled setting.
The joint distribution of the outcomes after both rounds is given by
\begin{equation}
    p^{(\lambda)}(a,b|x,y) = \expect_{\sk\leftarrow \gen(1^\lambda)} \expect_{\chi : \enc(x) = \chi} \sum_{\alpha : \dec(\alpha) = a} \bra{\Psi_{\alpha|\chi}^{(\lambda)}}N_{b|y}^{(\lambda)}\ket{\Psi_{\alpha|\chi}^{(\lambda)}}. \label{eq:cryptDist}
\end{equation}

Note that the marginal distribution $p^{(\lambda)}(a|x)$ obtained from \cref{eq:cryptDist} will be independent of the second input $y$ due to the sequential nature of the protocol. However, the marginal $p^{(\lambda)}(b|y,x)$ currently depends on $x$. The aim of what follows is to establish a computational independence between this distribution and the inputs $x$. To do so we will need to consider the distributions of the decrypted outputs and appeal to the security promise of the QHE scheme. Specifically, we require a key lemma which has appeared in several works \cite{NZ23,Cui24,Baroni24}. We borrow a version from \cite{kulpe24} and we refer the reader to the reference for the proof.

\begin{lemma}[\cite{kulpe24}, Proposition 4.6]
Let $\widetilde{\mathcal{\qmodel}}^{(\lambda)}$ be a compiled quantum model, and $\mathcal{N}^{(\lambda)} = w(\{N_{b|y}^{(\lambda)}\}_{b \in \mathcal{B},y \in \mcY})$ be a monomial in the measurement operators $\{N_{b|y}^{(\lambda)}\}_{b \in \mathcal{B},y \in \mcY}$, where $\lambda \in \N$ is the security parameter for a fixed QHE scheme. Then, for any two QPT sampleable distributions $\mcD_1,\mcD_2$ over plaintext inputs $x \in \mcX$ there exists a negligible function $\negl(\lambda)$ of the security parameter $\lambda$ such that the following holds
\begin{align*}
    &\left|\expect_{\sk\leftarrow \gen(1^\lambda)} \expect_{x\leftarrow \mcD_1}\expect_{\substack{\chi:\enc(x)=\chi}}\sum_{\alpha \in \bar{\mcA}}\langle \Psi_{\alpha|\chi}^{(\lambda)}| \mcN^{(\lambda)} |\Psi_{\alpha|\chi}^{(\lambda)}\rangle - \expect_{\sk\leftarrow \gen(1^\lambda)} \expect_{x\leftarrow \mcD_2}\expect_{\substack{\chi:\enc(x)=\chi}}\sum_{\alpha \in \bar{\mcA}}\langle\Psi_{\alpha|\chi}^{(\lambda)}|\mcN^{(\lambda)}|\Psi_{\alpha|\chi}^{(\lambda)}\rangle  \right|\\
    &\leq \negl(\lambda).
\end{align*} \label{lem:QHE1}
\end{lemma}
\noindent The approximate no-signalling conditions from Alice to Bob can then be seen by applying \cref{lem:QHE1} to the monomials of degree 1 in the QPT measurement operators $\{N_{b|y}^{(\lambda)}\}_{b \in \mcB,y \in \mcY}$, since
\begin{equation}
    \Big| \expect_{\sk\leftarrow \gen(1^\lambda)}\expect_{\chi : \enc(x) = \chi} \sum_{\alpha \in \bar{\mathcal{A}}} \bra{\Psi_{\alpha|\chi}^{(\lambda)}}N_{b|y}^{(\lambda)} \ket{\Psi_{\alpha|\chi}^{(\lambda)}} - \expect_{\sk\leftarrow \gen(1^\lambda)}\expect_{\chi : \enc(x') = \chi} \sum_{\alpha \in \bar{\mathcal{A}}} \bra{\Psi_{\alpha|\chi}^{(\lambda)}}N_{b|y}^{(\lambda)} \ket{\Psi_{\alpha|\chi}^{(\lambda)}} \Big| \leq \negl(\lambda)
\end{equation}
holds for all $b \in \mathcal{B},y \in \mathcal{Y}$ and $x,x' \in \mathcal{X}$ with $x \neq x'$. 

In the above statements, the measurements are completely general, and the states are sub-normalized vectors. The following lemma shows that when considering the compiled value, we can assume that the states and measurement operators in the compiled strategy are pure and projective. 

\begin{lemma}\label{lem:proj_comp}
    Let $\mathcal{H}^{'(\lambda)}$ be the Hilbert space of the device, and $\{\{\rho_{\alpha|\chi}^{(\lambda)}\}_{\alpha \in \bar{\mcA}}\}_{\chi \in \bar{\mcX}}$ be a family of QPT-preparable sub-normalized states on $\mcH^{'(\lambda)}$ after the first round. Let $\{\{N_{b|y}^{'(\lambda)}\}_{b \in \mcB}\}_{y \in \mcY}$ be a family of QPT-implementable POVMs on $\mathcal{H}^{'(\lambda)}$, which induce the behaviour $p^{(\lambda)}(\alpha,b|\chi,y) = \tr[N_{b|y}^{'(\lambda)}\rho_{\alpha|\chi}^{(\lambda)}]$. Then there exists a Hilbert space $\mcH^{(\lambda)}$, a family of QPT-preparable sub-normalized states $\{\{\ket{\Psi_{\alpha|\chi}^{(\lambda)}}\}_{\alpha \in \bar{\mcA}}\}_{\chi \in \bar{\mcX}}$ in $\mcH^{(\lambda)}$, and a family of QPT-implementable PVMs $\{\{N_{b|y}^{(\lambda)}\}_{b \in \mcB}\}_{y \in \mcY}$ on $\mcH^{(\lambda)}$ which satisfy
    \begin{equation}
        \bra{\Psi_{\alpha|\chi}^{(\lambda)}}N_{b|y}^{(\lambda)}\ket{\Psi_{\alpha|\chi}^{(\lambda)}} = p^{(\lambda)}(\alpha,b|\chi,y), \  \ \forall \alpha \in \bar{\mcA},\, \chi \in \bar{\mcX}, \, b \in \mcB, \, y \in \mcY.
    \end{equation} \label{lem:proj}
\end{lemma}
\noindent See \cref{sec:app_b} for the proof of \cref{lem:proj_comp}. 

We say a compiled model $\widetilde{\mathcal{\qmodel}}^{(\lambda)}= ( \widetilde{\mcH}^{(\lambda)}, \{\ket{\Psi_{\alpha|\chi}^{(\lambda)}}\}_{\alpha \in \bar{\mcA},\chi \in \bar{\mcX}}, \{\{N_{b|y}^{(\lambda)}\}_{b \in \mcB}\}_{y \in \mcY})$ is pure and projective whenever the states $|\Psi_{\alpha|\chi}^{(\lambda)}\rangle$ are all pure and the measurements $N_{b|y}^{(\lambda)}$ are all projective (i.e.~PVMS).

\subsection{Quantum bounds for compiled inequalities}

A compiled (quantum) model $\widetilde{\mathcal{\qmodel}}^{(\lambda)}$ describes the correlations $\mathsf{p}^{(\lambda)} = \{p^{(\lambda)}(a,b|x,y)\}_{a \in \mcA,b \in \mcB,x \in \mcX,y \in \mcY}$ observed in a compiled Bell scenario. A \textbf{compiled Bell functional} is a linear functional $I^{(\lambda)}$ evaluated on correlations realized by compiled models. That is
\begin{equation}\label{eq:comp_bell}
I^{(\lambda)}= \sum_{abxy}w_{abxy}\expect_{\substack{\sk\leftarrow \gen(1^\lambda)\\ \chi : \enc(x) = \chi}} \sum_{\alpha : \dec(\alpha) = a} \bra{\Psi^{(\lambda)}_{\alpha|\chi}}N^{(\lambda)}_{b|y}\ket{\Psi^{(\lambda)}_{\alpha|\chi}}.
\end{equation}
By the properties of the compilation procedure~\cite[Theorem 3.2]{KLVY23}, Bell inequalities are preserved under compilation (up to negligible error). In particular, for large security parameter, efficient classical provers cannot violate a Bell inequality by much more than they could in the (bipartite) scenario.
From now on, we will suppress the security parameter $\lambda\in \N$ along with the expectation over secret keys $\expect_{\sk\leftarrow \gen(1^\lambda)}$ and simply write the expectation for a fixed key. In particular, we express the compiled model as $\widetilde{\mathcal{\qmodel}}$ and  \cref{eq:comp_bell} as
\begin{equation*}
I=\sum_{abxy}w_{abxy}\expect_{\chi : \enc(x) = \chi} \sum_{\alpha : \dec(\alpha) = a} \bra{\Psi_{\alpha|\chi}}N_{b|y}\ket{\Psi_{\alpha|\chi}}.
\end{equation*}

We now turn our attention to the maximum value $I$ can take in the compiled setting with an efficient quantum prover. The results of \cite{KLVY23} imply that an efficient quantum prover can achieve the same violation in the bipartite setting. However, the existence of a quantum compiled behavior which exceeds the maximal quantum Bell violation in the bipartite case (by more than negligible factors) has not been ruled out. Nonetheless, in several cases (like the CHSH inequality and more generally all XOR games \cite{Cui24}) we know that the quantum compiled behavior cannot exceed the value $\eta^{\mathrm{Q}}$ by more than negligible amounts. One technique for establishing such bounds was introduced in \cite{NZ23} and uses SOS techniques to bound the quantum violation of the compiled Bell functional.

\subsection{Extending the pseudo-expectations}

Our approach builds off the methods used in \cite{NZ23} and \cite{Cui24}. To explain this approach we recall that a pseudo-expectation is a unital, linear map from a subspace $\mathcal{T}$ of the algebra generated by $\{M_{a|x}, N_{b|y}\}_{a\in \mathcal{A},x \in \mathcal{X},b \in \mathcal{B},y \in \mathcal{Y}}$ to the complex numbers, $\pseudo:\mathcal{T} \to \mathbb{C}$, which is determined by a compiled quantum model $\widetilde{\mathcal{\qmodel}}$. In the case $n=m=2$, it suffices to define the pseudo-expectation $\pseudo$ on the observables $A_{x} = \sum_{a \in \{0,1\}} (-1)^{a}M_{a|x}, \ B_{y} = \sum_{b \in \{0,1\}}(-1)^{b}N_{b|y}$ and require that they are mapped to their expectations in the compiled scenario\footnote{Though in the following we define $\pseudo$ for $n=m=2$, this can be directly extended to arbitrary Bell scenarios by defining $\pseudo$ on the POVM elements $M_{a|x}, N_{b|y}$ in an analogous way.}. We further assume that all measurements are projective (cf. \cref{lem:proj}). In previous works, the definition of the pseudo-expectation had been restricted to monomials consisting of at most one Alice and one Bob observable as outlined below:
\begin{equation}
    \begin{aligned}
        \pseudo[A_{x}B_{y}] &:= \expect_{\chi : \enc(x) = \chi} \sum_{\alpha} (-1)^{\dec(\alpha)}\bra{\Psi_{\alpha|\chi}}B_{y} \ket{\Psi_{\alpha|\chi}}, \\
        \pseudo[A_{x}A_{x'}] &:= \delta_{x,x'}, \\
        \pseudo[B_{y}B_{y'}] &:= \expect_{x \in \mcX} \expect_{\chi : \enc(x) = \chi} \sum_{\alpha} \bra{\Psi_{\alpha|\chi}}B_{y}B_{y'} \ket{\Psi_{\alpha|\chi}}, \\
        \pseudo[A_{x}] &:= \expect_{\chi : \enc(x) = \chi} \sum_{\alpha} (-1)^{\dec(\alpha)} \braket{\Psi_{\alpha|\chi}}{\Psi_{\alpha|\chi}}, \\
        \pseudo[B_{y}] &:= \expect_{x \in \mcX} \expect_{\chi : \enc(x) = \chi} \sum_{\alpha} \bra{\Psi_{\alpha|\chi}}B_{y} \ket{\Psi_{\alpha|\chi}}, \\
        \pseudo[\Id] &:= 1,
    \end{aligned}
\end{equation} 
where $\expect_{x \in \mcX}$ denotes the expectation according to an arbitrary fixed distribution over $\mathcal{X}$. This is already sufficient to handle known SOS decompositions for a variety of well-studied Bell inequalities whenever the polynomials are expressed in the basis $\{\Id,A_{x},B_{y}\}_{x\in \mcX,y \in \mcY}$. However, there are Bell inequalities, such as the tilted-CHSH inequality~\cite{BampsPironio,Barizien2024}, for which no known SOS decomposition exists in the basis $\{\Id,A_{x},B_{y}\}_{x \in \mcX,y \in \mcY}$. 

The contribution of this section is to expand the definition of the pseudo-expectation to the basis encompassing all monomials in $A_{x},B_{0},B_{1}$, for a fixed $x \in \mathcal{X}$, in a way that is approximately non-negative on Hermitian squares. This allows us to handle more general SOS decompositions, and in particular, the tilted-CHSH inequalities. Let $ w(A_{x},B_{0},B_{1})$ be a monomial in the elements $\{A_{x},B_{0},B_{1}\}$. Importantly, $x$ is fixed, and we do not consider monomials of the form $A_{0}A_{1}B_{y}$ for example. Let $\bar{w}$ be the canonical form of $w$ under the relations $[A_{x},B_{y}] = 0$, $(B_{y})^{2} = (A_{x})^{2} = \Id$, where all $A_{x}$ terms are commuted to the left. Since we only consider one value of $x$, these will all be of the form $ (A_{x})^{i} \bar{w}(B_{0},B_{1})$ for some $i \in \{0,1\}$, where the monomial $\bar{w}(B_{0},B_{1})$ cannot be reduced further. We then define the pseudo-expectation
\begin{equation}
    \pseudo\big[ w(A_{x},B_{0},B_{1}) \big] := \pseudo\big[ (A_{x})^{i} \bar{w}(B_{0},B_{1}) \big].
\end{equation}
For the case $i = 0$, we define
\begin{equation}
    \pseudo\big[ \bar{w}(B_{0},B_{1}) \big] := \expect_{x \in \mathcal{X}}\expect_{\chi:\enc(x)=\chi} \sum_{\alpha} \bra{\Psi_{\alpha|\chi}}\bar{w}(B_{0},B_{1}) \ket{\Psi_{\alpha|\chi}}, \label{eq:pseduo1}
\end{equation}
and for the case $i = 1$, 
\begin{equation}
    \pseudo\big[ A_{x}\bar{w}(B_{0},B_{1}) \big] := \expect_{\chi:\enc(x)=\chi} \sum_{\alpha}(-1)^{\dec(\alpha)} \bra{\Psi_{\alpha|\chi}}\bar{w}(B_{0},B_{1}) \ket{\Psi_{\alpha|\chi}}. \label{eq:pseduo2}
\end{equation}
From the above definitions, we next state the main result of this section, which can be applied generally to any polynomial expressible in the basis $\{A_{x},B_{0},B_{1}\}$.
\begin{theorem}\label{thm:ExtendedSOS}
    Let $\{A_{x}\}_{x \in \mcX}$ and $\{B_{y}\}_{y \in \mathcal{Y}}$ be binary observables, and let 
    \begin{equation}
        P = \sum_{i}\gamma_{i}(A_{x})^{k_{i}}w_{i}(B_{0},B_{1}),
    \end{equation}
    where $\gamma_{i} \in \mathbb{C}$, $k_{i} \in \{0,1\}$ and each $w_{i}(B_{0},B_{1})$ is any monomial in the algebra of $\{B_{0},B_{1}\}$. Then there exists a negligible function $\negl(\lambda)$ of the security parameter $\lambda \in \mathbb{N}$ such that 
    \begin{equation}
        \pseudo[P^{\dagger}P] \geq - \negl(\lambda).
    \end{equation} \label{lem:compSOS} Furthermore, for a given Bell functional $I$, and a compiled model $\widetilde{\mathcal{\qmodel}}$,  $\pseudo(I)$ is the expected value of the compiled model $\widetilde{\mathcal{\qmodel}}$ on $I$.
\end{theorem}

The proof can be found in \cref{sec:app_b}.

\subsection{Quantum bounds for compiled tilted-CHSH expressions}

We now present the family of extended tilted-CHSH type expressions and their SOS decompositions discovered in~\cite{Barizien2024}. Let $\theta \in (0,\pi/4]$, $\phi \in \big( \max\{-2\theta,-\pi + 2\theta\} , \min\{2\theta,\pi - 2\theta\}\big) \setminus \{0\}$, and $\tau_{\theta,\phi} \in \mathbb{R}$ such that
\begin{equation}
    \frac{1}{\tau_{\theta,\phi}^{2}}  = \frac{\sin^{2}(2\theta)}{\tan^{2}(\phi)} - \cos^{2}(2\theta). \label{eq:lam}  
\end{equation}
From here, we define the following expressions:
\begin{equation}
\begin{aligned}
    S_{\theta,\phi} &:= A_{0}\otimes \frac{B_{0}+B_{1}}{\cos(\phi)} + \tau_{\theta,\phi}^{2} \Big [ \sin(2\theta) \, A_{1}\otimes \frac{B_{0} - B_{1}}{\sin(\phi)} + \cos(2\theta) \, \Id \otimes \frac{B_{0} + B_{1}}{\cos(\phi)}  \Big], \\ \eta^{\mathrm{Q}}_{\theta,\phi} &:= 2(1+\tau_{\theta,\phi}^{2}).\label{eq:tiltBop}
\end{aligned}
\end{equation}
We also let $I_{\theta,\phi}$ denote the corresponding Bell functional, and recall the following result.
\begin{lemma}[\cite{Barizien2024}, Section 3.2.1]
    Let $\theta \in (0,\pi/4]$, $\phi \in \big( \max\{-2\theta,-\pi + 2\theta\} , \min\{2\theta,\pi - 2\theta\}\big)  \setminus \{0\} $, $\tau_{\theta,\phi}$ be given by \cref{eq:lam} and $S_{\theta,\phi}, \eta^{\mathrm{Q}}_{\theta,\phi}$ be defined in \cref{eq:tiltBop}. Define the following polynomials:
    \begin{equation}
    \begin{aligned}
        N_{0} &:= A_{0}\otimes \Id - \Id \otimes \frac{B_{0} + B_{1}}{2\cos(\phi)}, \\
        N_{1} &:= A_{1} \otimes \Id - \sin(2\theta) \, \Id \otimes \frac{B_{0} - B_{1}}{2\sin(\phi)} - \cos(2\theta) \, A_{1} \otimes \frac{B_{0} + B_{1}}{2\cos(\phi)}.
    \end{aligned} \label{eq:SOSpoly}
    \end{equation}
    Then the shifted Bell operator $\bar{S}_{\theta,\phi} = \eta^{\mathrm{Q}}_{\theta,\phi}\Id - S_{\theta,\phi}$ admits the SOS decomposition 
    \begin{equation}
        \bar{S}_{\theta,\phi} = N_{0}^{\dagger}N_{0} + \tau_{\theta,\phi}^{2}N_{1}^{\dagger}N_{1}.
    \end{equation} \label{lem:tilt}
\end{lemma}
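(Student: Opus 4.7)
My plan is to verify the identity $\eta^{\mathrm{Q}}_{\theta,\phi}\Id - S_{\theta,\phi} = N_0^\dagger N_0 + \lambda_{\theta,\phi}^2 N_1^\dagger N_1$ by direct expansion in the free $*$-algebra generated by the binary observables $A_0, A_1, B_0, B_1$, using only the canonical relations $A_x^2 = B_y^2 = \Id$ and $[A_x, B_y] = 0$. Once both sides are brought to a normal form over the basis $\{\Id, A_x, B_y, A_x B_y, \{B_0, B_1\}\}$, the lemma reduces to matching coefficients. There is no conceptual obstacle; the main work is bookkeeping and one trigonometric identification.

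First I would expand $N_0^\dagger N_0$. Because $A_0$ commutes with $B_0 + B_1$, using $(B_0+B_1)^2 = 2\Id + \{B_0,B_1\}$ gives $N_0^\dagger N_0 = \bigl(1 + \tfrac{1}{2\cos^2\phi}\bigr)\Id - \tfrac{1}{\cos\phi}A_0(B_0+B_1) + \tfrac{1}{4\cos^2\phi}\{B_0,B_1\}$. For $N_1^\dagger N_1$ the key simplification is that the anticommutator of the two $B$-combinations appearing in $N_1$ vanishes, since $\{B_0+B_1, B_0-B_1\} = 2(B_0^2 - B_1^2) = 0$. Combined with $(B_0\pm B_1)^2 = 2\Id \pm \{B_0,B_1\}$, this normal-orders $N_1^\dagger N_1$ with no $A_0$ contribution, an $A_1(B_0-B_1)$ term of coefficient $-\sin(2\theta)/\sin\phi$, a $(B_0+B_1)$ term of coefficient $-\cos(2\theta)/\cos\phi$, and a $\{B_0,B_1\}$ coefficient equal to $\cos^2(2\theta)/(4\cos^2\phi) - \sin^2(2\theta)/(4\sin^2\phi)$.

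Summing $N_0^\dagger N_0 + \lambda_{\theta,\phi}^2 N_1^\dagger N_1$, the $A_0(B_0+B_1)$, $A_1(B_0-B_1)$, and bare $(B_0+B_1)$ coefficients manifestly match those of $-S_{\theta,\phi}$. The crucial step is that the coefficient of $\{B_0, B_1\}$ becomes
\[ \tfrac{1}{4\cos^2\phi} + \lambda_{\theta,\phi}^2\Bigl(\tfrac{\cos^2(2\theta)}{4\cos^2\phi} - \tfrac{\sin^2(2\theta)}{4\sin^2\phi}\Bigr), \]
which vanishes precisely by the defining relation $1/\lambda_{\theta,\phi}^2 = \sin^2(2\theta)/\tan^2(\phi) - \cos^2(2\theta)$. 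Thus the anticommutator contribution, the only term lying outside the $\{\Id, A_x, B_y, A_x B_y\}$-span of the Bell operator, cancels. A short final calculation, using the same defining relation together with the identity $\sin^2 A - \sin^2 B = \sin(A+B)\sin(A-B)$, shows that the remaining constant term equals $2(1+\lambda_{\theta,\phi}^2) = \eta^{\mathrm{Q}}_{\theta,\phi}$. The range restrictions on $\theta$ and $\phi$ guarantee $\lambda_{\theta,\phi}^2 > 0$, so the right-hand side is a genuine sum of Hermitian squares. The only real obstacle is careful bookkeeping; once the mixed-anticommutator cancellation in $N_1^\dagger N_1$ is noticed, the identity is forced by the choice of $\lambda_{\theta,\phi}^2$.
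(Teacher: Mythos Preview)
Your proposal is correct. The paper does not supply its own proof of this lemma; it is quoted from \cite{Barizien2024}, so there is nothing to compare against beyond noting that your direct expansion in the free $*$-algebra (using $A_x^2=B_y^2=\Id$, $[A_x,B_y]=0$, the vanishing of $\{B_0+B_1,B_0-B_1\}$, and the defining relation for $\lambda_{\theta,\phi}^2$ to kill the $\{B_0,B_1\}$ coefficient) is exactly the standard way one verifies such an SOS identity. One minor remark: the identity $\sin^2 A - \sin^2 B = \sin(A+B)\sin(A-B)$ is not actually needed for the constant term; adding $1$ to both sides of the defining relation $1/\lambda_{\theta,\phi}^2 = \sin^2(2\theta)/\tan^2\phi - \cos^2(2\theta)$ gives $(1+\lambda_{\theta,\phi}^2)/\lambda_{\theta,\phi}^2 = \sin^2(2\theta)/\sin^2\phi$ directly, from which the constant term $=2(1+\lambda_{\theta,\phi}^2)$ follows in one line.
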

Using the decomposition in \cref{lem:tilt}, it was shown in~\cite{Barizien2024} that the inequality $\langle S_{\theta,\phi} \rangle \leq \eta_{\theta,\phi}^{\mathrm{Q}}$ self-tests following state and measurements:
\begin{equation}
\begin{aligned}
\ket{\psi_{\theta}} &= \cos(\theta)\ket{00} + \sin(\theta)\ket{11},\\
    A_{0} &= \sigma_{Z}, \ A_{1} = \sigma_{X}, \\
    B_{y} &= \cos(\phi) \, \sigma_{Z} + (-1)^{y} \sin(\phi) \, \sigma_{X}, \ y \in \{0,1\}, \label{eq:honCHSH}
\end{aligned}
\end{equation}
where $\sigma_{Z},\sigma_{X}$ are the Pauli operators. Notably, by setting $\phi = \mu_{\theta}$ where $\tan(\mu_{\theta}) = \sin(2\theta)$, this family encompasses what are most commonly referred to as ``tilted-CHSH inequalities'' given by the Bell operator
\begin{equation}
    T_{\theta} = \alpha_{\theta} A_{0} \otimes \Id + A_{0}\otimes (B_{0} + B_{1}) + A_{1}\otimes (B_{0} - B_{1}), \label{eq:tilt}
\end{equation}
where $\alpha_{\theta} = 2/\sqrt{1+2\tan^{2}(2\theta)}$ ~\cite{AcinRandomnessNonlocality,YangSelfTest,BampsPironio}. Compared to the SOS decompositions for $T_{\theta}$ from~\cite{BampsPironio}, the decomposition of~\cite{Barizien2024} is expressed in the basis for which our extended pseudo-expectation is well defined (cf. \cref{lem:compSOS}), allowing us to provide bounds on the compiled value of $T_{\theta}$, and more generally the family $S_{\theta,\phi}$.

\begin{theorem}\label{thm:tilt_bound}
Let $\theta \in (0,\pi/4]$, $\phi \in \big( \max\{-2\theta,-\pi + 2\theta\} , \min\{2\theta,\pi - 2\theta\}\big)  \setminus \{0\} $, and let $\langle S_{\theta,\phi} \rangle \leq \eta_{\theta,\phi}^{\mathrm{Q}}$ be the extended tilted-CHSH inequality for the expressions defined in \cref{eq:tiltBop}. Then the maximum quantum value of the corresponding compiled Bell inequality is given by $\eta^{\mathrm{Q}}_{\theta,\phi} + \negl(\lambda)'$, where $\negl(\lambda)'$ is a negligible function of the security parameter.
\end{theorem}

\begin{proof}
    We evaluate the pseudo-expectation on the shifted Bell expression $\bar{S}_{\theta,\phi}$:
    \begin{equation}
        \begin{aligned}
            \pseudo[\bar{S}_{\theta,\phi}] &= \pseudo[N_{0}^{\dagger}N_{0}] + \tau^{2}_{\theta,\phi}    \pseudo[N_{1}^{\dagger}N_{1}],
        \end{aligned}
    \end{equation}
    where we used the decomposition in \cref{lem:tilt}. The polynomial $N_{0}$ is expressed in the basis $\{A_{0},B_{0},B_{1}\}$, and we find by \cref{lem:compSOS} that 
    \begin{equation}
        \pseudo[N_{0}^{\dagger}N_{0}] \geq - \negl(\lambda).
    \end{equation}
    Similarly, $N_{1}$ is expressed in the basis $\{A_{1},B_{0},B_{1}\}$, and we see by \cref{lem:compSOS} that $\pseudo[N_{1}^{\dagger}N_{1}] \geq - \negl(\lambda)$. Putting these together, we obtain 
    \begin{equation}
        \pseudo[\bar{S}_{\theta,\phi}] \geq -\negl(\lambda)(1+\tau^{2}_{\theta,\phi}) =: -\negl(\lambda)',
    \end{equation}
    which implies $\pseudo[S_{\theta,\phi}] \leq \eta^{\mathrm{Q}}_{\theta,\phi} + \negl(\lambda)'$ as desired, where $\pseudo[S_{\theta,\phi}]$ is the expected value of the compiled Bell inequality. 
\end{proof}

\begin{remark}
    The extension of the $S_{\theta,\phi}$ family presented in \cite[Section 3.2.3]{Barizien2024} self-tests the state $\ket{\psi_{\theta}}$ along with the more general measurements
\begin{equation}
\begin{aligned} 
    A_{0} &= \sigma_{Z}, \ A_{1} = \sigma_{X}, \\
    B_{0} &= \cos(\phi) \, \sigma_{Z} +  \sin(\phi) \, \sigma_{X}, \\
    B_{1} &= \cos(\omega) \, \sigma_{Z} +  \sin(\omega) \, \sigma_{X},
\end{aligned}
\end{equation}
for $\phi \in (-2\theta,0)$ and $\omega \in (0,2\theta)$. This family of Bell inequalities can also be compiled under our definition of the pseudo-expectation. This is because each SOS polynomial is given in the basis $\{A_{x},B_{0},B_{1}\}$ for a fixed $x$, and we can apply \cref{lem:compSOS} directly as was done in \cref{thm:tilt_bound}. We omit the explicit proof of this for brevity.      
\end{remark}

\section{Self-testing in the compiled setting}\label{sec:selftest}

Recall that a bipartite (quantum) model $\mathrm{Q}$, consists of a shared state $\ket{\Psi}$, along with local POVM measurements $\{M_{a|x}\}$ and $\{N_{b|y}\}$ for Alice and Bob, respectively. Given a Bell expression $I$, the inequality $I \leq \eta^{Q}$ self-tests an ideal bipartite model $\mathrm{Q}^*$ if any optimal bipartite model is essentially the same as $\mathrm{Q}^*$, modulo some physically irrelevant degrees of freedom. This is more formally stated in terms or the existence of local isometries which maps the employed model to the ideal one. When small errors are permitted, one considers the following definition of robust self-testing.

\begin{definition}[Bipartite self-test]
    The inequality $I \leq \eta^{\mathrm{Q}}$ is a self-test for a bipartite model $\mathrm{Q}^* = \left( \lbrace P_{a|x} \rbrace , \lbrace Q_{b|y} \rbrace, \ket{\phi} \right)$ if there exist a non-negative function $f(\epsilon)$ such that $f(\epsilon) \to 0$ as $\epsilon \to 0$, such that for any bipartite model $\mathrm{Q} =  \left( \lbrace M_{a|x} \rbrace , \lbrace N_{b|y} \rbrace, \ket{\Psi} \right) $ achieving $I \geq \eta^{\mathrm{Q}} - \epsilon$ for $\epsilon \geq 0$, there exists a Hilbert space $\mathcal{H}_{\mathrm{aux}}$, an auxiliary state $\ket{\zeta} \in \mathcal{H}_{\mathrm{aux}}$ and local isometries $V_A$ and $V_B$, such that defining $V:\mathcal{H}_{A} \otimes \mathcal{H}_{B} \to \mathbb{C}^{d} \otimes \mathbb{C}^{d} \otimes \mathcal{H}_{\mathrm{aux}}$, $V = V_{A} \otimes V_{B}$, the following is satisfied for all $x,y,a,b$:
\begin{equation*}
        \big \| V_A \otimes V_B (M_{a|x} \otimes N_{b |y} ) \ket{\Psi} - (P_{a|x} \otimes Q_{b|y})\ket{\phi} \otimes \ket{\zeta} \big \| \leq f(\epsilon).
\end{equation*}
\end{definition}

In the bipartite setting, one could consider the situation where Alice measures first using a POVM $\{P_{a|x}\}$, collapsing the state to a post-measurement state $\rho_{a|x}$ on Bob's subsystem $\mcH_B$, upon which Bob performs his measurement, resulting in the application of the POVM element $Q_{b|y}$. With this in mind, we consider the setting where the only relevant features of the model are those from Bob's (resp. Alice's) perspective. In particular, subsystem $A$ is traced out following the recorded measurement of outcome of $a$ given $x$.

\begin{definition}[Partial model]
    Given a bipartite model $\mathrm{Q} = \left( \lbrace M_{a|x} \rbrace , \lbrace N_{b|y} \rbrace, \ket{\Psi} \right)$, we define the partial model of $\mathrm{Q}$ by $\mathrm{Q}' = (\{N_{b|y}\}, \{\rho_{a|x}\})$ where
    \begin{equation}
        \rho_{a|x} = \tr_{A}[(M_{a|x} \otimes \Id_{B})\ketbra{\Psi}{\Psi}].
    \end{equation}
    We note that $\rho_{a|x}$ will generally be mixed. When each $\rho_{a|x}$ is pure, we say that $\mathrm{Q}$ has a pure partial model, denoted by $\mathrm{Q}' = (\{N_{b|y}\}, \{\ket{\phi_{a|x}}\})$.
\end{definition}

Symmetrically, given a bipartite model one can consider a (pure) partial model on $\mcH_A$ by tracing out subsystem $B$. However, because our motivation is the compiled setting, we will focus on the partial models on $\mcH_B$. Furthermore, we remark that the notion of pure partial models is not vacuous. In particular, the optimal bipartite model for the CHSH inequality has a pure partial model on $\mcH_B$ \cite{CHSH}. With the notion of a partial quantum model, we define the notion of a partial (or one-sided) self-test for a bipartite model.

\begin{definition}[Partial self-test]\label{def:weakst}
    The inequality $I \leq \eta^{\mathrm{Q}}$ is a partial self-test for a bipartite model $\mathrm{Q}^*= \left( \lbrace P_{a|x} \rbrace , \lbrace Q_{b|y} \rbrace, \ket{\phi} \right)$ with a pure partial model $\left( \lbrace Q_{b|y} \rbrace, \lbrace \ket{\phi_{a|x}}\rbrace \right)$ if there exists a non-negative function $f(\epsilon)$ such that $f(\epsilon) \to 0$ as $\epsilon \to 0$, such that for any partial quantum model $\mathrm{Q} =  \left( \lbrace N_{b|y} \rbrace, \lbrace \rho_{a|x} \rbrace \right) $ achieving $I \geq \eta^{\mathrm{Q}} - \epsilon$ for $\epsilon \geq 0$, there exist  a Hilbert space $\mathcal{H}_{\mathrm{aux}}$, a collection of auxiliary states $\{\sigma_{a|x}\}$ and an isometry $V:\mcH_B\to \mathbb{C}^{d} \otimes \mathcal{H}_{\mathrm{aux}}$ such that the following is satisfied for all $x,y,a,b$:
\begin{equation*}
\begin{aligned}
        \big \| V N_{b |y}\rho_{a|x} N_{b|y} V^{\dagger} - Q_{b|y}\ketbra{\phi_{a|x}}{\phi_{a|x}}Q_{b|y} \otimes \sigma_{a|x} \big \|_{2} &\leq f(\epsilon) \\ \text{and} \quad
        \big \| V \rho_{a|x} V^{\dagger} - \ketbra{\phi_{a|x}}{\phi_{a|x}} \otimes \sigma_{a|x} \big \|_{2} &\leq f(\epsilon), \hspace{0.2cm} 
\end{aligned}
\end{equation*}
\end{definition}

Give the symmetry of $\mcH_A$ and $\mcH_B$ in the bipartite case, one can define a notion of partial self-test for either subsystem. Given a bipartite self-test, one can check that tracing out either subsystem results in a partial self-test. We leave it as an open question as to whether a partial self-test (say over $\mcH_A$ and over $\mcH_B$) implies that the correlation is a bipartite self-test.

\subsection{Compiled self-tests from partial models}
\label{sec:comp_self_test}

 There are two main difficulties with self-testing in the compiled setting. Firstly, the \emph{correctness with respect to auxiliary systems} property of the compiler (see \emph{Property (1)} in \cref{def_qhe}) only guarantees that a QPT prover can prepare states (possibly mixed) $\rho_{a|x}$ over $\mcH_B$ that are negligible in trace distance from the post measurement states $P_{a|x}\ketbra{\Psi}{\Psi}P_{a|x}/p(a|x)$ of the ideal bipartite model $\mathrm{Q}$. This puts a fundamental constraint on our ability to exactly describe the set of ideal models in the compiled setting. Secondly, unlike in the nonlocal setting, it is not clear how to extract information about the measurements and states in the first round due to the homomorphic evaluation of the measurements and preparation of the states. To address these challenges we introduce the compiled counter-part of a partial quantum model.

 Recall that a compiled (quantum) model $\widetilde{\mathrm{Q}}$ consists of a family of post-measurement states for ``Alice'' $\ket{\widetilde{\phi}_{\alpha | \chi}}$, which correspond to the state of the device following the encrypted question $\chi$, and encrypted answer $\alpha$, and a POVM $\{N_{b|y}\}$ employed by ``Bob''. One could also consider a more general compiled quantum model, which includes a description of the initial state and Alice's operators. The point of taking the coarser model is that it allows us to introduce the notion of the \emph{compiled-counterpart} of a bipartite model $\mathrm{Q}$, which relates the post-measurement information in the bipartite setting with another bipartite model that resembles a compiled model.

\begin{definition}[Compiled-counterpart model]\label{Def:comp_counter)}
    Given a pure partial model $\mathrm{Q}'$, the compiled-counterpart model of $\mathrm{Q}'$ is the pure partial model $\widetilde{\mathrm{Q}}^{(\lambda)}= ( \{\ket{\widetilde{\phi}_{\alpha | \chi}^{(\lambda)}} \}, \{Q_{b|y}^{(\lambda)}\})$
    satisfying the following conditions for all $\lambda\in \N$:
\begin{align*}
   \ket{\widetilde{\phi}_{\alpha | \chi}^{(\lambda)}} &= \ket{\phi_{a|x}}, \ \text{ for all } \ \sk: \gen(1^\lambda)=\sk , \ \chi : \enc(x,\sk) = \chi,  \ \alpha : \dec(\alpha,\sk) = a.\\ N_{b|y}^{(\lambda)} &= Q_{b|y}, \ \text{ for all } b, y.
\end{align*}
\end{definition}

We remark that the compiled counterpart need not be an actual compiled model. For example, it is not required to satisfy the QPT conditions needed of a compiled model. Instead it is a model that resembles an idealized version of an honest implementation of a partial model under homomorphic encryption. We proceed with a definition of self-testing in the compiled setting that resembles partial self-testing in the bipartite setting in the context of these compiled-counterparts.

\begin{definition}[Compiled self-test]\label{def:compst}
    Let $I$ denote a Bell expression with an optimal pure partial model $\mathrm{Q}^*$. The inequality $I \leq \eta^{\mathrm{Q}}$ is a compiled self-test for the corresponding compiled-counterpart $\widetilde{\mathrm{Q}}^* = ( \lbrace \ket{\widetilde{\phi}_{\alpha|\chi}} \rbrace , \lbrace Q_{b|y} \rbrace )$, if there exists a non-negative function $f(\epsilon)$ such that $f(\epsilon) \to 0$ as $\epsilon \to 0$, such that for every pure and projective compiled model $\widetilde{\qmodel} = \left( \lbrace \ket{\Psi_{\alpha|\chi}} \rbrace , \lbrace N_{b|y} \rbrace \right)$ that achieves $I \geq \eta^{\mathrm{Q}} - \epsilon$ for some $\epsilon \geq 0$, there exists a negligible function $\negl(\lambda)$, an isometry $V:\tilde{\mathcal{H}} \to \mathbb{C}^{d} \otimes \mathcal{H}_{\mathrm{aux}}$, and auxiliary states $\ket{\mathsf{aux}_{\alpha|\chi}} \in \mathcal{H}_{\mathrm{aux}}$, which satisfy the following for all $x,b, y$:
 \begin{subequations}
 \begin{eqnarray}
            \expect_{\chi : \enc(x) = \chi}\sum_{\alpha}\big \| V  \ket{\Psi_{\alpha|\chi}} - \ket{\widetilde{\phi}_{\alpha | \chi}} \otimes \ket{\mathsf{aux}_{\alpha|\chi}} \big \|^{2} \leq \negl(\lambda) + f(\epsilon), \hspace{0.2cm} \text{and} \label{eq:stdef_a}\\
            \expect_{\chi : \enc(x) = \chi}\sum_{\alpha}\big \| V  N_{b|y} \ket{\Psi_{\alpha|\chi}} - Q_{b|y}\ket{\widetilde{\phi}_{\alpha | \chi}} \otimes \ket{\mathsf{aux}_{\alpha|\chi}} \big \|^{2} \leq \negl(\lambda) + f(\epsilon). \label{eq:stdef_b}
\end{eqnarray} \label{eq:stdef}
\end{subequations} \label{def:compst_inf}
\end{definition}

Equation \eqref{eq:stdef_a} is a statement about the provers state after the first round. It asserts that, given a question $x$ and answer $a$, the post-measurement state is negligibly close to that of an ideal prover implementing the honest bipartite model. To see this concretely, suppose the right hand side was exactly equal to zero. Then we have the equality $V\ket{\Psi_{\alpha|\chi}} = \ket{\widetilde{\phi}_{\alpha | \chi}} \otimes \ket{\mathsf{aux}_{\alpha|\chi}}$ for all $\chi$ such that $\enc(x) = \chi$ and all $\alpha$. Substituting $\ket{\widetilde{\phi}_{\alpha | \chi}}$ for the states $\ket{\phi_{a|x}}$ from Definition \ref{Def:comp_counter)}, we obtain
\begin{equation}
    V\ket{\Psi_{\alpha|\chi}} = \ket{\phi_{a|x}}\otimes \ket{\mathsf{aux}_{\alpha|\chi}} \label{eq:example_s}
\end{equation}
whenever $\enc(x) = \chi$ and $\dec(\alpha) = a$. That is, the post-measurement states are equal to the target states up an isometry. Therefore, we interpret \eqref{eq:stdef_a} as an approximate version of \cref{eq:example_s}, which accounts for a finite size security parameter $\lambda$ and small errors in the Bell violation $\epsilon$. Equation \eqref{eq:stdef_b} is the analogous statement including the measurements in the second round. We remark that if $V$ could depend on the question $x$ and answer $a$, \eqref{eq:stdef_a} would trivially hold regardless of the compiled Bell violation, since the states $\ket{\phi_{a|x}}$ could be prepared directly. It is therefore essential to enforce the same isometry is applied for all $a$ and $x$. Furthermore, \eqref{eq:stdef_b} captures several existing self-testing results in the compiled setting. For example those presented in~\cite[Lemma 34]{NZ23},~\cite[Theorem 3.6]{Cui24} and~\cite[Eqs. 98 and 103]{Baroni24}. Our proposed definition then goes further by also certifying the states after the first round but before Bob's measurements, as captured by \eqref{eq:stdef_a}. 

It is natural to ask if Definition \ref{def:compst} is the strongest form of self-testing possible in this scenario, or if one can also certify the initial state $\ket{\Psi}$ before Alice's measurements. An initial guess would be to show there exists an isometry $V$ satisfying
\begin{equation}
    V\ket{\Psi} \approx_{\negl(\lambda)} \ket{\phi} \otimes \ket{\mathsf{aux}},
\end{equation}
where $\ket{\phi}$ is the ideal bipartite entangled state. However, on its own this statement is not very useful: such an isometry always exists, namely, one which ignores $\ket{\Psi}$ and prepares $\ket{\phi}$ directly. A possible way around is to demand the same $V$ also satisfies \eqref{eq:stdef}. At a glance, this suggests certifying the initial state alone is not meaningful in the single prover setting; one always needs to also consider the measurements. This contrasts the two prover setting, where self-testing statements made only about the state are known~\cite{SupicSelfTest} and non-trivial due to the space-like separation of the provers. Another question worth asking is if the assumption of having a pure projective models $\widetilde{\qmodel}$ can be relaxed in the definition \cref{def:compst}.

\subsection{Compiled self-test for tilted-CHSH inequalities}

Our final result is that the extended tilted-CHSH Bell inequalities are compiled self-tests according to \cref{def:compst}. In particular, we have the following result.

\begin{theorem}\label{thm:tilt_st}
Let $\theta \in (0,\pi/4]$, $\phi \in \big( \max\{-2\theta,-\pi + 2\theta\} , \min\{2\theta,\pi - 2\theta\}\big)  \setminus \{0\} $, and let $I_{\theta,\phi}$ be the generalized tilted-CHSH functional with quantum bound $ \eta^{\mathrm{Q}}_{\theta,\phi}$ according to \cref{eq:tiltBop}. Then the inequality $I_{\theta,\phi} \leq \eta^{\mathrm{Q}}_{\theta,\phi}$ is a compiled self-test for the compiled-counter part of \cref{eq:honCHSH} according to \cref{def:compst}.
\end{theorem}

\noindent The proof is reminiscent of the approach in \cite{BampsPironio}, and includes similar calculations to those used in~\cite{NZ23,Baroni24} which establish rigidity statements in the compiled setting. The remainder of this section is devoted to proving the above theorem. Recall the SOS decomposition for $\bar{S}_{\theta,\phi}$, given by the polynomials $N_{0},N_{1}$ defined in \cref{eq:SOSpoly}. Then for compiled models $I_{\theta,\phi} = \pseudo[S_{\theta,\phi}] \geq \eta^{\mathrm{Q}}_{\theta,\phi}  - \epsilon$, for some $\epsilon \geq 0$, implies $\pseudo[\bar{S}_{\theta,\phi}] = \eta^{\mathrm{Q}}_{\theta,\phi} - \pseudo[S_{\theta,\phi}] \leq \epsilon$. Using the calculation in \cref{eq:bigCalc} with $P = N_{0}$ and $P = N_{1}$ respectively, we find
    \begin{equation}
    \begin{aligned}
        \pseudo[N_{0}^{\dagger}N_{0}] &\approx_{\negl(\lambda)} \expect_{\chi:\enc(x=0)=\chi} \sum_{\alpha}\bra{\Psi_{\alpha|\chi}}\Big|(-1)^{\dec(\alpha)}\Id - Z_{B}\Big|^{2}\ket{\Psi_{\alpha|\chi}} =: \hat{E}_{0}, \\
        \pseudo[N_{1}^{\dagger}N_{1}] &\approx_{\negl(\lambda)} \expect_{\chi:\enc(x=1)=\chi} \sum_{\alpha}\bra{\Psi_{\alpha|\chi}}\Big|\Id - (-1)^{\dec(\alpha)}\sin(2\theta)X_{B} - \cos(2\theta)Z_{B}\Big|^{2}\ket{\Psi_{\alpha|\chi}} =: \hat{E}_{1},
    \end{aligned}
    \end{equation}
    where we defined $Z_{B} := (B_{0} + B_{1})/(2\cos(\phi))$ and $X_{B} := (B_{0} - B_{1})/(2\sin(\phi))$, and note that $\hat{E}_{i} \geq 0$ for $i=0,1$. This implies $\pseudo[N_{i}^{\dagger}N_{i}] \geq \hat{E}_{i} - \negl(\lambda)$, from which we observe that 
    \begin{equation}
        \hat{E}_{0} + \tau^{2}_{\theta,\phi} \hat{E}_{1} \leq \pseudo[N_{0}^{\dagger}N_{0}] + \tau^{2}_{\theta,\phi} \pseudo[N_{1}^{\dagger}N_{1}] + \negl(\lambda)' = \pseudo[\bar{S}_{\theta,\phi}] + \negl(\lambda)' \leq \epsilon + \negl(\lambda)',
    \end{equation}
    where $\negl(\lambda)' = (1+\tau^{2}_{\theta,\phi})\negl(\lambda)$. Since $\hat{E}_{i} \geq 0$, we arrive at 
    \begin{equation}
        \hat{E}_{0} \leq \epsilon + \negl(\lambda)' \ \ \mathrm{and} \ \ \tau^{2}_{\theta,\phi} \hat{E}_{1} \leq \epsilon + \negl(\lambda)'. \label{eq:sosConst}
    \end{equation}

    The next part of the proof proceeds as follows: using each constraint in \cref{eq:sosConst}, we deduce the (approximate) algebraic structure of the operators $Z_{B}$ and $X_{B}$ on the states $\ket{\Psi_{\alpha|\chi}}$. Namely, we establish anti-commutation relations and approximate unitary relations. We break the proof up into two sections: extracting the states, and extracting the measurements. 

    \paragraph*{Extracting the states}

    We begin by studying the relations imposed by $\hat{E}_{0} \leq \epsilon + \negl(\lambda)'$. Define $\delta_{0} := \epsilon + \negl(\lambda)'$. We begin with several claims:
    \begin{claim}
        $\hat{E}_{0} = \expect_{\chi:\enc(x=0)=\chi} \sum_{\alpha}\big \| \big((-1)^{\dec(\alpha)}\Id - Z_{B}\big)\ket{\Psi_{\alpha|\chi}}\big\|^{2} \leq \delta_{0}$. \label{claim:1}
    \end{claim}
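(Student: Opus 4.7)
The claim is a two-part statement: an equality recasting $\hat{E}_0$ as an expectation of squared norms, followed by an upper bound of $\delta_0$. My plan is to dispatch both parts immediately, since each follows from material already on the page above the claim, with essentially no new content beyond a notational rewrite.

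For the equality, I would unpack the paper's shorthand $|X|^2 = X^{\dagger}X$ (as used, for instance, in the final line of the proof of \cref{lem:compSOS}) and invoke the standard norm identity $\bra{v}X^{\dagger}X\ket{v} = \|X\ket{v}\|^{2}$ with the choice $X = (-1)^{\dec(\alpha)}\Id - Z_{B}$ for each $\alpha$. Substituting this into the defining expression of $\hat{E}_0$ displayed immediately before the claim, and keeping the outer sum over $\alpha$ and expectation over ciphertexts $\chi$ satisfying $\enc(0) = \chi$ intact, reproduces the squared-norm expression on the right-hand side of the displayed equation in the claim. No positivity or Hermiticity argument is actually required, only the definition of the adjoint and the vector norm.

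For the inequality, the first bound in \cref{eq:sosConst} -- derived one line above the claim from the SOS decomposition $\bar{S}_{\theta,\phi} = N_{0}^{\dagger}N_{0} + \lambda_{\theta,\phi}^{2} N_{1}^{\dagger}N_{1}$, the assumption $\pseudo[S_{\theta,\phi}] \geq \eta^{\mathrm{Q}}_{\theta,\phi} - \epsilon$, and \cref{Main:ExtendedSOS} -- already asserts $\hat{E}_{0} \leq \epsilon + \delta'_{\mathsf{QHE}}$. The definition $\delta_{0} := \epsilon + \delta'_{\mathsf{QHE}}$ stated just above the claim then closes the argument.

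I do not anticipate any obstacle: the claim is a packaging step that recasts the previously derived SOS bound into a squared-norm form. This form is precisely what later claims in the extraction-of-states part of the proof will need, so that triangle inequalities can be applied directly to vectors $\bigl((-1)^{\dec(\alpha)}\Id - Z_{B}\bigr)\ket{\Psi_{\alpha|\chi}}$ rather than to matrix elements.
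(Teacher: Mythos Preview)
Your proposal is correct and matches the paper's treatment: the paper states the claim without proof, precisely because both the equality (a norm rewrite of the definition of $\hat{E}_0$) and the inequality (the first bound in \cref{eq:sosConst} together with $\delta_0 := \epsilon + \delta'_{\mathsf{QHE}}$) are immediate from what precedes it. Your unpacking of these two steps is exactly the intended reading.
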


    \begin{claim}
    $\expect_{\chi:\enc(x=0)=\chi} \sum_{\alpha}\big \| \big(\Id - (Z_{B})^{2}\big)\ket{\Psi_{\alpha|\chi}}\big\|^{2} \leq \delta_{0}\big( 1 + 1/\cos(\phi)\big)^{2} =: \delta_{1}$. \label{claim:2}
\end{claim}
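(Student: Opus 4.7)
The plan is to derive Claim 2 from Claim 1 by factorizing the operator $\Id - Z_B^2$ in a way that isolates the quantity already controlled by Claim 1. Writing $a = \dec(\alpha)$ for brevity, the crucial algebraic identity is
\begin{equation*}
\Id - Z_B^2 = \bigl((-1)^a \Id + Z_B\bigr)\bigl((-1)^a \Id - Z_B\bigr),
\end{equation*}
valid because $((-1)^a)^2 = 1$ and the two factors commute (both are polynomials in $Z_B$). This rewrites $(\Id - Z_B^2)\ket{\Psi_{\alpha|\chi}}$ as the ``amplifier'' $(-1)^a \Id + Z_B$ applied to the vector $\bigl((-1)^a \Id - Z_B\bigr)\ket{\Psi_{\alpha|\chi}}$, whose squared norm summed over $\alpha$ and averaged over $\chi$ is exactly $\hat{E}_0 \leq \delta_0$ by Claim 1.

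Concretely, I would first substitute the factorization and then exploit the operator-norm estimate $\|Z_B\|_{\mathrm{op}} \leq 1/\cos(\phi)$, which follows from $Z_B = (B_0 + B_1)/(2\cos(\phi))$ together with $\|B_y\|_{\mathrm{op}} \leq 1$. Passing to squared norms and pulling the amplifier out using its operator norm yields, for each $\alpha, \chi$,
\begin{equation*}
\bigl\|(\Id - Z_B^2)\ket{\Psi_{\alpha|\chi}}\bigr\|^2 \leq C(\phi) \cdot \bigl\|((-1)^a \Id - Z_B)\ket{\Psi_{\alpha|\chi}}\bigr\|^2
\end{equation*}
for a $\phi$-dependent constant $C(\phi)$. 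Summing over $\alpha$, averaging over $\chi$, and invoking Claim 1 then delivers the target bound $\expect_{\chi:\enc(x=0)=\chi}\sum_\alpha \bigl\|(\Id - Z_B^2)\ket{\Psi_{\alpha|\chi}}\bigr\|^2 \leq C(\phi)\,\delta_0 = \delta_1$.

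The main technical obstacle is matching the sharp constant $C(\phi) = 1 + 1/(2\cos^2(\phi))$ asserted by the claim. A direct operator-norm bound of the amplifier gives only the weaker $C(\phi) = (1 + 1/\cos(\phi))^2 = 1 + 2/\cos(\phi) + 1/\cos^2(\phi)$, which is of the correct asymptotic order $O(1 + 1/\cos^2(\phi))$ but strictly larger than the claimed constant. To tighten it, I would combine the AM-GM operator inequality $((-1)^a \Id + Z_B)^2 \leq 2\Id + 2 Z_B^2$ (which follows from $\pm 2T \leq \Id + T^2$ applied to the self-adjoint operator $T = (-1)^a Z_B$) with the structural identity $\cos^2(\phi)\,Z_B^2 + \sin^2(\phi)\,X_B^2 = \Id$, obtained from $(B_0+B_1)^2 + (B_0-B_1)^2 = 4\Id$ after dividing by $4\cos^2(\phi)$ and $4\sin^2(\phi)$ respectively. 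Applying this identity to the vector $\bigl((-1)^a \Id - Z_B\bigr)\ket{\Psi_{\alpha|\chi}}$ and using the positivity of $X_B^2$ should yield the sharp constant; in any case the weaker bound already gives $\delta_1 = O(\delta_0)$, which is all that is needed for the subsequent self-testing argument.
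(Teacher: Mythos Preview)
Your approach is essentially the paper's: both factorize $\Id - Z_B^2 = \bigl((-1)^{\dec(\alpha)}\Id + Z_B\bigr)\bigl((-1)^{\dec(\alpha)}\Id - Z_B\bigr)$, bound the first factor via an operator-norm estimate on $Z_B$, and then invoke Claim~1. The paper reaches the stated constant by splitting the amplifier as $(-1)^{\dec(\alpha)}\Id + Z_B$ through the triangle inequality together with $\|Z_B\|_{\mathrm{op}}^2 \leq 1/(2\cos^2\phi)$; your caution about matching that exact constant is reasonable, and your observation that any $O(\delta_0)$ bound suffices downstream is the operative point.
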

\begin{proof}
    Note that
    \begin{equation*}
        \begin{aligned}
            \big \| \big(\Id - (Z_{B})^{2}\big)\ket{\Psi_{\alpha|\chi}}\big\|
            &\leq \big \| Z_{B}\big((-1)^{\dec(\alpha)}\Id - Z_{B}\big) \ket{\Psi_{\alpha|\chi}} \big \| \\
            & \ \ \ \ + \big \|  \big( (-1)^{\dec(\alpha)}\Id - Z_{B}\big)\ket{\Psi_{\alpha|\chi}}\big\| \\
            &\leq (1+\| Z_{B} \|_{\mathrm{op}}) \big \| \big((-1)^{\dec(\alpha)}\Id - Z_{B}\big) \ket{\Psi_{\alpha|\chi}} \big \|.
        \end{aligned}
    \end{equation*}
    In the first inequality, we used the triangle inequality, and for the second inequality, we used the definition of the operator norm $\|A\ket{\psi}\| \leq \| A \|_{\mathrm{op}} \| \ket{\psi} \|, \ \ \forall\; \ket{\psi} \in \mathcal{H}$. By the definition of $Z_{B}$, $\| Z_{B} \|_{\mathrm{op}} \leq (1/(2\cos(\phi)))( \|B_{0}\|_{\mathrm{op}} + \|B_{1}\|_{\mathrm{op}}) = 1/\cos(\phi)$. Squaring both sides, taking the expectation over $\chi:\enc(x=0)=\chi$ and summing over $\alpha$, we can apply \cref{claim:1} to get the desired bound. 
\end{proof}
\begin{claim}
    $\expect_{\chi:\enc(x=0)=\chi} \sum_{\alpha}\big \| \big(2\cos(2\phi)\Id - \{B_{0},B_{1}\}\big)\ket{\Psi_{\alpha|\chi}}\big\|^{2} \leq 16\cos^{4}(\phi)\delta_{1} =: \delta_{2}$. \label{claim:3}
\end{claim}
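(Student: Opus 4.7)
The plan is to prove \cref{claim:3} by a direct algebraic identity that converts the ``unitarity defect'' bound from \cref{claim:2} into an anticommutator bound, with no analytic work required beyond simple arithmetic with trigonometric identities.

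First I would expand $(Z_B)^2$ using the definition $Z_B = (B_0+B_1)/(2\cos(\phi))$ together with the fact that $B_0,B_1$ are binary observables, so $B_0^2 = B_1^2 = \Id$. This gives
\begin{equation*}
(Z_B)^2 = \frac{(B_0+B_1)^2}{4\cos^2(\phi)} = \frac{2\Id + \{B_0,B_1\}}{4\cos^2(\phi)}.
\end{equation*}
Rearranging, and using the identity $4\cos^2(\phi) - 2 = 2\cos(2\phi)$, I obtain the key operator identity
\begin{equation*}
\Id - (Z_B)^2 = \frac{2\cos(2\phi)\Id - \{B_0,B_1\}}{4\cos^2(\phi)},
\end{equation*}
equivalently $2\cos(2\phi)\Id - \{B_0,B_1\} = 4\cos^2(\phi)\bigl(\Id - (Z_B)^2\bigr)$.

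Applying this identity to the vector $\ket{\Psi_{\alpha|\chi}}$ and squaring the norm pulls out the scalar $4\cos^2(\phi)$ as $16\cos^4(\phi)$, so
\begin{equation*}
\bigl\|(2\cos(2\phi)\Id - \{B_0,B_1\})\ket{\Psi_{\alpha|\chi}}\bigr\|^2 = 16\cos^4(\phi)\,\bigl\|(\Id - (Z_B)^2)\ket{\Psi_{\alpha|\chi}}\bigr\|^2.
\end{equation*}
Taking the expectation over $\chi:\enc(x=0)=\chi$ and summing over $\alpha$ on both sides, the right-hand side is at most $16\cos^4(\phi)\,\delta_1$ by \cref{claim:2}, yielding the claimed bound with $\delta_2 := 16\cos^4(\phi)\delta_1$.

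There is essentially no obstacle here: everything is exact algebra, with the only non-trivial ingredient being the trig identity $4\cos^2(\phi)-2=2\cos(2\phi)$. The reason \cref{claim:2} is stated the way it is becomes clear in hindsight: its operator $\Id-(Z_B)^2$ is, up to a nonzero scalar that is bounded away from zero whenever $\cos(\phi)\neq 0$, exactly the anticommutator defect $2\cos(2\phi)\Id-\{B_0,B_1\}$ being controlled here. This step is the standard first move in extracting anticommutation on the post-measurement states, and it will be used together with the analogous bound derived from $\hat{E}_1$ later in the proof to obtain the full anticommutation relation $\{Z_B,X_B\}\approx 0$ needed for the isometry construction in \cref{Main:test}.
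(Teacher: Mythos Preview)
Your proof is correct and follows essentially the same approach as the paper: both expand $(Z_B)^2$ to obtain the operator identity $\Id-(Z_B)^2=\bigl(2\cos(2\phi)\Id-\{B_0,B_1\}\bigr)/(4\cos^2(\phi))$, then take norms, sum, and invoke \cref{claim:2}. One minor remark on your closing commentary: $\{Z_B,X_B\}=0$ holds exactly by definition, and the later approximate anticommutation that requires work is for the regularized operators $\{\tilde{Z}_B,\tilde{X}_B\}$; \cref{claim:3} is in fact used next to bound $\|(\Id-(X_B)^2)\ket{\Psi_{\alpha|\chi}}\|^2$.
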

\begin{proof}
    Note that $(Z_{B})^{2} = (2 \Id + \{B_{0},B_{1}\})/(4\cos^{2}(\phi))$, and therefore
    \begin{equation*}
        \begin{aligned}
            \big \| \big( \Id - (Z_{B})^{2} \big) \ket{\Psi_{\alpha |\chi}} \big \|^{2} &= \big \| \big( \Id - (2 \Id + \{B_{0},B_{1}\})/(4\cos^{2}(\phi)) \big) \ket{\Psi_{\alpha |\chi}} \big \|^{2} \\
            &= \frac{1}{16\cos^{4}(\phi)}\big \| \big( 2\cos(2\phi)\Id -\{B_{0},B_{1}\} \big) \ket{\Psi_{\alpha |\chi}} \big \|^{2}.
        \end{aligned}
    \end{equation*}
    Taking the expectation over $\chi:\enc(x=0)=\chi$ and summing over $\alpha$, followed by applying \cref{claim:2}, establishes the desired bound. 
\end{proof}
\begin{claim}
    $\expect_{\chi:\enc(x=0)=\chi} \sum_{\alpha}\big \| \big(\Id - (X_{B})^{2}\big)\ket{\Psi_{\alpha|\chi}}\big\|^{2} \leq  \delta_{1}/\tan^{4}(\phi) =: \delta_{3}$.
\end{claim}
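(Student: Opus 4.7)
The plan is to mirror the derivation that produced Claims 2--3, but now targeted at $X_B$ rather than $Z_B$, using the anticommutator bound of the preceding claim as input. The only algebraic identity I need is $(X_B)^2 = (2\Id - \{B_0,B_1\})/(4\sin^2(\phi))$, which is immediate from $B_0^2 = B_1^2 = \Id$ together with the definition $X_B = (B_0-B_1)/(2\sin(\phi))$.

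First I would rewrite $\Id - (X_B)^2$ in terms of the anticommutator. Using the identity above,
\begin{equation*}
\Id - (X_B)^2 \;=\; \frac{4\sin^2(\phi)\Id - 2\Id + \{B_0,B_1\}}{4\sin^2(\phi)} \;=\; -\,\frac{2\cos(2\phi)\Id - \{B_0,B_1\}}{4\sin^2(\phi)},
\end{equation*}
where I used the double-angle identity $2 - 4\sin^2(\phi) = 2\cos(2\phi)$. Hence, for each $\alpha,\chi$,
\begin{equation*}
\big\|\big(\Id - (X_B)^2\big)\ket{\Psi_{\alpha|\chi}}\big\|^2 \;=\; \frac{1}{16\sin^4(\phi)}\,\big\|\big(2\cos(2\phi)\Id - \{B_0,B_1\}\big)\ket{\Psi_{\alpha|\chi}}\big\|^2.
\end{equation*}

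Next I would take the expectation over $\chi:\enc(x{=}0)=\chi$ and sum over $\alpha$, and invoke the immediately preceding claim (the bound by $\delta_2 = 16\cos^4(\phi)\delta_1$ on the anticommutator deviation). This yields
\begin{equation*}
\expect_{\chi:\enc(x=0)=\chi}\sum_{\alpha}\big\|\big(\Id - (X_B)^2\big)\ket{\Psi_{\alpha|\chi}}\big\|^2 \;\leq\; \frac{\delta_2}{16\sin^4(\phi)} \;=\; \frac{16\cos^4(\phi)\,\delta_1}{16\sin^4(\phi)} \;=\; \frac{\delta_1}{\tan^4(\phi)} \;=\; \delta_3,
\end{equation*}
which is exactly the asserted bound.

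There is no real obstacle here: the claim is a purely algebraic corollary of the anticommutator estimate from Claim 3, trading the $\cos^2(\phi)$ normalization of $Z_B$ for the $\sin^2(\phi)$ normalization of $X_B$. The only small subtlety worth flagging is that the expectation is still taken over ciphertexts of $x=0$ (not $x=1$, which might otherwise be the natural setting to analyze $X_B$), but this is fine because the anticommutator deviation of Claim 3 is controlled in precisely this ensemble; transferring it to the $x=1$ ensemble would require an additional application of \cref{lem:QHE1} and is not needed for the present claim.
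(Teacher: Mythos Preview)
Your proof is correct and essentially identical to the paper's: both expand $(X_B)^2$ via the anticommutator, rewrite $\Id - (X_B)^2 = -(2\cos(2\phi)\Id - \{B_0,B_1\})/(4\sin^2(\phi))$, and then apply Claim~3 to pick up the factor $\delta_2/(16\sin^4(\phi)) = \delta_1/\tan^4(\phi)$. The closing remark about the $x=0$ ensemble is accurate and consistent with the paper's implicit treatment.
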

\begin{proof}
    Noting $(X_{B})^{2} = (2 \Id - \{B_{0},B_{1}\})/(4\sin^{2}(\phi))$,
    \begin{equation*}
        \begin{aligned}
            \big \| \big( \Id - (X_{B})^{2} \big) \ket{\Psi_{\alpha |\chi}} \big \|^{2} &= \big \| \big( \Id - (2 \Id - \{B_{0},B_{1}\})/(4\sin^{2}(\phi)) \big) \ket{\Psi_{\alpha |\chi}} \big \|^{2} \\
            &= \frac{1}{16\sin^{4}(\phi)}\big \| \big( 2\cos(2\phi)\Id -\{B_{0},B_{1}\} \big) \ket{\Psi_{\alpha |\chi}} \big \|^{2}.
        \end{aligned}
    \end{equation*}
    We obtain the claim by taking the expectation over $\chi:\enc(x=0)=\chi$, summing over $\alpha$ and applying \cref{claim:3}.
\end{proof}

To define the isometry, we need to define unitary operators $\tilde{Z}_{B}$ and $\tilde{X}_{B}$ from $Z_{B}$ and $X_{B}$, which maintain the above relations. For this, we follow the \emph{regularization} procedure of~\cite{YangSelfTest,MYS12,BampsPironio}: define $Z_{B}^{*}$ from $Z_{B}$ by changing all zero eigenvalues of $Z_{B}$ to 1, and define the operator $\tilde{Z}_{B} := Z_{B}^{*} |Z_{B}^{*}|^{-1}$. Then $\tilde{Z}_{B}$ is unitary, self-adjoint, commutes with $Z_{B}$ and satisfies $\tilde{Z}_{B}Z_{B} = |Z_{B}|$. We now establish the following:
\begin{claim}
    $\expect_{\chi:\enc(x=0)=\chi} \sum_{\alpha}\big \| \big(\tilde{Z}_{B} - Z_{B}\big)\ket{\Psi_{\alpha|\chi}}\big\|^{2} \leq \delta_{0}$. \label{claim:5}
\end{claim}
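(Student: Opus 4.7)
The plan is to deduce this claim from \cref{claim:1} by the standard \emph{regularization} argument, reducing everything to a pointwise inequality on the spectrum of $Z_{B}$. By construction, $\tilde{Z}_{B} = Z_{B}^{*}|Z_{B}^{*}|^{-1}$ is diagonal in the same basis as the self-adjoint operator $Z_{B}$, and on an eigenvector of $Z_{B}$ with eigenvalue $\lambda \in \mathbb{R}$ it acts as multiplication by $\mathrm{sgn}^{*}(\lambda) := \mathrm{sgn}(\lambda)$ for $\lambda \neq 0$ and $\mathrm{sgn}^{*}(0) := 1$. In particular $\mathrm{sgn}^{*}(\lambda) \in \{-1,+1\}$ is always (one of) the closest element(s) of $\{-1,+1\}$ to $\lambda$, which yields the key pointwise estimate
\begin{equation*}
(\mathrm{sgn}^{*}(\lambda) - \lambda)^{2} \leq (\epsilon - \lambda)^{2}, \qquad \text{for every } \lambda \in \mathbb{R},\ \epsilon \in \{-1,+1\}.
\end{equation*}

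With this in hand, I would spectrally decompose $Z_{B} = \sum_{k} \lambda_{k}\ket{k}\bra{k}$, expand each post-measurement state as $\ket{\Psi_{\alpha|\chi}} = \sum_{k} c_{\alpha,\chi,k}\ket{k}$, and compute both squared norms diagonally in this basis:
\begin{equation*}
\big\| (\tilde{Z}_{B} - Z_{B})\ket{\Psi_{\alpha|\chi}} \big\|^{2} = \sum_{k} |c_{\alpha,\chi,k}|^{2}\bigl(\mathrm{sgn}^{*}(\lambda_{k}) - \lambda_{k}\bigr)^{2},
\end{equation*}
with the analogous identity for $\| ((-1)^{\dec(\alpha)}\Id - Z_{B})\ket{\Psi_{\alpha|\chi}} \|^{2}$ using factors $((-1)^{\dec(\alpha)} - \lambda_{k})^{2}$. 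Applying the pointwise inequality term-by-term with $\epsilon = (-1)^{\dec(\alpha)} \in \{-1,+1\}$ gives
\begin{equation*}
\big\| (\tilde{Z}_{B} - Z_{B})\ket{\Psi_{\alpha|\chi}} \big\|^{2} \leq \big\| ((-1)^{\dec(\alpha)}\Id - Z_{B})\ket{\Psi_{\alpha|\chi}} \big\|^{2}.
\end{equation*}
Taking $\expect_{\chi:\enc(x=0)=\chi}$ and summing over $\alpha$, the right-hand side is exactly $\hat{E}_{0}$, which is at most $\delta_{0}$ by \cref{claim:1}.

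There is essentially no obstacle beyond correctly identifying the spectral action of $\tilde{Z}_{B}$; the ciphertext expectation commutes through the $\ell^{2}$-norm and plays no role in the regularization itself. This is the same step used in the bipartite analyses of \cite{BampsPironio,YangSelfTest}, and the compiled setting enters only in that the vectors $\ket{\Psi_{\alpha|\chi}}$ are indexed by ciphertexts rather than classical outcomes, which is transparent to the argument.
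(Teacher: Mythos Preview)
Your proof is correct and is essentially the same regularization argument the paper gives: the paper packages your pointwise inequality $(\mathrm{sgn}^{*}(\lambda)-\lambda)^{2}\leq(\epsilon-\lambda)^{2}$ as the operator statement $\|(\Id-|A|)\ket{\psi}\|^{2}\leq\|(\Id-A)\ket{\psi}\|^{2}$ (from $|A|\geq A$) after first writing $\|(\tilde{Z}_{B}-Z_{B})\ket{\Psi_{\alpha|\chi}}\|^{2}=\|(\Id-|Z_{B}|)\ket{\Psi_{\alpha|\chi}}\|^{2}$ via unitarity of $\tilde{Z}_{B}$, and then applies \cref{claim:1} exactly as you do. The only difference is cosmetic---you diagonalize explicitly where the paper works at the operator level---so nothing further is needed.
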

\begin{proof}
    Using the fact that $\tilde{Z}_{B}^{\dagger}$ is unitary and $\tilde{Z}_{B}Z_{B} = |Z_{B}|$,
    \begin{equation*}
        \| \big(\tilde{Z}_{B} - Z_{B}\big)\ket{\Psi_{\alpha|\chi}}\big\|^{2} = \| \big(\Id - |(-1)^{\dec(\alpha)}Z_{B}|\big)\ket{\Psi_{\alpha|\chi}}\big\|^{2}.
    \end{equation*}
    For any self-adjoint operator $A$, we have $\| (\Id - |A|)\ket{\psi} \|^{2} = \bra{\psi}(\Id + |A|^{2} - 2|A|)\ket{\psi} \leq \bra{\psi}(\Id + |A|^{2} - 2A)\ket{\psi} = \|(\Id - A)\ket{\psi}\|^{2}$. Applying this with $A = (-1)^{\dec(\alpha)}Z_{B}$, taking the expectation over $\chi:\enc(x=0)=\chi$ and summing over $\alpha$, we can directly applying \cref{claim:1} to get the desired bound.
\end{proof}

\begin{claim}
    $\expect_{\chi:\enc(x=0)=\chi} \sum_{\alpha}\big \| \big(\tilde{X}_{B} - X_{B}\big)\ket{\Psi_{\alpha|\chi}}\big\|^{2} \leq \delta_{1}/\tan^{2}(\phi) =: \delta_{4}$. \label{claim:6}
\end{claim}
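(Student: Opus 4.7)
The plan is to follow the template of \cref{claim:5}, adapted to $X_B$ in place of $Z_B$. From the regularization $X_B = \tilde X_B |X_B|$ and the unitarity of $\tilde X_B$, we get
\begin{equation*}
\bigl\| (\tilde X_B - X_B)\ket{\Psi_{\alpha|\chi}} \bigr\|^{2} = \bigl\| \tilde X_B(\Id - |X_B|)\ket{\Psi_{\alpha|\chi}} \bigr\|^{2} = \bigl\| (\Id - |X_B|)\ket{\Psi_{\alpha|\chi}} \bigr\|^{2},
\end{equation*}
exactly as at the start of the proof of \cref{claim:5}. The key difference from the $Z_B$ case is that we lack a direct first-moment bound on $X_B$ analogous to \cref{claim:1}, because $X_B$ does not appear by itself in the SOS decomposition. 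The argument must therefore bridge to the second-moment bound $\bigl\|(\Id - X_B^2)\ket{\Psi_{\alpha|\chi}}\bigr\|^2 \leq \delta_3$ from the preceding claim.

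For the bridging step I would use that $|X_B| \geq 0$ together with the scalar inequality $(1-\lambda)^2 \leq (1-\lambda^2)^2$ valid for $\lambda \geq 0$ (since $(1+\lambda)^2 \geq 1$). By the spectral theorem applied to $|X_B|$, this lifts to the operator inequality $(\Id - |X_B|)^2 \leq (\Id - X_B^2)^2$, which gives the state-by-state estimate $\|(\Id - |X_B|)\ket{\Psi_{\alpha|\chi}}\|^2 \leq \|(\Id - X_B^2)\ket{\Psi_{\alpha|\chi}}\|^2$. Taking the expectation over $\chi:\enc(x=0)=\chi$, summing over $\alpha$, and invoking the preceding claim then yields $\expect_{\chi} \sum_{\alpha} \|(\Id - |X_B|)\ket{\Psi_{\alpha|\chi}}\|^2 \leq \delta_1/\tan^4(\phi)$, which already implies the stated bound whenever $|\tan(\phi)| \geq 1$.

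The main obstacle is recovering the sharper factor $\delta_1/\tan^2(\phi)$ in the regime $|\tan(\phi)| < 1$. The naive scalar inequality $(1-\lambda)^2 \leq (1-\lambda^2)^2$ is wasteful (it loses a factor of $(1+\lambda)^2$, which can be much larger than $1$ on the relevant part of the spectrum of $|X_B|$), while the tighter inequality $(1-\lambda)^2 \leq |1 - \lambda^2|$ combined with the Cauchy--Schwarz estimate $\bra{\Psi}|A|\ket{\Psi} \leq \|\ket{\Psi}\|\cdot\|A\ket{\Psi}\|$ and the sub-normalization $\expect_\chi \sum_\alpha \|\ket{\Psi_{\alpha|\chi}}\|^2 = 1$ only produces a bound of order $\sqrt{\delta_3}$, which can be worse for small $\delta$. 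Matching $\delta_1/\tan^2(\phi)$ exactly would then require tracking the $\cos^2(\phi)/\sin^2(\phi)$ factors through \cref{claim:2} and \cref{claim:3} so that the ratio appears only once rather than twice in the final bound; managing this bookkeeping is where I expect the technical difficulty to lie.
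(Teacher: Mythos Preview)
Your approach is correct and is essentially identical to the paper's. The paper also reduces $\|(\tilde X_B - X_B)\ket{\Psi_{\alpha|\chi}}\|^2$ to $\|(\Id - |X_B|)\ket{\Psi_{\alpha|\chi}}\|^2$, then uses the operator inequality $\Id + |X_B| \geq \Id$ to pass to $\|(\Id - X_B^2)\ket{\Psi_{\alpha|\chi}}\|^2$; this is exactly your spectral-theorem step $(1-\lambda)^2 \leq (1-\lambda^2)^2$ for $\lambda \geq 0$, just phrased in operator language. The paper then invokes the exact algebraic identity $\cos^2(\phi)Z_B^2 + \sin^2(\phi)X_B^2 = \Id$ (which follows from $B_0^2 = B_1^2 = \Id$) to rewrite $\Id - X_B^2 = -\tfrac{1}{\tan^2(\phi)}(\Id - Z_B^2)$ and applies \cref{claim:2}.

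Your worry about matching the stated exponent is misplaced: there is no missing sharpening trick. The paper's own displayed line
\[
\bigl\|(\Id + |X_B|)(\Id - |X_B|)\ket{\Psi_{\alpha|\chi}}\bigr\|^2 \;=\; \frac{1}{\tan^2(\phi)}\bigl\|(\Id - Z_B^2)\ket{\Psi_{\alpha|\chi}}\bigr\|^2
\]
contains an arithmetic slip --- the scalar prefactor $1/\tan^2(\phi)$ on the operator should become $1/\tan^4(\phi)$ once the norm is squared. Thus the paper's argument actually yields the bound $\delta_1/\tan^4(\phi) = \delta_3$, exactly what you obtained. Since every $\delta_i$ here is $O(\epsilon + \delta_{\mathsf{QHE}})$, the discrepancy is immaterial for the self-testing conclusion, and your proof stands as written (with bound $\delta_3$). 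You should stop trying to recover the extra factor of $\tan^2(\phi)$; it is not there to be recovered.
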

\begin{proof}
     Note that
     \begin{equation*}
         \begin{aligned}
             \| \big(\tilde{X}_{B} - X_{B}\big)\ket{\Psi_{\alpha|\chi}}\big\|^{2} &= \| \big(\Id - |X_{B}|\big)\ket{\Psi_{\alpha|\chi}}\big\|^{2}\\
             &\leq \| \big(\Id + |X_{B}|\big)\big(\Id - |X_{B}|\big)\ket{\Psi_{\alpha|\chi}}\big\|^{2},
         \end{aligned}
     \end{equation*}
     where we used the same steps in \cref{claim:5} for the first equality, and the operator inequality $\Id + |X_{B}| \geq \Id$ for the inequality. Notice that $|X_{B}|^{2} = (X_{B})^{2}$, which satisfies $\cos^{2}(\phi)(Z_{B})^{2} + \sin^{2}(\phi)(X_{B})^{2} = \Id$ by definition. Inserting this into the above expression we find that
     \begin{equation*}
         \begin{aligned}
             \| \big(\Id + |X_{B}|\big)\big(\Id - |X_{B}|\big)\ket{\Psi_{\alpha|\chi}}\big\|^{2} &= \frac{1}{\tan^{2}(\phi)}\| \big(\Id - (Z_{B})^{2}\big)\ket{\Psi_{\alpha|\chi}}\big\|^{2},
         \end{aligned}
     \end{equation*}
     and taking the expectation over $\chi:\enc(x=0)=\chi$, summing over $\alpha$ and applying \cref{claim:2} we obtain the desired bound. 
\end{proof}
Next, we define the operators $P_{B}^{b} = \frac{1}{2}(\Id + (-1)^{b}\tilde{Z}_{B})$ for $b \in \{0,1\}$
and note a key property.
\begin{claim}
    $\expect_{\chi:\enc(x=0)=\chi} \sum_{\alpha}\big \| \big(P_{B}^{b} - \delta_{\dec(\alpha),b}\Id\big)\ket{\Psi_{\alpha|\chi}}\big\|^{2} \leq \delta_{0}$. \label{claim:7}
\end{claim}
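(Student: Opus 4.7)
The plan is to reduce the LHS to a $b$-independent quantity and bound it via a spectral/operator inequality derived from \cref{claim:1}. Examining the two cases of $b$: if $\dec(\alpha) = b$ then $P_B^b - \Id = -P_B^{1-b}$, so the summand equals $\|P_B^{1-b}\ket{\Psi_{\alpha|\chi}}\|^2$; if $\dec(\alpha) \neq b$, then $b = 1-\dec(\alpha)$ and the summand equals $\|P_B^b\ket{\Psi_{\alpha|\chi}}\|^2 = \|P_B^{1-\dec(\alpha)}\ket{\Psi_{\alpha|\chi}}\|^2$. Since $\tilde Z_B$ is a self-adjoint unitary with $\tilde Z_B^{\,2} = \Id$, each $P_B^c$ is an orthogonal projection, and both cases collapse to the $b$-independent expression $\bra{\Psi_{\alpha|\chi}} P_B^{1-\dec(\alpha)} \ket{\Psi_{\alpha|\chi}}$.

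Next, I would compare $P_B^{1-\dec(\alpha)}$ with the integrand appearing in \cref{claim:1}. Because $\tilde Z_B$ commutes with $Z_B$ and acts as $\mathrm{sign}(Z_B)$ (mapping the kernel of $Z_B$ to $+1$ via the regularization), $P_B^{1-\dec(\alpha)}$ is the spectral projector of $Z_B$ onto the ``wrong-sign'' set $\{\lambda : \mathrm{sign}(\lambda) \neq (-1)^{\dec(\alpha)}\}$, on which $((-1)^{\dec(\alpha)} - \lambda)^2 \geq 1$. This yields the functional-calculus inequality $P_B^{1-\dec(\alpha)} \preceq ((-1)^{\dec(\alpha)}\Id - Z_B)^2$. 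Taking expectation value on $\ket{\Psi_{\alpha|\chi}}$, summing over $\alpha$, averaging over $\chi$, and invoking \cref{claim:1} then yields the bound $\hat E_0 \leq \delta_0$.

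The main obstacle is sharpening this argument to the stated constant $\delta_0/2$. The spectral lower bound $((-1)^{\dec(\alpha)} - \lambda)^2 \geq 1$ on the wrong-sign region is tight as $\lambda \to 0$ (where the square approaches $1$, not $2$), so \cref{claim:1} alone does not give the extra factor of $1/2$. To close the gap I would additionally invoke \cref{claim:2}, which controls the spectral mass of $\ket{\Psi_{\alpha|\chi}}$ in the small-$|\lambda|$ region via $\|(\Id - Z_B^2)\ket{\Psi_{\alpha|\chi}}\|^2 \leq \delta_1 = O(\delta_0)$. Splitting the spectral integral into a bulk region $\{|\lambda|\geq c\}$ (where $((-1)^{\dec(\alpha)}-\lambda)^2 \geq 2$, so \cref{claim:1} supplies the factor of $1/2$) and a small-$|\lambda|$ region (where \cref{claim:2} absorbs the leftover mass) and optimizing $c$ should produce the stated constant.
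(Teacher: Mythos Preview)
Your reduction to the $b$-independent quantity $\langle\Psi_{\alpha|\chi}| P_B^{1-\dec(\alpha)} |\Psi_{\alpha|\chi}\rangle$ is correct, and the functional-calculus inequality $P_B^{1-\dec(\alpha)}\preceq ((-1)^{\dec(\alpha)}\Id-Z_B)^2$ does yield the bound $\delta_0$ via \cref{claim:1}. The gap is in the final step: your proposed spectral splitting with \cref{claim:2} cannot recover the stated constant $\delta_0/2$. On the bulk region $\{|\lambda|\ge c\}$ of the wrong-sign set one has $((-1)^{\dec(\alpha)}-\lambda)^2\ge(1+c)^2$, contributing at most $\delta_0/(1+c)^2$; on the small-$|\lambda|$ region, \cref{claim:2} controls the mass only up to $\delta_1/(1-c^2)^2$. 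Since $\delta_1=\delta_0\bigl(1+1/(2\cos^2\phi)\bigr)>\delta_0$, no choice of $c$ makes the sum $\delta_0/(1+c)^2+\delta_1/(1-c^2)^2$ equal to $\delta_0/2$; at best you get $\delta_0/2$ plus a strictly positive $\delta_1$-dependent remainder.

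The paper's argument avoids \cref{claim:2} and instead exploits \cref{claim:5}, which you never invoke. Starting from the identity
\[
P_B^{b}-\delta_{\dec(\alpha),b}\Id=\tfrac{(-1)^b}{2}\bigl(\tilde{Z}_B-(-1)^{\dec(\alpha)}\Id\bigr),
\]
the squared norm picks up a factor $1/4$ immediately. Writing $\tilde{Z}_B-(-1)^{\dec(\alpha)}\Id=(\tilde{Z}_B-Z_B)+(Z_B-(-1)^{\dec(\alpha)}\Id)$ and applying the triangle inequality together with \cref{claim:5} and \cref{claim:1} (each contributing $\delta_0$) gives $\tfrac{1}{4}\cdot 2\delta_0=\delta_0/2$. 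The point you are missing is that the regularization bound of \cref{claim:5} already controls the ``small-$|\lambda|$'' discrepancy between $\tilde{Z}_B$ and $Z_B$ at cost $\delta_0$, not $\delta_1$; this is exactly the extra factor of $2$ you were seeking.
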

\begin{proof}
    Note that
    \begin{equation*}
        \begin{aligned}
            \| \big(P_{B}^{b} - \delta_{\dec(\alpha),b}\Id\big)\ket{\Psi_{\alpha|\chi}}\big\| &= \frac{1}{2}\|\big( \tilde{Z}_{B} - (-1)^{\dec(\alpha)}\Id\big)\ket{\Psi_{\alpha|\chi}}\big\|,
        \end{aligned}
    \end{equation*}
    where we used the definition of $P_{B}^{b}$ and  $\delta_{\dec(\alpha),b} = \frac{1}{2}( 1 + (-1)^{\dec(\alpha) + b})$. Applying the triangle inequality,
    \begin{equation*}
        \|\big( \tilde{Z}_{B} - (-1)^{\dec(\alpha)}\Id\big)\ket{\Psi_{\alpha|\chi}}\big\| \leq \|\big( Z_{B} - (-1)^{\dec(\alpha)}\Id\big)\ket{\Psi_{\alpha|\chi}}\big\| + \|\big( \tilde{Z}_{B} - Z_{B}\big)\ket{\Psi_{\alpha|\chi}}\big\|.
    \end{equation*}
    Next, we square both sides of the above, and use the fact that for two real numbers $a,b$, $(a+b)^{2}\leq 2(a^{2} + b^{2})$, to obtain 
    \begin{equation}
        \|\big( \tilde{Z}_{B} - (-1)^{\dec(\alpha)}\Id\big)\ket{\Psi_{\alpha|\chi}}\big\|^{2} \leq 2\|\big( Z_{B} - (-1)^{\dec(\alpha)}\Id\big)\ket{\Psi_{\alpha|\chi}}\big\|^{2} + 2\|\big( \tilde{Z}_{B} - Z_{B}\big)\ket{\Psi_{\alpha|\chi}}\big\|^{2}.
    \end{equation}
    By taking the expectation over $\chi \, : \, \enc(x=0)=\chi$ and summing over $\alpha$, the first term can be bounded by applying \cref{claim:1}, while the second can be bounded by applying \cref{claim:5}. 
\end{proof}

Having established the above relations for the case $\chi = \enc(x=0)$, we are now ready to define the isometry $V:\mathcal{H} \rightarrow \mathbb{C}^{2} \otimes \mathcal{H}$,
\begin{equation}
    V = \sum_{b \in \{0,1\}} \ket{b} \otimes (\tilde{X}_{B})^{b} P_{B}^{b}. \label{eq:extr}
\end{equation}
This isometry is based on the ``SWAP gate isometry''~\cite{MYS12,YangSelfTest}.

Using \cref{claim:7}, we establish the following lemma:
\begin{lemma}
    Let $\widetilde{\mathrm{Q}}$ be any compiled model which satisfies $\pseudo[S_{\theta,\phi}] \geq \eta_{\theta,\phi}^{\mathrm{Q}} - \epsilon$, and let $V$ be the isometry defined in \cref{eq:extr}. Then there exists auxiliary states $\{\ket{\mathsf{aux}_{\alpha|\chi}}\}_{\alpha,\chi}$ such that
    \begin{equation}
        \expect_{\chi:\enc(x=0)=\chi} \sum_{\alpha}\big \| V\ket{\Psi_{\alpha|\chi}} - \ket{\dec(\alpha)} \otimes \ket{\mathsf{aux}_{\alpha|\chi}} \big\|^{2} \leq 2\delta_{0}.
    \end{equation} \label{lem:st1}
\end{lemma}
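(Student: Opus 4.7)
The plan is to exploit the direct-sum structure of the isometry together with Claim 7 (the approximate rigidity of $P_B^b$ on the post-measurement states). The natural candidate for the auxiliary state is $\ket{\mathsf{aux}_{\alpha|\chi}} := (\tilde{X}_B)^{\dec(\alpha)}\ket{\Psi_{\alpha|\chi}}$, since this puts the right-hand side of the desired inequality into the same form as the $b=\dec(\alpha)$ term of $V\ket{\Psi_{\alpha|\chi}}$, modulo a projector that Claim 7 controls.

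Concretely, I would proceed as follows. First, substitute the definition of $V$ from \cref{eq:extr} and write
\begin{equation*}
V\ket{\Psi_{\alpha|\chi}} - \ket{\dec(\alpha)}\otimes \ket{\mathsf{aux}_{\alpha|\chi}} = \sum_{b\in\{0,1\}} \ket{b}\otimes (\tilde{X}_B)^{b}\bigl(P_B^b - \delta_{\dec(\alpha),b}\Id\bigr)\ket{\Psi_{\alpha|\chi}}.
\end{equation*}
Second, since the computational basis vectors $\ket{b}$ are orthogonal, the squared norm splits as a sum over $b$; moreover, $(\tilde{X}_B)^{b}$ is unitary (this is the whole point of the regularization used to build $\tilde{X}_B$), so it drops out of the norm. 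This reduces the left-hand side of the target inequality to
\begin{equation*}
\expect_{\chi:\enc(0)=\chi}\sum_{\alpha}\sum_{b\in\{0,1\}} \bigl\|\bigl(P_B^b - \delta_{\dec(\alpha),b}\Id\bigr)\ket{\Psi_{\alpha|\chi}}\bigr\|^{2}.
\end{equation*}

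Third, applying \cref{claim:7} for each of the two values $b\in\{0,1\}$ bounds each summand by $\delta_0/2$, for a total of $\delta_0$, which is exactly the claimed bound. There is no real obstacle here beyond choosing the auxiliary state correctly so that the $b=\dec(\alpha)$ branch of $V$ aligns with the right-hand side and the remaining off-diagonal branch is controlled by Claim 7; this is precisely the role played by the $(\tilde{X}_B)^{\dec(\alpha)}$ factor. If desired, one may further note that replacing $\ket{\mathsf{aux}_{\alpha|\chi}}$ with its normalized version costs only another $O(\delta_0)$ by Claims 2 and 6 (since $\tilde{X}_B$ is isometric and the states are approximately normalized on the support of $\tilde{Z}_B^2\approx\Id$), but the unnormalized choice already satisfies the inequality as stated.
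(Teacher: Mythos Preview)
Your proof is correct and follows essentially the same approach as the paper: the same choice of auxiliary state $\ket{\mathsf{aux}_{\alpha|\chi}} = (\tilde{X}_B)^{\dec(\alpha)}\ket{\Psi_{\alpha|\chi}}$, the same reduction via orthogonality of the $\ket{b}$ and unitarity of $\tilde{X}_B$, and the same application of \cref{claim:7} for each $b$ to obtain the bound $\delta_0$. Your final aside about normalization is unnecessary (and the justification via Claims 2 and 6 is not quite the right one), but the unnormalized choice already matches the paper's statement and proof.
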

\begin{proof}
    Consider the states $\ket{\mathsf{aux}_{\alpha|\chi}} = (\tilde{X}_{B})^{\dec(\alpha)}\ket{\Psi_{\alpha|\chi}}$. Then 
    \begin{equation*}
        \begin{aligned}
            \ket{\dec(\alpha)} \otimes \ket{\mathsf{aux}_{\alpha|\chi}} &= \sum_{b \in \{0,1\}} \delta_{b,\dec(\alpha)}\ket{b} \otimes (\tilde{X}_{B})^{b}\ket{\Psi_{\alpha|\chi}},
        \end{aligned}
    \end{equation*}
    and we have
    \begin{equation*}
        \begin{aligned}
            \big\|V\ket{\Psi_{\alpha|\chi}} - \ket{\dec(\alpha)} \otimes \ket{\mathsf{aux}_{\alpha|\chi}} \big \|^{2} &= \big \| \sum_{b \in \{0,1\}}\ket{b} \otimes (\tilde{X}_{B})^{b}\big( P_{B}^{b} - \delta_{b,\dec(\alpha)} \Id \big) \ket{\Psi_{\alpha|\chi}} \big \|^{2} \\
            &\leq \sum_{b \in \{0,1\}} \big \| \big( P_{B}^{b} - \delta_{b,\dec(\alpha)} \Id \big) \ket{\Psi_{\alpha|\chi}} \big \|^{2}.
        \end{aligned}
    \end{equation*}
     By taking the expectation over $\chi:\enc(x=0)=\chi$, and summing over $\alpha$, we can apply \cref{claim:7} to obtain the desired bound.  
\end{proof}

 We now proceed to study the relations imposed by the second SOS polynomial, which imply $\hat{E}_{1} \leq \delta_{0}/\tau_{\theta,\phi}^{2}$. Before we begin, we note that in any earlier claim that did not include a factor of $(-1)^{\dec(\alpha)}$, we can always apply \cref{lem:QHE1} to exchange the expectation over encryptions of $x=0$ to an expectation over encryptions of $x = 1$ at the cost of adding $\negl(\lambda)$ to the bound. This follows from the fact that such claims are bounding the expectation of operators $\mathcal{N}$ constructed from sums of products of $B_{0}$ and $B_{1}$ only along with the fact that $\expect_{\chi:\enc(x=0)=\chi}\sum_{\alpha}\bra{\Psi_{\alpha|\chi}}\mathcal{N}\ket{\Psi_{\alpha|\chi}} \approx_{\negl(\lambda)} \expect_{\chi:\enc(x=1)=\chi}\sum_{\alpha}\bra{\Psi_{\alpha|\chi}}\mathcal{N}\ket{\Psi_{\alpha|\chi}}$.  
 \begin{claim}
     $\hat{E}_{1} = \expect_{\chi:\enc(x=1)=\chi} \sum_{\alpha}\big\| \big(\Id - (-1)^{\dec(\alpha)}\sin(2\theta)X_{B} - \cos(2\theta)Z_{B}\big)\ket{\Psi_{\alpha|\chi}}\big\|^{2} \leq \delta_{0}/\tau_{\theta,\phi}^{2}$. \label{claim:8}
 \end{claim}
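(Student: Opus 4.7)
The plan is to observe that this claim is essentially a direct restatement of the second inequality in \cref{eq:sosConst}, combined with the standard identity relating the operator $|\cdot|^2$ expression and the squared-norm expression. In particular, since the defining formula for $\hat{E}_1$ given just before \cref{eq:sosConst} uses the positive part $|M|^2 = M^\dagger M$ of the operator $M := \Id - (-1)^{\dec(\alpha)}\sin(2\theta)X_B - \cos(2\theta)Z_B$, while the statement of \cref{claim:8} writes this expectation as $\|M\ket{\Psi_{\alpha|\chi}}\|^2$, the two expressions are identical term by term.

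Concretely, I would first write $\bra{\Psi_{\alpha|\chi}} |M|^2 \ket{\Psi_{\alpha|\chi}} = \bra{\Psi_{\alpha|\chi}}M^\dagger M\ket{\Psi_{\alpha|\chi}} = \|M\ket{\Psi_{\alpha|\chi}}\|^2$, which establishes the equality asserted in the first part of \cref{claim:8}. Then I would recall that the text leading up to \cref{eq:sosConst} already derived $\hat{E}_0 + \lambda_{\theta,\phi}^2\hat{E}_1 \leq \epsilon + \delta'_{\mathsf{QHE}}$ by combining the SOS decomposition of \cref{lem:tilt} with \cref{lem:compSOS}, and that since $\hat{E}_0 \geq 0$, one immediately has $\lambda_{\theta,\phi}^2 \hat{E}_1 \leq \epsilon + \delta'_{\mathsf{QHE}} = \delta_0$. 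Dividing through by the positive quantity $\lambda_{\theta,\phi}^2$ yields the claimed bound $\hat{E}_1 \leq \delta_0/\lambda_{\theta,\phi}^2$.

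There is essentially no substantive obstacle here: the content of \cref{claim:8} is just a bookkeeping step that packages the already-established SOS bound into the squared-norm form that will be convenient for the subsequent rigidity arguments (mirroring how \cref{claim:1} packaged the $\hat{E}_0$ bound). The only thing to be careful about is making sure the definition of $\delta_0$ has been stated so that the right-hand side $\delta_0/\lambda_{\theta,\phi}^2$ has the intended interpretation; this was done in the paragraph immediately preceding \cref{claim:1}. No new inequality, triangle-inequality manipulation, or operator norm bound is needed for this particular claim.
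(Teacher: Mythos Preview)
Your proposal is correct and matches the paper's approach exactly. The paper states \cref{claim:8} without proof (just as it does for \cref{claim:1}), treating it as an immediate restatement of the second inequality in \cref{eq:sosConst} together with the identity $\bra{\Psi_{\alpha|\chi}}|M|^{2}\ket{\Psi_{\alpha|\chi}} = \|M\ket{\Psi_{\alpha|\chi}}\|^{2}$, precisely as you describe.
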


 \begin{claim}
     $\expect_{\chi:\enc(x=1)=\chi} \sum_{\alpha}\big\| \big(\Id - (-1)^{\dec(\alpha)}\sin(2\theta)\tilde{X}_{B} - \cos(2\theta)\tilde{Z}_{B}\big)\ket{\Psi_{\alpha|\chi}}\big\|^{2} \leq \delta_{5}$, where $\delta_{5} := 2\delta_{0}/\tau_{\theta,\phi}^{2} + 4\sin^{2}(2\theta)(\delta_{4} + \negl(\lambda)) + 4\cos^{2}(2\theta)(\delta_{0} + \negl(\lambda))$. \label{claim:9}
 \end{claim}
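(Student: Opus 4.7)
The plan is to treat the quantity in \cref{claim:9} as a regularized version of the quantity $\hat{E}_1$ from \cref{claim:8}. Abbreviating $M_\alpha := \Id - (-1)^{\dec(\alpha)}\sin(2\theta)X_B - \cos(2\theta) Z_B$ and $\tilde{M}_\alpha := \Id - (-1)^{\dec(\alpha)}\sin(2\theta)\tilde{X}_B - \cos(2\theta)\tilde{Z}_B$, the difference
\[
\tilde{M}_\alpha - M_\alpha = -(-1)^{\dec(\alpha)}\sin(2\theta)(\tilde{X}_B - X_B) - \cos(2\theta)(\tilde{Z}_B - Z_B)
\]
is a linear combination of precisely the two operator perturbations already bounded in \cref{claim:5} and \cref{claim:6}, so the claim should follow by perturbing the bound of \cref{claim:8}.

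Concretely, I would first apply the triangle inequality pointwise,
\[
\|\tilde{M}_\alpha \ket{\Psi_{\alpha|\chi}}\| \le \|M_\alpha \ket{\Psi_{\alpha|\chi}}\| + \sin(2\theta)\|(\tilde{X}_B - X_B)\ket{\Psi_{\alpha|\chi}}\| + \cos(2\theta)\|(\tilde{Z}_B - Z_B)\ket{\Psi_{\alpha|\chi}}\|,
\]
then square, take $\expect_{\chi:\enc(x=1)=\chi}$, and sum over $\alpha$. \cref{claim:8} immediately controls the $\|M_\alpha\ket{\Psi_{\alpha|\chi}}\|^2$ contribution by $\delta_0/\lambda_{\theta,\phi}^2$. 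For the remaining two contributions, the operators $(\tilde{X}_B - X_B)^\dagger(\tilde{X}_B - X_B)$ and $(\tilde{Z}_B - Z_B)^\dagger(\tilde{Z}_B - Z_B)$ are $\alpha$-independent polynomials in $\{B_0,B_1\}$; hence \cref{lem:QHE1} lets me swap the expectation from $\enc(x=1)$ to $\enc(x=0)$ at additive cost $\delta_{\mathsf{QHE}}$, after which \cref{claim:6} and \cref{claim:5} yield $\delta_4$ and $\delta_0$ respectively. Weighted by $\sin^2(2\theta)$ and $\cos^2(2\theta)$, these combine into the stated $\delta_5$.

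The main technical obstacle is arranging the bookkeeping so that the final bound lands in the exactly additive form of $\delta_5$ rather than carrying spurious constants from a naive $(a+b+c)^2 \le 3(a^2+b^2+c^2)$ step. The identity
\[
(sA+cB)^\dagger(sA+cB) + (sA-cB)^\dagger(sA-cB) = 2(s^2 A^\dagger A + c^2 B^\dagger B),
\]
applied with $s = -(-1)^{\dec(\alpha)}\sin(2\theta)$, $c = -\cos(2\theta)$, $A = \tilde{X}_B - X_B$, $B = \tilde{Z}_B - Z_B$, together with the normalization $s^2 + c^2 = 1$, gives an operator-level weighted split of $(\tilde{M}_\alpha - M_\alpha)^\dagger(\tilde{M}_\alpha - M_\alpha)$ involving exactly the factors $\sin^2(2\theta)$ and $\cos^2(2\theta)$. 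The remaining $M_\alpha$-perturbation cross terms would then be absorbed via a Cauchy--Schwarz step that exploits the already-small value of $\|M_\alpha\ket{\Psi_{\alpha|\chi}}\|^2$ coming from \cref{claim:8}, so that they contribute only to the subleading portion of $\delta_5$. A subsidiary verification is that both perturbation operators satisfy the QPT-implementability hypothesis of \cref{lem:QHE1}, which is immediate from their definitions in terms of $B_0$ and $B_1$.
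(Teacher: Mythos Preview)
Your approach is essentially the same as the paper's: decompose $\tilde{M}_\alpha = M_\alpha + \big[-(-1)^{\dec(\alpha)}\sin(2\theta)(\tilde{X}_B-X_B) - \cos(2\theta)(\tilde{Z}_B-Z_B)\big]$, bound the three pieces via \cref{claim:8}, \cref{claim:6}, and \cref{claim:5}, and use \cref{lem:QHE1} to swap the expectation to $\enc(x=0)$ for the two perturbation terms. The paper simply writes the additive squared bound directly under the heading ``triangle inequality'' and does not address the constant issue you raise; your extra care about avoiding the $(a+b+c)^2\le 3(a^2+b^2+c^2)$ blowup is more scrupulous than the original, which treats the purely additive form of $\delta_5$ as following immediately.
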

 \begin{proof}
     Using the triangle inequality we obtain
     \begin{equation*}
         \begin{aligned}
             \big\| \big(\Id -& (-1)^{\dec(\alpha)}\sin(2\theta)\tilde{X}_{B} - \cos(2\theta)\tilde{Z}_{B}\big)\ket{\Psi_{\alpha|\chi}}\big\| \\
             &\leq \big\| \big(\Id - (-1)^{\dec(\alpha)}\sin(2\theta)X_{B} - \cos(2\theta)Z_{B}\big)\ket{\Psi_{\alpha|\chi}}\big\| \\
             &+ \big\| \sin(2\theta)\big(X_{B} - \tilde{X}_{B}\big)\ket{\Psi_{\alpha|\chi}}\big\| + \big\| \cos(2\theta)\big(Z_{B} - \tilde{Z}_{B}\big)\ket{\Psi_{\alpha|\chi}}\big\|.
         \end{aligned}
     \end{equation*}
     Next, we square both sides, and use the fact that, for three real numbers $a,b,c$, $(a+b+c)^{2}\leq 4(a^{2}+b^{2})+2c^{2}$, to arrive at 
     \begin{equation*}
         \begin{aligned}
             \big\| \big(\Id -& (-1)^{\dec(\alpha)}\sin(2\theta)\tilde{X}_{B} - \cos(2\theta)\tilde{Z}_{B}\big)\ket{\Psi_{\alpha|\chi}}\big\|^{2} \\
             &\leq 2\big\| \big(\Id - (-1)^{\dec(\alpha)}\sin(2\theta)X_{B} - \cos(2\theta)Z_{B}\big)\ket{\Psi_{\alpha|\chi}}\big\|^{2} \\
             &+ 4\big\| \sin(2\theta)\big(X_{B} - \tilde{X}_{B}\big)\ket{\Psi_{\alpha|\chi}}\big\|^{2} + 4\big\| \cos(2\theta)\big(Z_{B} - \tilde{Z}_{B}\big)\ket{\Psi_{\alpha|\chi}}\big\|^{2}.
         \end{aligned}
     \end{equation*}
     We now take the expectation over $\chi:\enc(x=1)=\chi$, and sum over $\alpha$. This allows us to apply \cref{claim:8} to the first term, \cref{lem:QHE1} followed by \cref{claim:6} to the second, and \cref{lem:QHE1} followed by \cref{claim:5} to the final term, giving the desired bound. 
 \end{proof}
\begin{claim}
    $\expect_{\chi:\enc(x=1)=\chi} \sum_{\alpha}\big \| \{\tilde{Z}_{B},\tilde{X}_{B}\}\ket{\Psi_{\alpha|\chi}}\big\|^{2} \leq \delta_{6}$, where $\delta_{6} := 2(\delta_{0} + \negl(\lambda)) + 8(\delta_{4} + \negl(\lambda)) + 8(\delta_{0} + \negl(\lambda))/(2\sin^{2}(\phi)) + 16(1 + 1/(2\cos^{2}(\phi)))\delta_{0}/(\sin(2\theta)\tau_{\theta,\phi})^{2} + 16(1/\sin^{2}(2\theta) + 1/(2\cos^{2}(\phi)\tan^{2}(2\theta)))(\delta_{0} + \negl(\lambda))$. \label{claim:10}
\end{claim}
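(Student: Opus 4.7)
The plan is to bound $\{\tilde{Z}_B,\tilde{X}_B\}|\psi_{\alpha|\chi}\rangle$ on states with $\chi=\enc(1)$ by approximating $\sin(2\theta)\tilde{Z}_B\tilde{X}_B|\psi\rangle$ and $\sin(2\theta)\tilde{X}_B\tilde{Z}_B|\psi\rangle$ individually in terms of the SOS polynomial $T_\alpha$ from \cref{claim:8}, and then exploiting an exact operator-level cancellation between the leading contributions. The key algebraic step is that the relation defining $T_\alpha$ rearranges into the exact operator identity
\[
\sin(2\theta)\tilde{X}_B \;=\; (-1)^{\dec(\alpha)}\bigl(\Id - T_\alpha - \cos(2\theta)\tilde{Z}_B\bigr),
\]
and, substituting the dual rearrangement $\cos(2\theta)\tilde{Z}_B = \Id - T_\alpha - (-1)^{\dec(\alpha)}\sin(2\theta)\tilde{X}_B$ together with $\tilde{X}_B^{\,2} = \Id$, into
\[
\sin(2\theta)\tilde{X}_B\tilde{Z}_B \;=\; \tfrac{1}{\cos(2\theta)}\bigl[\sin(2\theta)\tilde{X}_B - \sin(2\theta)\tilde{X}_B T_\alpha - (-1)^{\dec(\alpha)}\sin^{2}(2\theta)\Id\bigr].
\]

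Right-multiplying the first identity by $\tilde{Z}_B$ and using $\tilde{Z}_B^{\,2}=\Id$ yields
\[
\sin(2\theta)\,\tilde{Z}_B\tilde{X}_B|\psi\rangle \;=\; (-1)^{\dec(\alpha)}(\tilde{Z}_B - \cos(2\theta)\Id)|\psi\rangle - (-1)^{\dec(\alpha)}\tilde{Z}_B T_\alpha|\psi\rangle.
\]
Plugging the first identity into the second and simplifying gives the companion expression
\[
\sin(2\theta)\,\tilde{X}_B\tilde{Z}_B|\psi\rangle \;=\; -(-1)^{\dec(\alpha)}(\tilde{Z}_B - \cos(2\theta)\Id)|\psi\rangle - \tfrac{(-1)^{\dec(\alpha)}}{\cos(2\theta)}T_\alpha|\psi\rangle - \tfrac{\sin(2\theta)}{\cos(2\theta)}\tilde{X}_B T_\alpha|\psi\rangle.
\]
Summing the two, the terms $\pm(-1)^{\dec(\alpha)}(\tilde{Z}_B - \cos(2\theta)\Id)|\psi\rangle$ cancel at the operator level, leaving a residual built from $\tilde{Z}_B T_\alpha|\psi\rangle$, $T_\alpha|\psi\rangle$, and $\tilde{X}_B T_\alpha|\psi\rangle$. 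Each is controlled by $\|T_\alpha|\psi\rangle\|$ through unitarity of $\tilde{Z}_B$ and $\tilde{X}_B$, so applying the triangle inequality, dividing by $\sin(2\theta)$, and summing over $\alpha$ and $\chi=\enc(1)$ bounds $\expect_\chi\sum_\alpha\|\{\tilde{Z}_B,\tilde{X}_B\}|\psi_{\alpha|\chi}\rangle\|^{2}$ by a constant multiple of $\hat{E}_1 \leq \delta_0/\lambda_{\theta,\phi}^{2}$. The remaining $\delta_0$, $\delta_4$, and $\delta_{\mathsf{QHE}}$ summands appearing in $\delta_6$ come from the triangle inequalities needed to pass between the unregularized operators $Z_B, X_B$ and their unitary regularizations $\tilde{Z}_B, \tilde{X}_B$ (via \cref{claim:5} and \cref{claim:6}), and from invoking \cref{lem:QHE1} with the sign-free estimate of \cref{claim:2} whenever bridging $\chi=\enc(0)$ and $\chi=\enc(1)$.

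The main obstacle is performing the leading-order cancellation at the operator level rather than term-by-term after the triangle inequality: individually, neither $\sin(2\theta)\tilde{Z}_B\tilde{X}_B|\psi\rangle$ nor $\sin(2\theta)\tilde{X}_B\tilde{Z}_B|\psi\rangle$ is small, and $(\tilde{Z}_B - \cos(2\theta)\Id)|\psi\rangle$ does not vanish on $\chi=\enc(1)$ even in the ideal case---only the algebraic sum cancels. A premature triangle inequality would leave an irreducible error of order one, so the inversion of $T_\alpha$ must be carried out as an operator identity first. Tracking this cancellation through the various $1/\cos(2\theta)$, $1/\sin(2\theta)$, $1/\cos(\phi)$, and $1/\sin(\phi)$ factors picked up along the way is the main bookkeeping effort required to recover exactly the stated constant $\delta_6$.
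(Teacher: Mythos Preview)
Your algebraic cancellation is sound, but it introduces a factor that the stated $\delta_6$ does not contain and that the claim's range of parameters does not allow. In your second displayed identity you divide by $\cos(2\theta)$ to isolate $\tilde{X}_B\tilde{Z}_B$, and after the cancellation the residual is
\[
\sin(2\theta)\,\{\tilde{Z}_B,\tilde{X}_B\}|\psi\rangle
\;=\;
-(-1)^{\dec(\alpha)}\tilde{Z}_B T_\alpha|\psi\rangle
-\tfrac{(-1)^{\dec(\alpha)}}{\cos(2\theta)}T_\alpha|\psi\rangle
-\tfrac{\sin(2\theta)}{\cos(2\theta)}\tilde{X}_B T_\alpha|\psi\rangle .
\]
Hence your bound carries a prefactor of order $1/\cos^{2}(2\theta)$, which diverges as $\theta\to\pi/4$; but the claim is stated for $\theta\in(0,\pi/4]$, and the given $\delta_6$ has only $1/\sin^{2}(2\theta)$ and $1/\tan^{2}(2\theta)$ dependence (the latter vanishes at $\pi/4$). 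So even when your argument terminates, it cannot produce the stated constant, and at the endpoint $\theta=\pi/4$ it gives no bound at all. There is also a mismatch in your reference: the operator identity you write down defines $T_\alpha$ in terms of the \emph{regularized} $\tilde{X}_B,\tilde{Z}_B$, which is the object controlled by \cref{claim:9} (bound $\delta_5$), not \cref{claim:8}; this is minor but feeds into why your final constants will not line up with $\delta_6$.

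The paper avoids the $1/\cos(2\theta)$ trap by never inverting the $T_\alpha$ relation. Instead it uses a structural fact you did not exploit: the \emph{unregularized} operators satisfy $\{Z_B,X_B\}=0$ exactly, since $Z_B=(B_0+B_1)/(2\cos\phi)$, $X_B=(B_0-B_1)/(2\sin\phi)$, and $B_0^2=B_1^2=\Id$. The proof writes $\{\tilde{Z}_B,\tilde{X}_B\}|\psi\rangle = \tilde{X}_B(\tilde{Z}_B-(-1)^{\dec(\alpha)}\Id)|\psi\rangle + (\tilde{Z}_B+(-1)^{\dec(\alpha)}\Id)\tilde{X}_B|\psi\rangle$, handles the first piece directly via \cref{claim:1} (after \cref{lem:QHE1}), and for the second piece passes from $\tilde{X}_B,\tilde{Z}_B$ back to $X_B,Z_B$ using \cref{claim:5} and \cref{claim:6}, where the exact anticommutation kills the would-be order-one term. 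The $X_B\mapsto\tilde{X}_B$ replacement on the far side is where \cref{eq:xrel} (the unregularized rearrangement of \cref{claim:8}) enters, producing the $1/(\sin(2\theta)\lambda_{\theta,\phi})^{2}$ and $1/\tan^{2}(2\theta)$ pieces of $\delta_6$ without any $1/\cos(2\theta)$.
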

\begin{proof}
    We begin by applying the triangle inequality to obtain
    \begin{equation*}
        \big \| \{\tilde{Z}_{B},\tilde{X}_{B}\} \ket{\Psi_{\alpha|\chi}}\big \|^{2} \leq 2\big \| \tilde{X}_{B}\big( \tilde{Z}_{B} - (-1)^{\dec(\alpha)}\Id\big)\ket{\Psi_{\alpha|\chi}} \big \|^{2} + 2\big \| \big( \tilde{Z}_{B} + (-1)^{\dec(\alpha)}\Id\big) \tilde{X}_{B}\ket{\Psi_{\alpha|\chi}}\big \|^{2}.
    \end{equation*}
    For the first term, we can use the fact that $\tilde{X}_{B}$ is unitary, and when taking the expectation over $\chi:\enc(x=1)=\chi$ and sum over $\alpha$, we can apply \cref{lem:QHE1} followed by \cref{claim:1} to obtain a bound of $2(\delta_{0} + \negl(\lambda))$. For the second term, 
    \begin{equation*}
    \begin{aligned}
        2\big \| \big( \tilde{Z}_{B} + (-1)^{\dec(\alpha)}\Id\big)\tilde{X}_{B}\ket{\Psi_{\alpha|\chi}} \big \|^{2}  &\leq 4\big \| \big(\tilde{Z}_{B} + (-1)^{\dec(\alpha)}\Id\big)(\tilde{X}_{B} - X_{B})\ket{\Psi_{\alpha|\chi}}\big \|^{2} \\
        &+ 4\| \big(\tilde{Z}_{B} + (-1)^{\dec(\alpha)}\Id\big) X_{B}\ket{\Psi_{\alpha|\chi}}\big \|^{2} \\
        &\leq 4\big \| \tilde{Z}_{B} + (-1)^{\dec(\alpha)}\Id\|_{\mathrm{op}}^{2}\big \| (\tilde{X}_{B} - X_{B})\ket{\Psi_{\alpha|\chi}}\big \|^{2} \\
        &+ 8\| \big(Z_{B} + (-1)^{\dec(\alpha)}\Id\big) X_{B}\ket{\Psi_{\alpha|\chi}}\big \|^{2} \\
        &+ 8\| \big(\tilde{Z}_{B} - Z_{B}\big) X_{B}\ket{\Psi_{\alpha|\chi}}\big \|^{2}.
    \end{aligned}
    \end{equation*}
     Since $\tilde{Z}_{B}$ is unitary $\big \| \tilde{Z}_{B} + (-1)^{\dec(\alpha)}\Id\|_{\mathrm{op}}^{2} \leq 2$, hence when taking the expectation over $\chi:\enc(x=1)=\chi$ and sum over $\alpha$ for the first term above, we can apply \cref{lem:QHE1} followed by \cref{claim:6} to $\| (\tilde{X}_{B} - X_{B})\ket{\Psi_{\alpha|\chi}} \|^{2}$; the product results in a contribution $8(\delta_{4} + \negl(\lambda))$. Next, we note that
    \begin{equation*}
    \begin{aligned}
        8\big \|\big(Z_{B} + (-1)^{\dec(\alpha)}\Id\big) X_{B} \ket{\Psi_{\alpha|\chi}} \big \|^{2} &= 8\big \| X_{B}\big(Z_{B} - (-1)^{\dec(\alpha)}\Id\big)\ket{\Psi_{\alpha|\chi}} \big \|^{2} \\
        &\leq 8\|X_{B}\|_{\mathrm{op}}^{2} \|\big(Z_{B} - (-1)^{\dec(\alpha)}\Id\big)\ket{\Psi_{\alpha|\chi}} \big \|^{2},
    \end{aligned}
    \end{equation*}
    where the equality follows since $\{Z_{B},X_{B}\} = 0$ by definition. Now $\|X_{B}\|_{\mathrm{op}}^{2} \leq 1/(2\sin^{2}(\phi))$, and when taking the expectation over $\chi:\enc(x=1)=\chi$ and sum over $\alpha$, we can apply \cref{lem:QHE1} followed by \cref{claim:1}, resulting in a contribution of $8(\delta_{0} + \negl(\lambda))/(2\sin^{2}(\phi))$. 

    We now address the term $8\| \big(\tilde{Z}_{B} - Z_{B}\big) X_{B}\ket{\Psi_{\alpha|\chi}}\big \|^{2}$. \cref{claim:8} tells us that
    \begin{equation}
        \expect_{\chi:\enc(x=1)=\chi} \sum_{\alpha}\big\| \big( X_{B} - (-1)^{\dec(\alpha)}\tau_{\theta,\phi}\big)\ket{\Psi_{\alpha|\chi}}\big\|^{2} \leq \delta_{0}/(\sin(2\theta)\tau_{\theta,\phi})^{2}, \label{eq:xrel}
    \end{equation}
    where $\tau_{\theta,\phi}= \frac{1}{\sin(2\theta)}(\Id - \cos(2\theta)Z_{B})$, which implies 
    \begin{equation*}
    \begin{aligned}
        8\| \big(\tilde{Z}_{B} - Z_{B}\big) X_{B}\ket{\Psi_{\alpha|\chi}}\big \|^{2} & \leq 16\| \big(\tilde{Z}_{B} - Z_{B}\big) \tau_{\theta,\phi}\ket{\Psi_{\alpha|\chi}}\big \|^{2} \\
        &+ 16\big\| \big(\tilde{Z}_{B} - Z_{B}\big)\big( X_{B} - (-1)^{\dec(\alpha)}\tau_{\theta,\phi}\big)\ket{\Psi_{\alpha|\chi}}\big\|^{2} \\
        &\leq 16\| \big(\tilde{Z}_{B} - Z_{B}\big) \tau_{\theta,\phi}\ket{\Psi_{\alpha|\chi}}\big \|^{2} \\
        &+ 16\big\| \tilde{Z}_{B} - Z_{B}\big\|_{\mathrm{op}}^{2}\big\|\big( X_{B} - (-1)^{\dec(\alpha)}\tau_{\theta,\phi}\big)\ket{\Psi_{\alpha|\chi}}\big\|^{2}.
    \end{aligned}
    \end{equation*}
    For the second term above, we have that $\big\| \tilde{Z}_{B} - Z_{B}\big\|_{\mathrm{op}}^{2} \leq 1 + 1/(2\cos^{2}(\phi))$, and when taking the expectation over $\chi:\enc(x=1)=\chi$ and summing over $\alpha$, we can apply \cref{eq:xrel} to obtain a contribution of $16(1 + 1/(2\cos^{2}(\phi)))\delta_{0}/(\sin(2\theta)\tau_{\theta,\phi})^{2}$. Finally, we bound the first term above using the fact that $\tilde{Z}_{B}$ and $Z_{B}$ commute, that is,
    \begin{equation*}
        \begin{aligned}
            16\| \big(\tilde{Z}_{B} - Z_{B}\big) \tau_{\theta,\phi}\ket{\Psi_{\alpha|\chi}}\big \|^{2} &=  16\big \|\tau_{\theta,\phi}\big(\tilde{Z}_{B} - Z_{B}\big) \ket{\Psi_{\alpha|\chi}}\big \|^{2} \\
            &\leq  16\|\tau_{\theta,\phi}  \|^{2}_{\mathrm{op}} \big\|\big(\tilde{Z}_{B} - Z_{B}\big) \ket{\Psi_{\alpha|\chi}}\big \|^{2}.
        \end{aligned}
    \end{equation*}
    The operator norm is bounded by $1/\sin^{2}(2\theta) + 1/(2\cos^{2}(\phi)\tan^{2}(2\theta))$, and when taking the expectation over $\chi:\enc(x=1)=\chi$ and summing over $\alpha$, we can apply \cref{lem:QHE1} followed by \cref{claim:5}, resulting in a contribution of $16(1/\sin^{2}(2\theta) + 1/(2\cos^{2}(\phi)\tan^{2}(2\theta)))(\delta_{0} + \negl(\lambda))$. By summing all the contributions from every use of the triangle inequality, we obtain the desired bound.
\end{proof}

\begin{claim}
    $\expect_{\chi:\enc(x=1)=\chi} \sum_{\alpha}\big \| (\tilde{X}_{B}P_{B}^{1} - P_{B}^{0}\tilde{X}_{B})\ket{\Psi_{\alpha|\chi}}\big\|^{2} \leq \delta_{6}/4$. \label{claim:11}
\end{claim}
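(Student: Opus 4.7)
The plan is to observe that $\tilde{X}_{B}P_{B}^{1} - P_{B}^{0}\tilde{X}_{B}$ is, up to a scalar multiple, precisely the anticommutator $\{\tilde{Z}_{B},\tilde{X}_{B}\}$ already bounded in \cref{claim:10}, so that the claim is an immediate algebraic consequence.

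Concretely, I would first substitute the definition $P_{B}^{b} = \tfrac{1}{2}(\Id + (-1)^{b}\tilde{Z}_{B})$ to get
\begin{equation*}
    \tilde{X}_{B}P_{B}^{1} - P_{B}^{0}\tilde{X}_{B} = \tfrac{1}{2}\bigl( \tilde{X}_{B}(\Id - \tilde{Z}_{B}) - (\Id + \tilde{Z}_{B})\tilde{X}_{B}\bigr) = -\tfrac{1}{2}\{\tilde{Z}_{B},\tilde{X}_{B}\},
\end{equation*}
where the $\tilde{X}_{B}$ terms from the two halves cancel. Taking norms of this identity acting on $\ket{\Psi_{\alpha|\chi}}$ gives
\begin{equation*}
    \big\| (\tilde{X}_{B}P_{B}^{1} - P_{B}^{0}\tilde{X}_{B})\ket{\Psi_{\alpha|\chi}}\big\|^{2} = \tfrac{1}{4}\big\| \{\tilde{Z}_{B},\tilde{X}_{B}\}\ket{\Psi_{\alpha|\chi}}\big\|^{2}.
\end{equation*}

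The final step is to take the expectation over $\chi:\enc(x=1)=\chi$ and sum over $\alpha$, and then apply \cref{claim:10} directly to the right-hand side. This yields the stated bound $\delta_{6}/4$.

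There is essentially no obstacle: the content of the claim is the algebraic identity relating $\tilde{X}_{B}P_{B}^{1} - P_{B}^{0}\tilde{X}_{B}$ to the anticommutator, and the only thing to double-check is the sign bookkeeping in the cancellation and that no further use of \cref{lem:QHE1} is needed (it is not, because we are already working under the expectation over $\chi:\enc(x=1)=\chi$ used in \cref{claim:10}).
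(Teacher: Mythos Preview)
Your proposal is correct and matches the paper's proof essentially line for line: the paper also reduces the claim to the identity $\tilde{X}_{B}P_{B}^{1} - P_{B}^{0}\tilde{X}_{B} = -\tfrac{1}{2}\{\tilde{Z}_{B},\tilde{X}_{B}\}$ and then invokes \cref{claim:10}. Your version just spells out the algebraic cancellation a bit more explicitly.
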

\begin{proof}
    Note that 
    \begin{equation*}
        \tilde{X}_{B}P_{B}^{1} - P_{B}^{0}\tilde{X}_{B} = -\frac{1}{2}\{\tilde{Z}_{B},\tilde{X}_{B}\},
    \end{equation*}
    and hence the claim immediately follows from \cref{claim:10}.
\end{proof}
We are now ready to prove the self-testing statement on the state when $\enc(x=1) = \chi$:
\begin{lemma}
    Let $\widetilde{\qmodel}$ be any compiled model which satisfies $\pseudo[S_{\theta,\phi}] \geq \eta_{\theta,\phi}^{\mathrm{Q}} - \epsilon$, and let $V$ be the SWAP isometry defined in \cref{eq:extr}. Then there exists auxiliary states $\{\ket{\mathsf{aux}_{\alpha|\chi}}\}_{\alpha,\chi}$ such that
    \begin{equation}
        \expect_{\chi:\enc(x=1)=\chi} \sum_{\alpha}\big \| V\ket{\Psi_{\alpha|\chi}} - \big(\cos(\theta)\ket{0} + (-1)^{\dec(\alpha)}\sin(\theta)\ket{1}\big) \otimes \ket{\mathsf{aux}_{\alpha|\chi}} \big\|^{2} \leq \delta_{7},
    \end{equation}
    where $\delta_{7} := \delta_{6}/2 + 2\delta_{5}/(\sin^{2}(2\theta))$. \label{lem:st2}
\end{lemma}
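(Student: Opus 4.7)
I would begin by choosing the auxiliary state
\begin{equation*}
\ket{\mathsf{aux}_{\alpha|\chi}} := \frac{1}{\cos(\theta)}\,P_{B}^{0}\ket{\Psi_{\alpha|\chi}},
\end{equation*}
which in the honest model reduces to $\tfrac{1}{\sqrt{2}}\ket{0}$, the correct Bob-side residue after Alice measures $A_{1}=\sigma_X$ on $\cos(\theta)\ket{00}+\sin(\theta)\ket{11}$. Expanding $V\ket{\Psi_{\alpha|\chi}} = \ket{0}\otimes P_{B}^{0}\ket{\Psi_{\alpha|\chi}} + \ket{1}\otimes \tilde{X}_{B}P_{B}^{1}\ket{\Psi_{\alpha|\chi}}$, this choice makes the $\ket{0}$-branch of $V\ket{\Psi_{\alpha|\chi}} - (\cos(\theta)\ket{0}+(-1)^{\dec(\alpha)}\sin(\theta)\ket{1})\otimes\ket{\mathsf{aux}_{\alpha|\chi}}$ vanish identically, so the lemma reduces to bounding
\begin{equation*}
\expect_{\chi:\enc(1)=\chi}\sum_{\alpha}\bigl\|\bigl(\tilde{X}_{B}P_{B}^{1} - (-1)^{\dec(\alpha)}\tan(\theta)\,P_{B}^{0}\bigr)\ket{\Psi_{\alpha|\chi}}\bigr\|^{2}.
\end{equation*}

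To bound this quantity, I would insert $\pm P_{B}^{0}\tilde{X}_{B}\ket{\Psi_{\alpha|\chi}}$ and split via the triangle inequality. The first piece, $(\tilde{X}_{B}P_{B}^{1}-P_{B}^{0}\tilde{X}_{B})\ket{\Psi_{\alpha|\chi}}$, is exactly (minus one half of) the anticommutator appearing in \cref{claim:11}, contributing $\delta_{6}/4$. For the second piece, $(P_{B}^{0}\tilde{X}_{B}-(-1)^{\dec(\alpha)}\tan(\theta)\,P_{B}^{0})\ket{\Psi_{\alpha|\chi}}$, I would multiply the \cref{claim:9} relation on the left by $P_{B}^{0}$ and use $P_{B}^{0}\tilde{Z}_{B}=P_{B}^{0}$ (which has operator norm at most $1$, so the bound $\delta_{5}$ is preserved) to obtain
\begin{equation*}
\bigl(2\sin^{2}(\theta)\,P_{B}^{0} - (-1)^{\dec(\alpha)}\sin(2\theta)\,P_{B}^{0}\tilde{X}_{B}\bigr)\ket{\Psi_{\alpha|\chi}} \approx 0
\end{equation*}
with error at most $\delta_{5}$ after expectation. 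Factoring out $\sin(2\theta)$ from the operator identity and dividing by it turns the relation into $P_{B}^{0}\tilde{X}_{B}\ket{\Psi_{\alpha|\chi}}\approx (-1)^{\dec(\alpha)}\tan(\theta)\,P_{B}^{0}\ket{\Psi_{\alpha|\chi}}$ with error $\delta_{5}/\sin^{2}(2\theta)$, where I use the identities $1-\cos(2\theta)=2\sin^{2}(\theta)$ and $\sin(2\theta)=2\sin(\theta)\cos(\theta)$.

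Combining these two estimates via the same style of triangle-inequality accounting that is used throughout Claims \ref{claim:1}--\ref{claim:11} yields the stated bound $\delta_{7}=\delta_{6}/4+\delta_{5}/(2\sin^{2}(2\theta))$. The structural point is that the proof parallels \cref{lem:st1}: the SOS polynomial $N_{1}$ plays the role that $N_{0}$ played there, except that its single-observable content is $X$ on a partially entangled state rather than $Z$ on a maximally entangled one, which is why the ideal target is the partially entangled vector $\cos(\theta)\ket{0}+(-1)^{\dec(\alpha)}\sin(\theta)\ket{1}$ rather than a computational basis state.

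The main obstacle is the bookkeeping of constants: the hard part is tracking how each projection application, each use of \cref{lem:QHE1} to swap the expectation over $\chi:\enc(0)=\chi$ for $\chi:\enc(1)=\chi$, and each division by a trigonometric quantity like $\sin(2\theta)$ compound into $\delta_{6}$ and $\delta_{5}$. Conceptually, the key trick is that left-multiplication by $P_{B}^{0}$ annihilates the $\tilde{Z}_{B}$ term in the \cref{claim:9} SOS constraint, converting a statement about $\tilde{X}_{B}$ alone into one about $\tilde{X}_{B}$ acting inside the $P_{B}^{0}$-subspace — precisely the form needed to be related (via \cref{claim:11}) to the $\tilde{X}_{B}P_{B}^{1}$ appearing in the second branch of $V\ket{\Psi_{\alpha|\chi}}$.
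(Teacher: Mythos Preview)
Your approach is essentially identical to the paper's: same auxiliary state $\ket{\mathsf{aux}_{\alpha|\chi}}=\tfrac{1}{\cos\theta}P_B^0\ket{\Psi_{\alpha|\chi}}$, same cancellation in the $\ket{0}$-branch, same triangle-inequality split into the anticommutator piece (\cref{claim:11}, contributing $\delta_6/4$) and the $P_B^0\tilde{X}_B-(-1)^{\dec(\alpha)}\tan(\theta)P_B^0$ piece, which the paper handles by introducing $\tilde T_{\theta,\phi}=\tfrac{1}{\sin(2\theta)}(\Id-\cos(2\theta)\tilde Z_B)$ and using the exact identity $P_B^0\tilde T_{\theta,\phi}=\tan(\theta)P_B^0$ --- precisely the computation you describe. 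The only discrepancy is the constant: your own intermediate step gives $\delta_5/\sin^2(2\theta)$ for the second piece rather than the stated $\delta_5/(2\sin^2(2\theta))$; the paper obtains the extra factor $1/2$ by invoking $\|P_B^0\|_{\mathrm{op}}^2\leq 1/2$, but since $P_B^0=\tfrac12(\Id+\tilde Z_B)$ is a projection with $\|P_B^0\|_{\mathrm{op}}=1$, your accounting is in fact the correct one (so $\delta_7$ should read $\delta_6/4+\delta_5/\sin^2(2\theta)$, which does not affect the self-test conclusion).
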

\begin{proof}
    Consider the choice $\ket{\mathsf{aux}_{\alpha|\chi}} = \frac{1}{\cos(\theta)}P_{B}^{0}\ket{\Psi_{\alpha|\chi}}$. Then,
    \begin{equation*}
    \begin{aligned}
            \big \| & V\ket{\Psi_{\alpha|\chi}} - \big(\cos(\theta)\ket{0} + (-1)^{\dec(\alpha)}\sin(\theta)\ket{1}\big) \otimes \ket{\mathsf{aux}_{\alpha|\chi}} \big \|^{2} \\ &= \big \| \ket{1} \otimes \big( \tilde{X}_{B}P_{B}^{1} - (-1)^{\dec(\alpha)}\tan(\theta)P_{B}^{0}\big)\ket{\Psi_{\alpha|\chi}} \big \|^{2} \\
            &\leq 2\big \| \big( P_{B}^{0}\tilde{X}_{B} - (-1)^{\dec(\alpha)}\tan(\theta)P_{B}^{0}\big)\ket{\Psi_{\alpha|\chi}} \big \|^{2} + 2\big \| (\tilde{X}_{B}P_{B}^{1} - P_{B}^{0}\tilde{X}_{B})\ket{\Psi_{\alpha|\chi}}\big\|^{2}.
    \end{aligned}
    \end{equation*}
    Taking the expectation over $\chi:\enc(x=1)=\chi$ and summing over $\alpha$, we can bound the second term by $\delta_{6}/2$ using \cref{claim:11}. Using the triangle inequality again,
    \begin{equation*}
    \begin{aligned}
        2\big \| & \big( P_{B}^{0}\tilde{X}_{B} - (-1)^{\dec(\alpha)}\tan(\theta)P_{B}^{0}\big)\ket{\Psi_{\alpha|\chi}} \big \|^{2} \\ &\leq 4\big \| \big( (-1)^{\dec(\alpha)}P_{B}^{0}\tilde{T}_{\theta,\phi} - (-1)^{\dec(\alpha)}\tan(\theta)P_{B}^{0}\big)\ket{\Psi_{\alpha|\chi}} \big \|^{2} \\
        &+  4\big \| P_{B}^{0}\big(\tilde{X}_{B}-(-1)^{\dec(\alpha)}\tilde{T}_{\theta,\phi}  \big)\ket{\Psi_{\alpha|\chi}} \big \|^{2} \\
        &\leq 4\big \| \big( P_{B}^{0}\tilde{T}_{\theta,\phi} - \tan(\theta)P_{B}^{0}\big)\ket{\Psi_{\alpha|\chi}} \big \|^{2} \\
        &+ 4\big \| P_{B}^{0}\big \|_{\mathrm{op}}^{2}\big \| \big(\tilde{X}_{B}-(-1)^{\dec(\alpha)}\tilde{T}_{\theta,\phi}  \big)\ket{\Psi_{\alpha|\chi}} \big \|^{2},
    \end{aligned}
    \end{equation*}
    where $\tilde{T}_{\theta,\phi}= \frac{1}{\sin(2\theta)}(\Id - \cos(2\theta)\tilde{Z}_{B})$. Recall that $\big \| P_{B}^{0}\big \|_{\mathrm{op}}^{2} \leq 1/2$, and after taking the expectation over $\chi:\enc(x=1)=\chi$ and summing over $\alpha$, we can bound the second term using \cref{claim:9}, resulting in a contribution of $2\delta_{5}/(\sin^{2}(2\theta))$ to the final bound. Finally, by direct calculation we obtain
    \begin{equation}
        P_{B}^{0}\tilde{T}_{\theta,\phi} = \frac{1}{\sin(2\theta)}(P_{B}^{0} - \cos(2\theta)P_{B}^{0}\tilde{Z}_{B}) = \frac{1-\cos(2\theta)}{{\sin(2\theta)}}P_{B}^{0} = \tan(\theta)P_{B}^{0},
    \end{equation}
    which implies $\big \| \big( P_{B}^{0}\tilde{T}_{\theta,\phi} - \tan(\theta)P_{B}^{0}\big)\ket{\Psi_{\alpha|\chi}} \big \|^{2} = 0$. Summing the contributions from each application of the triangle inequality we arrive at the final bound. 
\end{proof}

\cref{lem:st1,lem:st2} complete the self-testing statement of the model without the action of measurements in the second round according to \cref{def:compst}.

\paragraph*{Extracting the measurements}

Next we turn our attention to the measurements. 
\begin{claim}
    $\expect_{\chi:\enc(x)=\chi} \sum_{\alpha}\big \| (V Z_{B} - (\sigma_{Z}\otimes \Id)V)\ket{\Psi_{\alpha|\chi}}\big\|^{2} \leq \delta_{8,x}$, where $\delta_{8,x} = \delta_{0} + x\cdot \negl(\lambda)$ for $x \in \{0,1\}$. \label{claim:12}
\end{claim}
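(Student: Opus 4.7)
The plan is to directly compute the operator $VZ_B - (\sigma_Z \otimes \Id)V$ acting on the post-measurement states, simplify using the algebraic relations satisfied by $\tilde{Z}_B$ and $P_B^b$, and then reduce the norm-squared to a quantity already controlled by \cref{claim:5}. The key algebraic facts are that $\tilde{Z}_B$ commutes with $Z_B$ (by construction of the regularization) and that $\tilde{Z}_B$ is self-adjoint and unitary with $\tilde{Z}_B^2 = \Id$, which implies both that the $P_B^b = \tfrac{1}{2}(\Id + (-1)^b \tilde{Z}_B)$ are a complete set of orthogonal projections and that $P_B^b \tilde{Z}_B = (-1)^b P_B^b$.

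Using the definition of $V$, expand
\begin{equation*}
    (VZ_B - (\sigma_Z\otimes \Id)V) = \sum_b \ket{b}\otimes (\tilde{X}_B)^b\bigl(P_B^b Z_B - (-1)^b P_B^b\bigr).
\end{equation*}
The identity $P_B^b \tilde{Z}_B = (-1)^b P_B^b$ allows us to rewrite $P_B^b Z_B - (-1)^b P_B^b = P_B^b Z_B - P_B^b \tilde{Z}_B = P_B^b (Z_B - \tilde{Z}_B)$, where we have implicitly used that $P_B^b$ commutes with $Z_B$. Therefore
\begin{equation*}
    (VZ_B - (\sigma_Z\otimes \Id)V)\ket{\Psi_{\alpha|\chi}} = \sum_b \ket{b}\otimes (\tilde{X}_B)^b P_B^b (Z_B - \tilde{Z}_B)\ket{\Psi_{\alpha|\chi}}.
\end{equation*}

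Next, use the orthogonality of the computational basis vectors $\ket{b}$ and the unitarity of $\tilde{X}_B$ to write the norm squared as $\sum_b \| P_B^b (Z_B - \tilde{Z}_B)\ket{\Psi_{\alpha|\chi}}\|^2$. Since the $P_B^b$ are orthogonal projections summing to $\Id$, this collapses to $\|(Z_B - \tilde{Z}_B)\ket{\Psi_{\alpha|\chi}}\|^2$. Taking the expectation over $\chi : \enc(x) = \chi$ and the sum over $\alpha$, \cref{claim:5} directly gives the bound $\delta_0$ for $x=0$. For $x=1$, the expression $\expect_{\chi:\enc(x=1)=\chi}\sum_\alpha \bra{\Psi_{\alpha|\chi}}(Z_B - \tilde{Z}_B)^2\ket{\Psi_{\alpha|\chi}}$ involves only operators built from $B_0, B_1$, with no $(-1)^{\dec(\alpha)}$ factor, so \cref{lem:QHE1} lets us swap the distribution over ciphertexts of $x=0$ for ciphertexts of $x=1$ at an additive cost of $\delta_{\mathsf{QHE}}$, yielding the claimed $\delta_{8,x} = \delta_0 + x\cdot \delta_{\mathsf{QHE}}$.

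No step here is a real obstacle, the main subtlety is simply recognizing that the commutation between $\tilde{Z}_B$ and $Z_B$ together with the eigenvalue identity $P_B^b \tilde{Z}_B = (-1)^b P_B^b$ convert the difference $P_B^b Z_B - (-1)^b P_B^b$ into $P_B^b(Z_B - \tilde{Z}_B)$, which is exactly the quantity controlled by the state-extraction claims. Once this observation is made, completeness of the projections $P_B^b$ and the QHE swap lemma finish the argument without any further estimation.
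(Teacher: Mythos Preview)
Your proof is correct and follows essentially the same route as the paper: both reduce $VZ_B-(\sigma_Z\otimes\Id)V$ to $\sum_b\ket{b}\otimes(\tilde{X}_B)^bP_B^b(Z_B-\tilde{Z}_B)$ via the identity $P_B^b\tilde{Z}_B=(-1)^bP_B^b$, and then invoke \cref{claim:5} (plus \cref{lem:QHE1} for $x=1$). Your use of the exact Pythagorean identity $\sum_b\|P_B^b w\|^2=\|w\|^2$ for complementary orthogonal projections is in fact cleaner than the paper's operator-norm step; note also that the remark about $P_B^b$ commuting with $Z_B$ is true but unnecessary, since $P_B^b Z_B - P_B^b\tilde{Z}_B = P_B^b(Z_B-\tilde{Z}_B)$ is just left-factoring.
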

\begin{proof}
    We first notice that
    \begin{equation*}
        \begin{aligned}
            (\sigma_{Z} \otimes \Id)V\ket{\Psi_{\alpha|\chi}} &= \sum_{b\in \{0,1\}} (-1)^{b}\ket{b} \otimes (\tilde{X}_{B})^{b}P_{B}^{b}\ket{\Psi_{\alpha|\chi}} \\
            &= \sum_{b\in \{0,1\}} \ket{b} \otimes (\tilde{X}_{B})^{b}P_{B}^{b}\tilde{Z}_{B}\ket{\Psi_{\alpha|\chi}},
        \end{aligned}
    \end{equation*}
    where we used the fact that $P_{B}^{b}\tilde{Z}_{B} = (-1)^{b}P_{B}^{b}$. Thus,
    \begin{equation*}
        \begin{aligned}
            \big \| (V Z_{B} - &(\sigma_{Z}\otimes \Id)V)\ket{\Psi_{\alpha|\chi}} \big \|^{2} = \big \| \sum_{b \in \{0,1\}} \ket{b} \otimes (\tilde{X}_{B})^{b}P_{B}^{b}(Z_{B} - \tilde{Z}_{B})\ket{\Psi_{\alpha|\chi}} \big \|^{2} \\
            &\leq \Bigg( \sum_{b \in \{0,1\}} \| (\tilde{X}_{B})^{b} P_{B}^{b}\|_{\mathrm{op}}^{2} \Bigg) \big \| (Z_{B} - \tilde{Z}_{B})\ket{\Psi_{\alpha|\chi}} \big \|^{2} \\
            &\leq  \| (Z_{B} - \tilde{Z}_{B})\ket{\Psi_{\alpha|\chi}} \big \|^{2},
        \end{aligned}
    \end{equation*}
    For the second inequality, we used the fact that $\|P_{B}^{b}\|_{\mathrm{op}}^{2} \leq 1/2$ and $\|\tilde{X}_{B} P_{B}^{1}\|_{\mathrm{op}} =\| P_{B}^{1}\|_{\mathrm{op}}$ since $\tilde{X}_{B}$ is unitary. For the case $\enc(x=0) = \chi$, taking the expectation, summing over $\alpha$ and applying \cref{claim:5} recovers the desired bound. For the case $\enc(x=1) = \chi$, the same procedure with an intermediate application of \cref{lem:QHE1}, which accumulates an extra factor of $\negl(\lambda)$, obtains the desired bound. 
\end{proof}

\begin{claim}
    $\expect_{\chi:\enc(x)=\chi} \sum_{\alpha}\big \| (V X_{B} - (\sigma_{X}\otimes \Id)V)\ket{\Psi_{\alpha|\chi}}\big\|^{2} \leq \delta_{9,x}$, where $\delta_{9,x} = 2\delta_{4}+ 2\negl(\lambda)\cdot (2-x) + \delta_{6}$. \label{claim:13}
\end{claim}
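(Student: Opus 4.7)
The plan is to compute $V X_{B} \ket{\Psi_{\alpha|\chi}}$ and $(\sigma_{X}\otimes \Id)V \ket{\Psi_{\alpha|\chi}}$ directly from the definition of $V$ in \cref{eq:extr} and observe that their difference splits cleanly into a $\ket{0}$-component and a $\ket{1}$-component,
\[
(V X_{B} - (\sigma_{X}\otimes \Id)V)\ket{\Psi_{\alpha|\chi}} = \ket{0}\otimes (P_{B}^{0} X_{B} - \tilde{X}_{B} P_{B}^{1})\ket{\Psi_{\alpha|\chi}} + \ket{1}\otimes (\tilde{X}_{B} P_{B}^{1} X_{B} - P_{B}^{0})\ket{\Psi_{\alpha|\chi}},
\]
using that $\sigma_{X}\ket{b} = \ket{b\oplus 1}$. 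It then suffices to bound the expectation of the squared norm of each component separately and add them.

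For the $\ket{0}$-component I would add and subtract $P_{B}^{0}\tilde{X}_{B}$ and use the triangle inequality to write
\[
P_{B}^{0} X_{B} - \tilde{X}_{B} P_{B}^{1} = P_{B}^{0}(X_{B} - \tilde{X}_{B}) + (P_{B}^{0}\tilde{X}_{B} - \tilde{X}_{B} P_{B}^{1}),
\]
so that, together with the operator bound $\|P_{B}^{0}\|_{\mathrm{op}}^{2}\leq 1/2$, the first summand is controlled by $\|(X_{B} - \tilde{X}_{B})\ket{\Psi_{\alpha|\chi}}\|^{2}$ (cf.\ \cref{claim:6}) and the second summand equals $\tfrac{1}{4}\|\{\tilde{Z}_{B},\tilde{X}_{B}\}\ket{\Psi_{\alpha|\chi}}\|^{2}$ by a direct computation from $P_{B}^{b}=(1/2)(\Id+(-1)^{b}\tilde{Z}_{B})$, and is therefore controlled by \cref{claim:10} (this is exactly the identity used in \cref{claim:11}). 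For the $\ket{1}$-component the key move is to exploit that $\tilde{X}_{B}$ is a self-adjoint unitary, so $\tilde{X}_{B}^{2}=\Id$, and rewrite
\[
\tilde{X}_{B} P_{B}^{1} X_{B} - P_{B}^{0} = \tilde{X}_{B} P_{B}^{1}(X_{B} - \tilde{X}_{B}) + \tilde{X}_{B}\big(P_{B}^{1}\tilde{X}_{B} - \tilde{X}_{B} P_{B}^{0}\big).
\]
A short computation gives $P_{B}^{1}\tilde{X}_{B} - \tilde{X}_{B} P_{B}^{0} = -\tfrac{1}{2}\{\tilde{Z}_{B},\tilde{X}_{B}\}$, so this summand is again bounded via \cref{claim:10}, while the first summand is of exactly the same form as in the $\ket{0}$-component analysis.

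After taking the expectation over $\chi:\enc(x)=\chi$ and summing over $\alpha$, I would apply \cref{claim:6} and \cref{claim:10} directly when the target distribution matches the distribution under which those claims are stated, and invoke \cref{lem:QHE1} to perform a QHE transfer otherwise. Since \cref{claim:6} is stated over $\enc(x=0)$ and \cref{claim:10} over $\enc(x=1)$, for $x=0$ the two anti-commutator contributions each require a transfer, and for $x=1$ the two $(X_{B}-\tilde{X}_{B})$ contributions each require one; combining the four pieces, using $\|P_{B}^{b}\|_{\mathrm{op}}^{2}\leq 1/2$ and $\|\tilde{X}_{B}\|_{\mathrm{op}}=1$, yields a total bound of the stated form $\delta_{9,x} = \delta_{4} + (2-x)\delta_{\mathsf{QHE}} + \delta_{6}/2$ (after bounding numerical factors loosely).

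The main obstacle will be locating the correct decomposition of the $\ket{1}$-block: the naive attempt $(\tilde{X}_{B}P_{B}^{1} - P_{B}^{0}\tilde{X}_{B})X_{B} + P_{B}^{0}(\tilde{X}_{B} X_{B} - \Id)$ introduces factors with operator norm $\|X_{B}\|_{\mathrm{op}}^{2} = 1/(2\sin^{2}\phi)$, which would pollute the clean form of $\delta_{9,x}$ with an unwanted $1/\sin^{2}\phi$ prefactor. Pulling $\tilde{X}_{B}$ out on the left and using $\tilde{X}_{B}^{2}=\Id$ to rearrange as above avoids these prefactors and reduces both blocks to the same two primitive estimates already established, after which the QHE bookkeeping is mechanical.
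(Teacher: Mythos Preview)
Your approach is essentially the paper's with the order of operations swapped: the paper first replaces $X_{B}$ by $\tilde X_{B}$ globally (via \cref{claim:6}), reducing to $V\tilde X_{B}-(\sigma_{X}\otimes\Id)V$, and only then expands into $\ket{0}$- and $\ket{1}$-components; you expand componentwise first and perform the $X_{B}\to\tilde X_{B}$ replacement inside each block. The primitive estimates invoked (\cref{claim:6} and the anti-commutator bound of \cref{claim:10}/\cref{claim:11}) are identical.

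One small correction to the bookkeeping: as you describe it for $x=1$, transferring the two $(X_{B}-\tilde X_{B})$ pieces separately costs $2\delta_{\mathsf{QHE}}$, which overshoots the stated $\delta_{9,1}=\delta_{4}+\delta_{\mathsf{QHE}}+\delta_{6}/2$; ``bounding loosely'' goes the wrong way here. The fix is to combine the two pieces first via the orthogonal-projection identity $\|P_{B}^{0}v\|^{2}+\|P_{B}^{1}v\|^{2}=\|v\|^{2}$ and then apply \cref{lem:QHE1} once. This identity, rather than $\|P_{B}^{b}\|_{\mathrm{op}}^{2}\le 1/2$, is also the correct justification throughout (the $P_{B}^{b}$ are nonzero projections and hence have operator norm $1$; the paper commits the same slip, but the final bound is unaffected once the identity is used).
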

\begin{proof}
    By the triangle inequality
    \begin{equation*}
        \begin{aligned}
            \big \| (VX_{B} - (\sigma_{X}\otimes \Id)V)\ket{\Psi_{\alpha|\chi}}\big \|^{2} &\leq 2\big \| V(X_{B} - \tilde{X}_{B})\ket{\Psi_{\alpha|\chi}}\big \|^{2} \\
            &+ 2\big \| (V\tilde{X}_{B} - (\sigma_{X}\otimes \Id)V)\ket{\Psi_{\alpha|\chi}}\big \|^{2}.
        \end{aligned}
    \end{equation*}
    For the first term,
    \begin{equation*}
        2\big \| V(X_{B} - \tilde{X}_{B})\ket{\Psi_{\alpha|\chi}}\big \|^{2} \leq 2\Bigg( \sum_{b \in \{0,1\}} \| (\tilde{X}_{B})^{b} P_{B}^{b}\|_{\mathrm{op}}^{2} \Bigg) \big \| (X_{B} - \tilde{X}_{B})\ket{\Psi_{\alpha|\chi}} \big \|^{2},
    \end{equation*}
    where we performed the same calculation in \cref{claim:12}, which after taking the expectation over $\enc(x) = \chi$ and summing over $\alpha$, results in a contribution of $2\delta_{4}$ when $x = 0$ by \cref{claim:6}, and $2(\delta_{4}+\negl(\lambda))$ by \cref{lem:QHE1} and \cref{claim:6} when $x = 1$. 

    For the second term we explicitly calculate
    \begin{equation*}
        \begin{aligned}
            V\tilde{X}_{B} - (\sigma_{X} \otimes \Id)V = \ket{0} \otimes \big( P_{B}^{0} \tilde{X}_{B} - \tilde{X}_{B}P_{B}^{1}\big) + \ket{1} \otimes \big(\tilde{X}_{B}P_{B}^{1}\tilde{X}_{B} - P_{B}^{0}\big).
        \end{aligned}
    \end{equation*}
    Taking the norm squared, we end up with the sum of two terms. The first is given by
    \begin{equation*}
        2\big \| (P_{B}^{0}\tilde{X}_{B} - \tilde{X}_{B}P_{B}^{1})\ket{\Psi_{\alpha|\chi}} \big \|^{2},
    \end{equation*}
    which after taking the expectation over $\enc(x) = \chi$ and summing over $\alpha$ results in a contribution of $\delta_{6}/2$ for $x = 1$ by \cref{claim:11}, and $2(\delta_{6}/4 + \negl(\lambda))$ for $x = 0$ by \cref{lem:QHE1} and \cref{claim:11}. By the unitarity of $\tilde{X}_{B}$ the second is equal to $\big \| (P_{B}^{1}\tilde{X}_{B} - \tilde{X}_{B}P_{B}^{0})\ket{\Psi_{\alpha|\chi}} \big \|^{2}$. This is bounded by the same quantity as the previous, by noting that the proof of \cref{claim:11} is nearly identical with only the values $b$ of $P_{B}^{b}$ switched. 
\end{proof}
We are now ready to prove the final self-testing statement.

\begin{lemma}
    Let $\widetilde{\qmodel}$ be any compiled model which satisfies $\pseudo[S_{\theta,\phi}] \geq \eta_{\theta,\phi}^{\mathrm{Q}} - \epsilon$, $\widetilde{\qmodel}^{*}$ be the compiled-counterpart of model given in \cref{eq:honCHSH}, and let $V$ be the SWAP isometry defined in \cref{eq:extr}. Then there exists auxiliary states $\{\ket{\mathsf{aux}_{\alpha|\chi}}\}_{\alpha,\chi}$ such that for $x \in \{0,1\}$
    \begin{equation}
         \expect_{\chi:\enc(x)=\chi} \sum_{\alpha}\big \| VN_{b|y}\ket{\Psi_{\alpha|\chi}} - Q_{b|y}\ket{\varphi_{\dec(\alpha)|\chi}} \otimes \ket{\mathsf{aux}_{\alpha|\chi}} \big\|^{2} \leq \zeta_{x},
    \end{equation}
    where $\zeta_{x} = (\cos^{2}(\phi) \, \delta_{8,x} + \sin^{2}(\phi) \, \delta_{9,x})/2 + 2(1-x)\cdot \delta_{7} + 2x \cdot \delta_{0}$.\label{lem:measst}
\end{lemma}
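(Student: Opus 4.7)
The plan is to decompose, via the triangle inequality, the difference $VN_{b|y}\ket{\Psi_{\alpha|\chi}} - Q_{b|y}\ket{\varphi_{\dec(\alpha)|x}}\otimes \ket{\mathsf{aux}_{\alpha|\chi}}$ into a \emph{push-through} term $(VN_{b|y} - (Q_{b|y}\otimes \Id)V)\ket{\Psi_{\alpha|\chi}}$, capturing how closely $N_{b|y}$ acts like the honest measurement $Q_{b|y}$ after conjugation by $V$, and an \emph{extraction} term $(Q_{b|y}\otimes \Id)(V\ket{\Psi_{\alpha|\chi}} - \ket{\varphi_{\dec(\alpha)|x}}\otimes \ket{\mathsf{aux}_{\alpha|\chi}})$, capturing how close $V\ket{\Psi_{\alpha|\chi}}$ is to the honest post-measurement state. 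These two summands correspond exactly to the two additive parts of $\zeta_x$.

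For the push-through term I will exploit the algebraic identities $N_{b|y} = \tfrac{1}{2}\bigl(\Id + (-1)^b\cos(\phi)Z_B + (-1)^{b+y}\sin(\phi)X_B\bigr)$ (which follows directly from the definitions of $Z_B, X_B$) and $Q_{b|y} = \tfrac{1}{2}\bigl(\Id + (-1)^b\cos(\phi)\sigma_Z + (-1)^{b+y}\sin(\phi)\sigma_X\bigr)$ (the spectral expansion of the honest projector). The identity components cancel in the difference, leaving $VN_{b|y} - (Q_{b|y}\otimes \Id)V = \tfrac{(-1)^b\cos(\phi)}{2}\bigl(VZ_B - (\sigma_Z\otimes \Id)V\bigr) + \tfrac{(-1)^{b+y}\sin(\phi)}{2}\bigl(VX_B - (\sigma_X\otimes \Id)V\bigr)$. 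Taking the expectation over $\chi:\enc(x)=\chi$ and summing over $\alpha$, an appropriate triangle inequality combined with \cref{claim:12} and \cref{claim:13} yields the $(\cos^2(\phi)\,\delta_{8,x} + \sin^2(\phi)\,\delta_{9,x})/4$ contribution.

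For the extraction term, the fact that $Q_{b|y}$ is a rank-one projector (hence $\|Q_{b|y}\otimes \Id\|_{\mathrm{op}}=1$) reduces the task to bounding $\expect_{\chi:\enc(x)=\chi}\sum_{\alpha}\|V\ket{\Psi_{\alpha|\chi}} - \ket{\varphi_{\dec(\alpha)|x}}\otimes \ket{\mathsf{aux}_{\alpha|\chi}}\|^2$. This is essentially the content of \cref{lem:st1} and \cref{lem:st2}, up to rescaling the auxiliary states to absorb the sub-normalization factors of $\ket{\varphi_{\dec(\alpha)|x}}$: the factors $\cos(\theta)$ or $\sin(\theta)$ when $x=0$, and the factor $\sqrt{2}$ when $x=1$. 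Because $\theta\in (0,\pi/4]$ every such factor is strictly positive, so the rescalings are legitimate; applying them on the disjoint ciphertext slices $\{\chi:\enc(0)=\chi\}$ and $\{\chi:\enc(1)=\chi\}$ produces a single family $\{\ket{\mathsf{aux}_{\alpha|\chi}}\}$ for which the $(1-x)\delta_7 + x\delta_0$ contribution is attained.

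The main obstacle is the coefficient bookkeeping in the push-through step: one must select a Cauchy--Schwarz-type triangle inequality that exploits $\cos^2(\phi)+\sin^2(\phi)=1$ so that the weights in the final bound land on the stated $1/4$ rather than an extraneous multiple, and one must separately track the $\delta_{\mathsf{QHE}}$ contributions already absorbed into $\delta_{8,x}$ and $\delta_{9,x}$ via \cref{lem:QHE1} when the $x=0$ and $x=1$ ciphertext slices are exchanged. Once all contributions are combined, summing the two parts delivers $\zeta_x$ as required.
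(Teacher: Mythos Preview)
Your proposal is correct and follows essentially the same route as the paper: the same two-term splitting via the triangle inequality into a push-through piece (handled by the identity $N_{b|y}=\tfrac12(\Id+(-1)^b\cos(\phi)Z_B+(-1)^{b+y}\sin(\phi)X_B)$ together with \cref{claim:12} and \cref{claim:13}) and an extraction piece (handled by \cref{lem:st1}, \cref{lem:st2} and the contraction $\|Q_{b|y}\otimes\Id\|_{\mathrm{op}}\le 1$), with the auxiliary states obtained from those lemmas by exactly the rescalings you describe. Your remark about needing a weighted Cauchy--Schwarz step to land on the $1/4$ coefficient is in fact more careful than the paper, which simply invokes ``the triangle inequality'' and, in its final displayed line, actually records the constant $1/2$ rather than the $1/4$ appearing in the lemma statement.
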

\begin{proof}
    By definition $N_{b|y} = (1/2)(\Id + (-1)^{b}[\cos(\phi)\, Z_{B} + (-1)^{y} \sin(\phi) \, X_{B}])$. We also have that the target measurements are given by $Q_{b|y} = (1/2)(\Id + (-1)^{b}[\cos(\phi)\, \sigma_{Z} + (-1)^{y} \sin(\phi) \, \sigma_{X}])$ in \cref{eq:honCHSH}. We first calculate
    \begin{equation*}
        \begin{aligned}
            VN_{b|y}\ket{\Psi_{\alpha|\chi}} - (Q_{b|y} \otimes \Id)V\ket{\Psi_{\alpha|\chi}} &= \frac{(-1)^{b}\cos(\phi)}{2}(VZ_{B} - (\sigma_{Z} \otimes \Id)V)\ket{\Psi_{\alpha|\chi}} \\
            &+ \frac{(-1)^{b+y}\sin(\phi)}{2}(VX_{B} - (\sigma_{X} \otimes \Id)V)\ket{\Psi_{\alpha|\chi}}.
        \end{aligned}
    \end{equation*}
    Taking the vector norm of both sides, applying the triangle inequality, then taking the expectation over $\enc(x) = \chi$ and summing over $\alpha$, we arrive at
    \begin{equation}
        \expect_{\chi:\enc(x)=\chi} \sum_{\alpha} \big \| (VN_{b|y} - (Q_{b|y} \otimes \Id)V)\ket{\Psi_{\alpha|\chi}} \big \|^{2} \leq \frac{\cos^{2}(\phi) \, \delta_{8,x} + \sin^{2}(\phi) \, \delta_{9,x}}{4} \label{eq:bnd1}
    \end{equation}
    by \cref{claim:12} and \cref{claim:13}. Finally, we note that \cref{lem:st1,lem:st2} imply the existence of states $\{\ket{\mathsf{aux}_{\alpha|\chi}'}\}_{\alpha,\chi}$ such that 
    \begin{equation}
    \begin{aligned}
        &\expect_{\chi:\enc(x=0)=\chi} \sum_{\alpha}\big \| (Q_{b|y}\otimes \Id)V\ket{\Psi_{\alpha|\chi}} - Q_{b|y}\ket{\dec(\alpha)} \otimes \ket{\mathsf{aux}_{\alpha|\chi}'} \big\|^{2} \leq \delta_{0}, \\
        &\expect_{\chi:\enc(x=1)=\chi} \sum_{\alpha}\big \| (Q_{b|y}\otimes \Id)V\ket{\Psi_{\alpha|\chi}} - Q_{b|y}\big(\cos(\theta)\ket{0} + (-1)^{\dec(\alpha)}\sin(\theta)\ket{1}\big) \otimes \ket{\mathsf{aux}_{\alpha|\chi}'} \big\|^{2} \leq \delta_{7}, \label{eq:bnd2}
    \end{aligned}
    \end{equation}
    where we used that vector norm can only decrease under the application of a projector $Q_{b|y} \otimes \Id$. Note that for the honest strategy in \cref{eq:honCHSH}, the compiled counterpart consists of the sub-normalized states
    \begin{equation*}
        \begin{aligned}
            \ket{\varphi_{a|x=0}} &= 
            \begin{cases}
                \cos(\theta)\ket{0}_{B} \ \mathrm{if} \ a = 0, \\ 
                \sin(\theta)\ket{1}_{B} \ \mathrm{if} \ a = 1, 
            \end{cases} \\
            \ket{\varphi_{a|x=1}} &= \frac{\cos(\theta)\ket{0}_{B}  + (-1)^{a}\sin(\theta) \ket{1}_{B}}{\sqrt{2}}.
        \end{aligned}
    \end{equation*}
    For $\chi:\enc(x=0) = \chi$, define $\ket{\mathsf{aux}_{\alpha|\chi}} := \ket{\mathsf{aux}_{\alpha|\chi}'}/\cos(\theta)$ for $\alpha : \dec(\alpha) = 0$, and $\ket{\mathsf{aux}_{\alpha|\chi}} := \ket{\mathsf{aux}_{\alpha|\chi}'}/\sin(\theta)$ for $\alpha : \dec(\alpha) = 1$. For $\chi:\enc(x=1) = \chi$, define $\ket{\mathsf{aux}_{\alpha|\chi}} := \sqrt{2}\ket{\mathsf{aux}_{\alpha|\chi}'}$ for all $\alpha$. Then the triangle inequality implies
    \begin{equation*}
        \begin{aligned}
            \big \| VN_{b|y}\ket{\Psi_{\alpha|\chi}} - &Q_{b|y}\ket{\varphi_{\dec(\alpha)|x}} \otimes \ket{\mathsf{aux}_{\alpha|\chi}} \big\|^{2} 
            \leq 2\big \| VN_{b|y}\ket{\Psi_{\alpha|\chi}} - (Q_{b|y} \otimes \Id)V\ket{\Psi_{\alpha|\chi}} \big\|^{2} \\
            &+ 2\big \| (Q_{b|y} \otimes \Id)V\ket{\Psi_{\alpha|\chi}} - Q_{b|y}\ket{\varphi_{\dec(\alpha)|\chi}} \otimes \ket{\mathsf{aux}_{\alpha|\chi}} \big \|^{2},
        \end{aligned}
    \end{equation*}
    which is bounded by \cref{eq:bnd1,eq:bnd2}, after taking the expectation over $\enc(x) = \chi$ and summing over $\alpha$, by $\zeta_{x} = (\cos^{2}(\phi) \, \delta_{8,x} + \sin^{2}(\phi) \, \delta_{9,x}) + 2(1-x)\cdot \delta_{7} + 2x \cdot \delta_{0}$.
\end{proof}

\begin{proof}[Proof of \cref{thm:tilt_st}.] By noting that, for $x \in \{0,1\}$, the functions $\zeta_{x},\delta_0,$ and $\delta_7$ are of the form $f(\epsilon) + \negl(\lambda)''$, where $f(\epsilon) \geq 0$ and $f(0) = 0$, we see that \cref{lem:st1} and \cref{lem:st2} establish Eq. \eqref{eq:stdef_a}, while \cref{lem:measst} establishes Eq. \eqref{eq:stdef_b}. Together this shows that the generalized tilted-CHSH expressions are a self-test according to \cref{def:compst}, completing the proof.
\end{proof}

\newcommand{\etalchar}[1]{$^{#1}$}

\appendix

\section{Appendix}

\subsection{Efficient quantum circuits and homomorphic encryption}

To define a quantum homomorphic encryption scheme we require the following concepts from quantum cryptography.

\begin{definition}
    A procedure $\mcP$ is quantum polynomial time (QPT) if:
    \begin{enumerate}
        \item there exists a uniform logspace family of quantum circuits that implement $\mcP$, and
        \item the runtime of the circuit is polynomial in the number of qubits and the security parameter $\lambda\in \N$.
    \end{enumerate} \label{def:QPT}
    A family of quantum states $\mcF$ is QPT (preparable) if there is a QPT $\mcP$ for preparing $\mcF$.  
\end{definition}

We now define a quantum homomorphic encryption (QHE) scheme. A formal definition of QHE first appeared in \cite{BJ15}. We follow the description of QHE outlined in \cite{KLVY23,Bra18}:

\begin{definition}\label{def_qhe}
    A quantum homomorphic encryption scheme $\mathsf{Q}$ for a family of circuits $\mcC$ consists of a security parameter $\lambda\in \N$ and the following algorithms:
    \begin{enumerate}[(i)]
        \item A PPT algorithm $\gen$ which takes as input a unary encoding $1^\lambda$ of the security parameter $\lambda\in \N$ and outputs a secret key $\sk$.
        \item A PPT algorithm $\enc$ which takes as input the secret key $\sk$ and a plaintext $x\in \{0,1\}^n$ and produces a ciphertext $\chi\in \{0,1\}^k$.
        \item A QPT algorithm $\eval$ which takes as input a classical description of a quantum circuit $\mpC:\mcH\otimes (\C^2)^{\otimes n}\to (\C^2)^{\otimes m}$ from $\mcC$, a quantum plaintext $|\Psi\rangle\in \mcH$ on a Hilbert space, a ciphertext $\chi$, and evaluates a quantum circuit $\eval_\mpC(|\Psi\rangle\otimes |0\rangle^{\mathrm{poly}(\lambda,n)},\chi)$ producing a ciphertext $\alpha\in \{0,1\}^\ell$.
        \item A QPT algorithm $\dec$ which takes as input ciphertext $\alpha$, and secret key $\sk$, and produces a quantum state $|\Psi'\rangle$.
    \end{enumerate}
    \end{definition}
    
Although the existence of algorithms (i)-(iv) defines a QHE scheme, we consider several additional important properties a scheme may or may not possess:
\begin{enumerate}
    \item (Correctness with auxiliary input).
For every security parameter $\lambda\in \N$, secret key $\sk \leftarrow \gen(1^\lambda)$, classical circuit $\mpC:\mcH_A\otimes (\C^2)^{\otimes n}\to \{0,1\}^m$, quantum state $|\Psi\rangle_{AB}\in \mcH_A\otimes \mcH_B$, plaintext $x\in \{0,1\}^n$ ciphertext $\chi\leftarrow \enc(x,\sk)$, the following procedures produce states with negligible trace distance with respect to $\lambda$:
\begin{enumerate}
    \item Starting from the pair $(x,|\Psi\rangle_{AB})$, run the quantum circuit $\mpC$ on register $A$, outputting the classical string $a \in \{0,1\}^m$ along with the contents of register $B$.
    \item Starting from $(\chi,|\Psi\rangle_{AB})$, run the circuit $\eval_C(\cdot)$ on register $A$, obtaining ciphertext $\alpha\in \{0,1\}^\ell$, output $a'=\dec(\alpha,\sk)$ along with the contents of register $B$.
\end{enumerate}
\item (Security against efficient quantum distinguishers).
     Fix a secret key $\sk\leftarrow \gen(1^\lambda)$. Any quantum polynomial time adversary $\mfA$ with access to $\enc(\cdot,\sk)$ (but does not know $\sk$) cannot distinguish between ciphertexts $\chi\leftarrow \enc(x_0,\sk)$ and $\chi'\leftarrow \enc(x_1,\sk)$ with non-negligible probability in $\lambda$, where $x_0$ and $x_1$ are any plaintexts chosen by the adversary. That is
    \begin{equation*}
        |\Pr[\mfA^{\enc(x_0,\sk)}(x_0)=1]-\Pr[\mfA^{\enc(x_0,\sk)}(x_1)=1]|\leq \negl(\lambda),
    \end{equation*}
    for all pairs $(x_0,x_1)$.
\end{enumerate}

The KLVY compilation procedure requires schemes that satisfy (1) and (2). QHE schemes satisfying (1) and (2) have been described in \cite{Mah20, Bra18}.

\subsection{Additional proofs}\label{sec:app_b}

\begin{proof}[Proof of \cref{lem:proj_comp}]
    Let $V_{y}^{(\lambda)}:\mcH^{'(\lambda)} \to \mathbb{C}^{|\mcB|} \otimes \mcH^{'(\lambda)}$ be the isometry defined by
    \begin{equation}
        V_{y}^{(\lambda)}\ket{\phi} = \sum_{b \in \mcB} \ket{b} \otimes \sqrt{N_{b|y}^{'(\lambda)}}\ket{\phi}, \ \forall \ket{\phi} \in \mcH^{'(\lambda)}.
    \end{equation}
    Furthermore, let $U_{y}^{(\lambda)}$ be the unitary which satisfies $U_{y}^{(\lambda)}(\ket{0} \otimes \ket{\phi}) = V_y^{(\lambda)}\ket{\phi}$ for all $\ket{\phi} \in \mcH^{'(\lambda)}$. Define the projectors,
    \begin{equation}
        \tilde{N}_{b|y}^{(\lambda)} := U_{y}^{(\lambda)\dagger}(\ketbra{b}{b} \otimes \Id)U_{y}^{(\lambda)}.
    \end{equation}
    Since $|\mcB|$ is constant with respect to $\lambda$ and each $N_{b|y}^{'(\lambda)}$ is QPT, the resulting PVMs $\{\tilde{N}_{b|y}^{(\lambda)}\}_{b \in \mcB}$ are QPT for every $y \in \mcY$.
    For the sub-normalized states, let $\ket{\tilde{\Psi}_{\alpha|\chi}} \in \mcH^{'(\lambda)} \otimes \tilde{\mcH}^{(\lambda)}$ be any purification\footnote{Strictly speaking, since $\rho_{\alpha|\chi}^{(\lambda)}$ is sub-normalized, $\ket{\tilde{\Psi}_{\alpha|\chi}^{(\lambda)}}$ is equal to the purification of $\rho_{\alpha|\chi}^{(\lambda)} / \tr[\rho_{\alpha|\chi}^{(\lambda)}]$ weighted by $\tr[\rho_{\alpha|\chi}^{(\lambda)}]$, whenever $\tr[\rho_{\alpha|\chi}^{(\lambda)}] > 0$. } of $\rho_{\alpha|\chi}^{(\lambda)}$ with $\mcH^{'(\lambda)} \cong \tilde{\mcH}^{(\lambda)}$, and define 
    \begin{equation}
        \ket{\Psi_{\alpha|\chi}^{(\lambda)}} := \ket{0} \otimes \ket{\tilde{\Psi}_{\alpha|\chi}^{(\lambda)}} \in \mathbb{C}^{|\mcX|} \otimes \mcH^{'(\lambda)} \otimes \tilde{\mcH}^{(\lambda)} =: \mcH^{(\lambda)}.
    \end{equation}
    Again, since $|\mcX|$ is constant with respect to $\lambda$ the (sub-normalized) states $\ket{\Psi_{\alpha|\chi}^{(\lambda)}}$ are QPT-preparable. Now, extend each $\tilde{N}_{b|y}^{(\lambda)}$ to act trivially on the purifying system $\tilde{\mcH}^{(\lambda)}$ by defining $N_{b|y}^{(\lambda)} := \tilde{N}_{b|y}^{(\lambda)} \otimes \Id$. We observe
    \begin{equation}
    \begin{aligned}
        \tr\big[ N_{b|y}^{(\lambda)} \ketbra{\Psi_{\alpha|\chi}^{(\lambda)}}{\Psi_{\alpha|\chi}^{(\lambda)}}\big] &= \tr\Big[ \tilde{N}_{b|y}^{(\lambda)} \tr_{\tilde{Q}} [\ketbra{\Psi_{\alpha|\chi}^{(\lambda)}}{\psi_{\alpha|\chi}^{(\lambda)}}]\Big] \\
        &= \tr[\tilde{N}_{b|y}^{(\lambda)} (\ketbra{0}{0} \otimes \rho_{\alpha|\chi}^{(\lambda)})]\\
        &= \tr\big[\big(\ketbra{b}{b} \otimes \Id\big)U_{y}^{(\lambda)}\big( \ketbra{0}{0} \otimes \rho_{\alpha|\chi} \big)U_{y}^{(\lambda)\dagger}\big]\\
        &= \tr\Big[\sqrt{N_{b|y}^{'(\lambda)}}\rho_{\alpha|\chi}^{(\lambda)}\sqrt{N_{b|y}^{'(\lambda)}}\Big] = p^{(\lambda)}(\alpha,b|\chi,x),
    \end{aligned}
    \end{equation}
    where $\tilde{Q}$ denotes the purifying system $\tilde{\mcH}^{(\lambda)}$. Since $\tilde{\mcH}^{(\lambda)}$ has the same dimensions as $\mcH^{(\lambda)}$, the PVMs $N_{b|y}^{(\lambda)}$ are indeed QPT.
\end{proof}

\begin{proof}[Proof of \cref{thm:ExtendedSOS}]
    To begin, we write
    \begin{equation}
    \begin{aligned}
        \pseudo[P^{\dagger}P] &= \sum_{ij} \gamma_{i}^{*}\gamma_{j} \pseudo[(A_{x})^{k_{i}}w_{i}(B_{0},B_{1})(A_{x})^{k_{j}}w_{j}(B_{0},B_{1})] \\
        &= \sum_{ij} \gamma_{i}^{*}\gamma_{j} \pseudo[(A_{x})^{k_{i}+k_{j}}\bar{w}_{ij}(B_{0},B_{1})],
    \end{aligned}
    \end{equation}
    where we used the linearity of $\pseudo[\cdot]$ in the first line, and in the second line we used the fact that $\pseudo[w] = \pseudo[\bar{w}]$ (where $\bar{w}$ is the canonical form of the monomial $w$), and defined $\bar{w}_{ij}$ to be the canonical form of $w_{i}w_{j}$. We now need to consider two types of terms. First, when $k_{i} \oplus k_{j} = 0$, we apply the definition in \cref{eq:pseduo1} in conjunction with \cref{lem:QHE1} to write
    \begin{equation}
        \begin{aligned}
            &\sum_{ij:k_{i}\oplus k_{j} = 0}\gamma_{i}^{*}\gamma_{j}\pseudo[\bar{w}_{ij}(B_{0},B_{1})] \\ 
            &= \expect_{x' \in \mathcal{X}}\expect_{\chi:\enc(x')=\chi} \sum_{\alpha} \bra{\Psi_{\alpha|\chi}}\Bigg(\sum_{ij:k_{i}\oplus k_{j} = 0}\gamma_{i}^{*}\gamma_{j}\bar{w}_{ij}(B_{0},B_{1}) \Bigg)\ket{\Psi_{\alpha|\chi}} \\
            &\approx_{\negl(\lambda)} \expect_{\chi:\enc(x)=\chi} \sum_{\alpha} \bra{\Psi_{\alpha|\chi}}\Bigg(\sum_{ij:k_{i}\oplus k_{j} = 0}\gamma_{i}^{*}\gamma_{j}\bar{w}_{ij}(B_{0},B_{1}) \Bigg) \ket{\Psi_{\alpha|\chi}} \\
            &= \sum_{ij:k_{i}\oplus k_{j} = 0}\gamma_{i}^{*}\gamma_{j} \expect_{\chi:\enc(x)=\chi} \sum_{\alpha} \bra{\Psi_{\alpha|\chi}}\bar{w}_{ij}(B_{0},B_{1}) \ket{\Psi_{\alpha|\chi}}.
        \end{aligned}
    \end{equation}
    When $k_{i} \oplus k_{j} = 1$, we can apply \cref{eq:pseduo2} directly. Putting these two together, we observe
    \begin{equation}
    \begin{aligned}
        & \sum_{ij} \gamma_{i}^{*}\gamma_{j}\pseudo[(A_{x})^{k_{i}+k_{j}}\bar{w}_{ij}(B_{0},B_{1})]\\ &= \sum_{ij:k_{i}\oplus k_{j} = 0}\gamma_{i}^{*}\gamma_{j}\pseudo[\bar{w}_{ij}] + \sum_{ij:k_{i}\oplus k_{j} = 1}\gamma_{i}^{*}\gamma_{j}\pseudo[A_{x}\bar{w}_{ij}]\\ &\approx_{\negl(\lambda)} \sum_{ij:k_{i}\oplus k_{j} = 0}\gamma_{i}^{*}\gamma_{j} \expect_{\chi:\enc(x)=\chi} \sum_{\alpha} \bra{\Psi_{\alpha|\chi}}\bar{w}_{ij}(B_{0},B_{1}) \ket{\Psi_{\alpha|\chi}} \\ &+\sum_{ij:k_{i}\oplus k_{j} = 1} \gamma_{i}^{*}\gamma_{j} \expect_{\chi:\enc(x)=\chi} \sum_{\alpha}(-1)^{\dec(\alpha)}\bra{\Psi_{\alpha|\chi}} \bar{w}_{ij}(B_{0},B_{1}) \ket{\Psi_{\alpha|\chi}} \\
        &= \expect_{\chi:\enc(x)=\chi} \sum_{\alpha}\bra{\Psi_{\alpha|\chi}}\sum_{ij} (-1)^{\dec(\alpha) \cdot (k_{i} + k_{j})}\gamma_{i}^{*}\gamma_{j}w_{i}(B_{0},B_{1})w_{j}(B_{0},B_{1}) \ket{\Psi_{\alpha|\chi}} \\
        &= \expect_{\chi:\enc(x)=\chi} \sum_{\alpha}\bra{\Psi_{\alpha|\chi}}\Big|\sum_{i} (-1)^{\dec(\alpha) \cdot k_{i}}\gamma_{i}w_{i}(B_{0},B_{1})\Big|^{2}\ket{\Psi_{\alpha|\chi}} \geq 0. \label{eq:bigCalc}
    \end{aligned}
    \end{equation}
    Where in the fifth line, we used the fact that Bob's observables satisfy the canonical relations, so we can always replace the canonical monomial $\bar{w}_{ij}$ with $w_{i}w_{j}$. The final line is obtained by noting the square inside the expectation. We therefore conclude $\pseudo[P^{\dagger}P] \approx_{\negl(\lambda)} h$ for some $h \geq 0$, which implies $|\pseudo[P^{\dagger}P] - h| \leq \negl(\lambda)$, and $\pseudo[P^{\dagger}P] \geq h -   \negl(\lambda) \geq - \negl(\lambda)$ as required. Lastly, it is straightforward to verify from the definition that for any Bell functional $I$ we have $\pseudo(I)$ recovers the expected value under $\widetilde{\mathcal{\qmodel}}$.
\end{proof}

\end{document}